\DeclareMathOperator*{\esssup}{ess\,sup}
\DeclareMathOperator*{\essinf}{ess\,inf}
\newcommand{\F}{\mathcal{F}}
\newcommand{\PM}{\mathbb{P}}
\newcommand{\E}{\mathbb{E}}
\newcommand{\bP}{\mathbf{b}\mathcal{P}}
\newsavebox\dotbox
\sbox{\dotbox}{\(\displaystyle\bigodot\)}
\newcommand{\bigcdot}{\stackrel{\mbox{\tiny$\bullet$}}{}}
\newcommand{\Vl}{V^{\mathrm{liq}}}
\newcommand{\beao}{\begin{eqnarray*}}
\newcommand{\eeao}{\end{eqnarray*}\noindent}
\newcommand{\beam}{\begin{eqnarray}}
\newcommand{\eeam}{\end{eqnarray}\noindent}
\def\bbr{{\Bbb R}}   
\def\bbn{{\Bbb N}}
\def\bbq{{\Bbb Q}}
\newcommand{\eps}{{\varepsilon}}
\newcommand{\vp}{{\varphi}}
\newcommand{\Var}{{\rm Var}}
\newcommand{\ov}{\overline}
\newcommand{\un}{\underline}
\newcommand{\wh}{\widehat}
\newcommand{\wt}{\widetilde}
\newcommand{\ph}{\varphi}
\newcommand{\mal}{\stackrel{\mbox{\tiny$\bullet$}}{}}
\newcommand{\auf}{[\![}
\newcommand{\zu}{]\!]}
\DeclareMathOperator*{\uplim}{up-lim}
\newtheorem{theorem}{Theorem}[section]
\newtheorem{lemma}[theorem]{Lemma}
\newtheorem{corollary}[theorem]{Corollary}
\newtheorem{proposition}[theorem]{Proposition}
\newtheorem{note}[theorem]{Note}
\newtheorem{remark}[theorem]{Remark}
\newtheorem{example}[theorem]{Example}
\newtheorem{assumption}[theorem]{Assumption}
\theoremstyle{definition}
\newtheorem{definition}[theorem]{Definition}
\numberwithin{equation}{section}
\begin{document}
\title{Semimartingale price systems in models with transaction costs
beyond efficient friction\thanks{We would like to thank Christoph Czichowsky, Miryana Grigorova, Yuri Kabanov, and Martin Schweizer for valuable comments and fruitful discussions. 
Valuable comments and suggestions of two anonymous referees are greatly appreciated.}}
\author{Christoph Kühn\thanks{Institut für Mathematik, Goethe-Universität Frankfurt, D-60054 Frankfurt a.M., Germany, e-mail: \{ckuehn,
			molitor\}@math.uni-frankfurt.de}\and Alexander Molitor\footnotemark[2]}
\date{}

\maketitle

\begin{abstract}
	A standing assumption in the literature on proportional transaction costs is
	efficient  friction. Together with robust no free lunch with vanishing 
	risk, it rules out
	strategies of infinite variation, as they usually appear in
	frictionless markets. In this paper, we show how the models with and
	without transaction costs can be unified.
	
	The bid and the ask price of a risky asset are given by
	c\`adl\`ag processes which are locally bounded from below and may coincide at some points. In a first step, we show that if the bid-ask model satisfies ``no unbounded profit
	with bounded risk'' for simple strategies, then there exists a
	semimartingale lying between the bid and the ask price process.
	
	In a second step, under the additional assumption that the
	zeros of the bid-ask spread are either starting points of an excursion away from zero or inner points from the right, we show that for every bounded predictable strategy specifying the amount of risky assets, the semimartingale can be used to construct the corresponding self-financing risk-free position in a consistent way. Finally, the set of most general strategies is introduced, which also provides a new view on the frictionless case.
\end{abstract}

\begin{tabbing}
	{\footnotesize Keywords:} proportional transaction costs, no unbounded profit with bounded risk,\\ 
	semimartingales, strategies of infinite variation, stochastic integration\\ 
	
	{\footnotesize Mathematics Subject Classification (2010): 
		91G10, 60H05, 26A42, 60G40
	}\\
\end{tabbing}
\section{Introduction}
In frictionless markets, asset price processes have to be semimartingales unless they allow for an ``unbounded profit with bounded risk''~(UPBR) with simple strategies
(see Delbaen and Schachermayer~\cite{delbaen1994general}). With semimartingale price processes, the powerful tools of stochastic calculus can be used to construct the gains from dynamic trading. A trading strategy specifying the amounts of shares an investor holds in her portfolio is a predictable process that is integrable w.r.t. the 
vector-valued price process. 
Strategies can be of infinite variation since in the underlying limiting procedure, one directly considers the (book) profits made rather than the portfolio rebalancings.

On the other hand, under arbitrary small transaction costs also 
non-semimartingales can lead to markets without ``approximate arbitrage opportunities''. Guasoni~\cite{guasoni.2006}  and Guasoni, R\'asonyi, and Schachermayer~\cite{guasoni.r.s.2008} derive the sufficient condition of ``conditional full support'' of the mid-price process, that is satisfied, e.g., by a fractional Brownian motion, and arbitrary small constant proportional costs. Guasoni, R\'asonyi, and Schachermayer~\cite{guasoni.r.s.2010} derive a fundamental theorem of asset pricing for a family of transaction costs models.  

Under the assumptions of efficient friction, i.e., nonvanishing bid-ask spreads, and the existence of a strictly consistent price system, Kabanov and Stricker~\cite{kabanov.stricker.2002} and Campi and Schachermayer~\cite{campi.schachermayer} show for continuous and c\`adl\`ag processes, respectively, that a finite credit line implies that the variation of the trading strategies is bounded in probability. A similar assertion is shown in Guasoni, L{\'e}pinette, and R{\'a}sonyi~\cite{guasoni2012fundamental} under the condition of ``robust no free lunch with vanishing risk''. An important consequence for hedging and portfolio optimization is that the set of portfolios that are attainable with strategies of finite variation is Fatou-closed. For a detailed discussion, we refer to the monograph of Kabanov and Safarian~\cite{kabanov.safarian.2009}.

In this paper, we consider c\`adl\`ag bid and ask price processes that are not necessarily different. The ask price is bigger or equal to the bid price. The spread, which models the transaction costs, can vary in time and can even vanish.  
The contribution of this paper is twofold. First, we show that if the bid-ask model 
satisfies ``no unbounded profit with bounded risk'' (NUPBR) for simple long-only strategies,  then there exists a semimartingale lying between the bid and the ask price process.
This generalizes Theorem~7.2 of Delbaen and Schachermayer~\cite{delbaen1994general}
for the frictionless case. The proof in \cite{delbaen1994general} is very intuitive. Roughly speaking, it first shows that an explosion of the quadratic increments of the price process along stopping times would lead to an (UPBR). Then, it considers a discrete time Doob decomposition of the asset price process and shows that an explosion of the drift part as the mesh of the grid tends to zero would lead to an (UPBR). 
This already yields that under (NUPBR), the asset price process has to be a good integrator and thus a semimartingale by the Bichteler-Dellacherie theorem.
More recently, Beiglb{\"o}ck, Schachermayer, and Veliyev \cite{beiglbock2011direct}
provide an alternative proof of the Bichteler-Dellacherie theorem combining these no-arbitrage arguments with Koml\'os type arguments. Kardaras and Platen~\cite{kardaras.platen} follow a quite different approach that only requires long investments. They construct supermartingale deflators as dual variables in suitable utility maximization problems under a variation of (NUPBR) for simple long-only strategies. B\'alint and Schweizer~\cite{balint.schweizer} assume that asset prices are expressed in a possibly nontradable accounting unit. In their setting there need not exist an asset with a strictly positive price process that can be used as a numéraire. They show that if there exists a portfolio with strictly positive value process then, under a discounting invariant form of absence of arbitrage, which generalizes the condition used in \cite{kardaras.platen}, the asset prices discounted by the portfolio value are semimartingales. Since in transaction costs models it is natural to start with the relative prices of the tradable assets, there is no obvious analogy of discounting by a portfolio value. In our model, we implicitly assume the existence of an asset with strictly positive price process that serves as a reference asset.  

In the bid-ask model, we consider a Dynkin zero-sum stopping game in which the lower payoff process is the bid price and the upper payoff process the ask price. 
The Doob decomposition of the dynamic value of the discrete time game along 
arbitrarily fine grids is used to identify smart investment opportunities. 
The crucial point is that the drift of the Dynkin value can be earned by trading in the bid-ask market. This we combine with the
brilliant idea in Lemma~4.7 of \cite{delbaen1994general} to control the martingale part. We complete the proof by showing that under the assumptions above, the continuous time Dynkin value has to be a local quasimartingale.

In the second part of the paper, we show how a semimartingale between the bid and the ask process can be used to define the self-financing condition of the model beyond efficient friction.
Without efficient friction, strategies of infinite variation 
can make sense since they do not produce infinite trading costs. This of course means that we cannot use them as integrators without major hesitation.
In the first step, we only consider bounded amounts of risky assets. Thus, the trading gains charged in the semimartingale are finite. Then, we add the costs caused by the fact that the trades are carried out at the less favorable bid-ask prices. Roughly speaking, if the spread is away from zero the costs are a Riemann-Stieltjes integral similar to Guasoni, L{\'e}pinette, and R{\'a}sonyi~\cite{guasoni2012fundamental}. Then, we exhaust the costs when the spread is away from zero. The crucial point is that these costs are always nonnegative, and the semimartingale gains are finite. Especially, infinite costs cannot be compensated and lead to ruin.
Under a rather mild additional assumption on the behavior of the spread at zero (see Assumption~\ref{8.12.2019.3}),
that goes at least far beyond the frictionless case and the case of efficient friction, this approach leads to a well-founded self-financing condition.
Especially, the self-financing risk-less position does not depend on the choice of the semimartingale we use in the construction (see Corollary~\ref{cor:Independence}).

A self-financing condition for general strategies has to be justified by {\em suitable} approximations with simple strategies. With 
transaction costs, this is a delicate issue. Namely, under pointwise convergence of the strategies alone, one should not expect that portfolio processes converge. By the strict Fatou-type inequality (see Theorem~A.9(iv) of \cite{guasoni2012fundamental}), some variation/costs can disappear in the limit. Thus, roughly speaking, we postulate the following: first, the limit strategy is better than all (almost) pointwise converging simple strategies and second, for each strategy there exists a special sequence of approximating simple strategies s.t. the wealth processes converge (see Theorem~\ref{8.12.2019.1}).  

In the second step, we extend the self-financing condition from the bounded strategies to the maximal set of strategies for which it can be defined in a ``reasonable'' way. In the special case
of a frictionless market, this maximal set coincides with the set of predictable processes which are integrable w.r.t. the semimartingale price process in the classic sense (see, e.g., \cite{jacod.shiryaev}). 
Thus, we also provide a further characterization of this ubiquitous set. 

In the no-arbitrage theory, the need for general strategies is already proven in the special case of frictionless markets. Indeed, Delbaen and Schachermayer~\cite[Lemma 7.9 and Lemma 7.10]{delbaen1994general} provide an example with a bounded asset price process showing that no free lunch with vanishing risk (NFLVR) for simple strategies does not imply the existence of an equivalent martingale measure (EMM). Consequently, under transaction costs general strategies can become an important tool to guarantee the existence of a consistent price system (CPS), which plays a similar role as an EMM in the frictionless theory, under an appropriate no-arbitrage condition. On the other hand, in general a CPS does not exist even though (NFLVR) for multivariate portfolio processes is satisfied. This can already be seen in discrete time (see Schachermayer~\cite[Example 3.1]{schachermayer2004fundamental}) with the observation that general strategies as described in Definition~\ref{def:DefinitionL}
coincide with simple strategies if the time is discrete.

In a nutshell, we provide a well-founded self-financing condition for models beyond efficient friction by relating the original trading gains under transaction costs with the gains 
in a fictitious frictionless market defined by a semimartingale and subtracting the appropriate costs. The idea to relate markets under transaction costs with fictitious frictionless 
markets is not new. It is already widely used in the theory of portfolio optimization. Here, shadow price processes, i.e., fictitious frictionless pricing systems that lead to the 
same \emph{optimal} decisions and trading gains as under transaction costs, are utilized to determine optimal trading strategies. The existence of shadow prices and their relationship 
with a suitable dual problem goes back to Cvitani{\'c} and Karatzas~\cite{CvitanicKaratzas1996}. In discrete time, Kallsen and Muhle-Karbe~\cite{KallsenMuhleKarbe2011} show that 
on finite probability spaces shadow price processes always exist as long as the original problem has a solution, and Czichowsky et al.~\cite{CzichowskyMuhleKarbeSchachermayer2014} 
provide counterexamples on infinite probability spaces. Conditions for the existence of a shadow price process in a semimartingale model are established by 
Czichowsky et al.~\cite{CzichowskySchachermayerYang2017} and starting with Kallsen and Muhle-Karbe~\cite{KallsenMuhleKarbe2010} various explicit constructions of shadow prices 
processes have been given in Black-Scholes type models. Even in non-semimartingale models this \emph{dual} approach is successfully applied 
(see, e.g., \cite{CzichowskyShachermayer2016,CzichowskyShachermayer2017,CzichowskyPeyreSchachermayerYang2018}) under efficient friction. 
In the proof of Theorem~\ref{15.7.2021.1}, 
we provide a direct connection between our work and shadow price processes for particular optimization problems.

The paper is organized as follows. In Section~\ref{21.12.2019.01}, we show the existence of a semimartingale price system (Theorem~\ref{theo:ResultSemimartingale}). 
In Section~\ref{21.12.2019.02}, we construct the cost process which allows us to introduce the self-financing condition for bounded strategies, which is justified by 
Theorem~\ref{8.12.2019.1} and Corollary~\ref{cor:Independence}. In Section~\ref{15.7.2021.2},
the extension to unbounded strategies is established (Proposition~\ref{prop:Welldefined}). In addition, the special case of a 
frictionless market is considered (Proposition~\ref{prop:FrictionlessL}) and the separate convergence of trading gains and cost terms 
of the approximating bounded strategies is discussed (Theorem~\ref{15.7.2021.1}). 
Technical proofs are postponed to Section~\ref{4.1.2020.1} and Appendix~\ref{app:TechnicalCosts}.
\section{Existence of a semimartingale price system}\label{21.12.2019.01}
Throughout the paper, we fix a terminal time~$T\in\bbr_+$ and a filtered 
probability\linebreak space~$(\Omega,\F, (\F_t)_{t\in[0,T]},\PM)$ satisfying the usual conditions. 
The predictable $\sigma$-algebra on $\Omega\times[0,T]$ is denoted by $\mathcal{P}$, 
the set of bounded predictable processes starting at zero by $\bP$.
To simplify the notation, a stopping time~$\tau$ is allowed to take the value $\infty$, 
but $\auf\tau\zu:=\{(\omega,t)\in\Omega\times[0,T]: t=\tau(\omega)\}$. Especially,
we use the notation $\tau_A$, $A\in\mathcal{F}_\tau$, for the stopping time that
coincides with $\tau$ on $A$ and is infinite otherwise.
$\Var_a^b(X)$ denotes the pathwise variation of a process~$X$ on the interval $[a,b]$.
A process~$X$ is called l\`agl\`ad iff all paths possess finite left and right limits (but they can have double jumps).
We set $\Delta^+ X:=X_+-X$ and $\Delta X:=\Delta^- X:=X-X_-$, where $X_{t+}:=\lim_{s\downarrow t} X_s$ and $X_{t-}:=\lim_{s\uparrow t} X_s$. For a random variable~$Y$, we set $Y^+:=\max(Y,0)$ and $Y^-:=\max(-Y,0)$.\\

The financial market consists of one risk-free bond with price~$1$ and one risky asset with bid price $\underline{S}$ and ask price $\overline{S}$. 
Throughout the paper, we make the following assumption.
\begin{assumption}\label{ass:ProcessProperties}
	$(\underline{S}_t)_{t\in[0,T]}$ and $(\overline{S}_t)_{t\in[0,T]}$ are adapted processes with c\`{a}dl\`{a}g paths. In addition, $\underline{S}_t\leq \overline{S}_t$ for all $t\in[0,T]$ and $\underline{S}$ is locally bounded from below. 
\end{assumption}

In this section, we only consider simple trading strategies in the following sense.
\begin{definition}
	A \emph{simple trading strategy}  is a  stochastic process $(\varphi_t)_{t\in[0,T]}$ of the form
	\begin{align}\label{eq:varphi1}
		\varphi=\sum_{i=1}^{n}Z_{i-1}\mathbbm{1}_{\rrbracket T_{i-1}, T_i\rrbracket}, 
	\end{align} 
	where $n\in\mathbb{N}$ is a finite number, $0=T_0\leq T_1\leq \dots \leq T_n=T$ is an increasing sequence of stopping times and $Z_i$ is $\F_{T_i}$-measurable
	for all $i=0,\dots,n-1$.
\end{definition}

The strategy $\vp$ specifies the amount of risky assets in the portfolio.
The next definition corresponds to the self-financing condition of the model. It specifies the holdings in the risk-free bond given a simple trading strategy.
\begin{definition}
	Let $(\varphi_t)_{t\in[0,T]}$ be a simple trading strategy. The \emph{corresponding position in the risk-free bond} $(\varphi^0_t)_{t\in[0,T]}$ is given by
	\begin{align} \begin{aligned}
			\label{eq:varphi0}
			\varphi^0_t:=\sum_{0\leq s<t}\left(\underline{S}_s(\Delta^+\varphi_s)^--\overline{S}_s(\Delta^+\varphi_s)^+\right), \quad t\in[0,T].
		\end{aligned}
	\end{align}
\end{definition}
\begin{definition}
	Let $(\varphi_t)_{t\in[0,T]}$ be a simple trading strategy. The \emph{liquidation value process $(\Vl_t(\varphi))_{t\in[0,T]}$} is given by
	\begin{align}
		\label{eq:DefLiquidationValue}
		\Vl_t(\varphi):=\varphi^0_t+(\varphi_t)^+\underline{S}_t-(\varphi_t)^-\overline{S}_t, \quad t\in[0,T].
	\end{align}
	If it is clear from the context, we write $(\Vl_t)_{t\in[0,T]}$ instead of $(\Vl_t(\varphi))_{t\in[0,T]}$.  
\end{definition}

We adapt the notion of an unbounded profit with bounded risk (UPBR) from Bayraktar and Yu~\cite{bayraktar2018market} to the present setting of simple long-only trading strategies.
\begin{definition}\label{def:UPBR}
	We say that $(\underline{S}_t,\overline{S}_t)_{t\in[0,T]}$ admits an \emph{unbounded profit with bounded risk (UPBR) for simple long-only strategies} if there exists a sequence of simple  trading strategies $(\varphi^{n})_{n\in\mathbb{N}}$ with $\varphi^{n}\ge 0$ s.t.  
	\begin{enumerate}[(i)]
		\item $\Vl_t(\varphi^{n})\geq -1$ for all $t\in[0,T]$ and $n\in\bbn$,
		\item The sequence $(\Vl_T(\varphi^{n}))_{n\in\mathbb{N}}$ is unbounded in probability, i.e., 
		\begin{align}
			\label{eq:DefUPBR}
			\lim\limits_{m\to\infty}\sup_{n\in\mathbb{N}}\PM\left(\Vl_T(\varphi^{n})\geq m\right)>0.
		\end{align}
	\end{enumerate}
	If no such sequence exists, we say that the bid-ask process $(\underline{S},\overline{S})$ satisfies the \emph{no unbounded profit with bounded risk (NUPBR) condition for simple long-only strategies}. 
\end{definition}
\begin{remark}
	The admissibility condition~(i) is rather restrictive, e.g., compared to \cite{guasoni2012fundamental}, see Definition~4.4. therein, which means that the present version of (NUPBR) is a weak condition. But, for the following first main result of the paper, it is already sufficient. 
\end{remark}
\begin{theorem}\label{theo:ResultSemimartingale}
	Let $(\underline{S}_t,\overline{S}_t)_{t\in[0,T]}$ satisfy 
	Assumption~\ref{ass:ProcessProperties} and	
	the (NUPBR) condition for simple long-only strategies. Then, there exists a semimartingale $S=(S_t)_{t\in[0,T]}$ s.t. 
	\begin{align}\label{eq:InBetween}
		\underline{S}_t\leq S_t\leq \overline{S}_t\quad\text{for all}\ t\in[0,T]. 
	\end{align}
\end{theorem}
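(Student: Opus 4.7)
The plan is to follow the Dynkin-game strategy sketched in the introduction: construct the semimartingale as a limit of the values of a Dynkin zero-sum stopping game on refining grids, controlling the Doob decomposition of these values uniformly via (NUPBR). A standard localization (using the local lower bound of $\underline{S}$ and stopping $\overline{S}$ at its upcrossings of level $n$) reduces to the case where both $\underline{S}$ and $\overline{S}$ are uniformly bounded by a constant $K$. For each finite partition $\pi=\{0=t_0<\dots<t_N=T\}$ of $[0,T]$ I define the discrete Dynkin value by backward induction $V^\pi_{t_N}:=\underline{S}_T$ and
\begin{equation*}
V^\pi_{t_i}:=\bigl(\underline{S}_{t_i}\vee\E[V^\pi_{t_{i+1}}\mid\F_{t_i}]\bigr)\wedge\overline{S}_{t_i},\qquad i<N,
\end{equation*}
so that $\underline{S}_{t_i}\le V^\pi_{t_i}\le\overline{S}_{t_i}$ at each grid point. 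The Doob decomposition $V^\pi=V^\pi_0+M^\pi+A^\pi$ yields a predictable drift whose increments $\Delta A^\pi_{t_{i+1}}=\E[V^\pi_{t_{i+1}}\mid\F_{t_i}]-V^\pi_{t_i}$ have a determined sign by the projection formula: they are nonnegative only on $\{V^\pi_{t_i}=\overline{S}_{t_i}\}$ and nonpositive only on $\{V^\pi_{t_i}=\underline{S}_{t_i}\}$.

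The central step is to monetize the positive variation $A^{\pi,+}_T$ with a simple long-only strategy. I would define cycles by the stopping times $\tau^b_k:=\inf\{t_i>\tau^s_{k-1}\,:\,V^\pi_{t_i}=\overline{S}_{t_i}\text{ and }\Delta A^\pi_{t_{i+1}}>0\}$ and $\tau^s_k:=\inf\{t_i>\tau^b_k\,:\,V^\pi_{t_i}=\underline{S}_{t_i}\}\wedge T$, and consider the strategy that holds one share on each interval $(\tau^b_k,\tau^s_k]$. Since $V^\pi_{\tau^b_k}=\overline{S}_{\tau^b_k}$ and $V^\pi_{\tau^s_k}=\underline{S}_{\tau^s_k}$ (also at $T$ by our terminal choice), the terminal liquidation value telescopes to
\begin{equation*}
\Vl_T=\sum_k\bigl(\underline{S}_{\tau^s_k}-\overline{S}_{\tau^b_k}\bigr)=\sum_k\bigl(V^\pi_{\tau^s_k}-V^\pi_{\tau^b_k}\bigr)=\widetilde M^\pi_T+A^{\pi,+}_T,
\end{equation*}
where $\widetilde M^\pi$ is the martingale composed of the cycle-stopped increments of $M^\pi$, and the identification of the drift sum with $A^{\pi,+}_T$ uses that, by construction, no negative drift occurs within a cycle. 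After rescaling by $1/(2K)$, the strategy is $-1$-admissible in the sense of Definition~\ref{def:UPBR}. Given the pathwise bound $[\widetilde M^\pi]_T\le[M^\pi]_T$ (so that $\widetilde M^\pi_T$ is tight once the martingale part is controlled), (NUPBR) forces $\{A^{\pi,+}_T\}_\pi$ to be bounded in probability; the identity $\E[A^{\pi,-}_T]=V^\pi_0-\E[V^\pi_T]+\E[A^{\pi,+}_T]$ together with boundedness of $V^\pi$ upgrades this to a uniform bound on $\Var_0^T(A^\pi)$.

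The martingale part itself is controlled by the brilliant sign-flipping quadratic-variation argument of Lemma~4.7 in~\cite{delbaen1994general}: unboundedness of $[M^\pi]_T$ in probability would likewise produce unbounded long-only profits with bounded risk. With both $\Var_0^T(A^\pi)$ and $[M^\pi]_T$ tight uniformly in $\pi$, $\{V^\pi\}_\pi$ is a family of discrete quasimartingales with uniformly controlled characteristics. Along a sequence $(\pi_n)$ of refining partitions with mesh tending to zero, a tightness/Komlos-type extraction yields a subsequential limit $S$ which is a local quasimartingale, hence a semimartingale by Rao's theorem; right-continuity of $\underline{S}$ and $\overline{S}$ propagates the pointwise sandwich $\underline{S}\le V^{\pi_n}\le\overline{S}$ from the grids to $\underline{S}_t\le S_t\le\overline{S}_t$ for every $t\in[0,T]$. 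The main obstacle I expect is the cycle-extraction step: one must simultaneously bound the martingale and drift parts (they appear coupled in the telescoping identity for $\Vl_T$) while respecting the $-1$-admissibility after the $1/(2K)$ rescaling, and convert the trading-gain/variation relationship into the precise unbounded-in-probability statement of Definition~\ref{def:UPBR}; a secondary technicality is extending the sandwich from grid points to all $t$, which relies on the c\`adl\`ag regularity of both price processes.
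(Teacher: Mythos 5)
Your overall plan — discrete Dynkin values on refining grids, Doob decomposition, the sign observation that $\Delta A^\pi>0$ forces $V^\pi=\overline{S}$ (resp.\ $\Delta A^\pi<0$ forces $V^\pi=\underline{S}$), monetizing the positive drift with a long-only cycle strategy, and invoking Rao's theorem at the end — does track the paper's structure (reduction to $0\le\underline S\le\overline S\le 1$, median recursion~\eqref{eq:RecursiveDefinition}, Lemma~\ref{lemma:MakingItWork}, Lemmas~\ref{Lemma:L0-Bound-Martingale}--\ref{lemma:Stopping Times}, Rao). But two steps you assert are genuinely problematic, and the second is where most of the paper's work actually lies.

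First, the control of the martingale part. You invoke ``the sign-flipping quadratic-variation argument of Lemma~4.7 in \cite{delbaen1994general}'' as if it transfers directly. It does not: the Delbaen--Schachermayer device bets with freely chosen signs on the increments in a frictionless market, which is unavailable here — you are constrained to simple \emph{long-only} bid-ask trades, and by Lemma~\ref{lemma:MakingItWork} the only trades that can be carried out at the ``fair'' price $S^n$ are those aligned with the drift sign. The paper's Lemma~\ref{Lemma:L0-Bound-Martingale} deals with this differently: it does not control $[M^n]$ at all, but rather $\sup_t|M^n_t|$, and the mechanism is that $|S^n|\le 1$ forces large excursions of $M^n$ to be tracked (with opposite sign) by $A^n$; the cycle strategy then harvests the positive part of that forced drift while the martingale exposure of the stopped strategy is bounded by $3\sqrt{k_n}\ll k_n^{3/4}$. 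So the martingale and drift controls are not two independent applications of separate lemmas — they are coupled through the sandwich $0\le S^n\le 1$, which you would need to make explicit.

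Second, and more importantly, the passage from the discrete family $\{V^{\pi_n}\}$ to a continuous-time local quasimartingale is glossed over as ``a tightness/Komlos-type extraction yields a subsequential limit $S$.'' This is where the gap is largest. A Komlos-type extraction gives an a.s.\ convergent sequence of convex combinations at (subsequences of) grid points; it does not produce a right-continuous process defined at every $t\in[0,T]$, it does not propagate the sandwich $\underline S\le S\le\overline S$ beyond grid points and does not obviously transport the mean-variation bound to the limit. The paper circumvents all of this by working with an \emph{a priori identified} limit: the continuous-time Dynkin game value~\eqref{eq:ContDynkin}, whose existence, right-continuity, and sandwich property come from Lepeltier--Maingueneau~\cite{lepeltier1984jeu}. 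The discrete values are then shown to converge to it (Proposition~\ref{prop:Approximation}, a right-continuous-process extension of Kifer's result), and the mean-variation bound is transferred by the nontrivial monotonicity comparisons of Lemma~\ref{lemma:monotonie} (comparing $MV(S^n,D_n)$ with $MV(S^m,D_m)$ for $m>n$) together with Lemma~\ref{lemma:Beigl1} and the convex stopping-time combination of Lemma~\ref{lemma:Beigl2}. None of this is a routine consequence of tightness; these comparison lemmas are the technical heart of the argument, and your proposal does not indicate how you would obtain an analogous transfer of the uniform bound to whatever limit the extraction produces.
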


A semimartingale~$S$ satisfying (\ref{eq:InBetween}) we call a \emph{semimartingale price system}.	The remaining part of the section is devoted to the proof of Theorem~\ref{theo:ResultSemimartingale}. As a first step, we will show that it is actually sufficient to prove 
the following seemingly weaker version of the theorem. 
\begin{theorem}\label{theo:ExistenceSimple}
	Suppose that $0\leq \underline{S}\leq \overline{S}\le 1$, and that (NUPBR) for simple long-only strategies holds. Then there exists a semimartingale $S=(S_t)_{t\in[0,T]}$ s.t.  
	\begin{align*}
		\underline{S}_t\leq S_t\leq \overline{S}_t \quad\text{for all}\ t\in[0,T]. 
	\end{align*}
\end{theorem}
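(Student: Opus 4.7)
The plan is to construct $S$ as a limit of discrete-time Dynkin stopping game values along a refining sequence of deterministic grids, following the strategy outlined in the introduction. First I would fix partitions $\pi_n = \{0 = t_0^n < \dots < t_{k_n}^n = T\}$ with $\bigcup_n \pi_n$ dense in $[0,T]$ and mesh tending to zero, and define $V^n$ on $\pi_n$ as the value of the discrete-time Dynkin game in which the maximizing player chooses a stopping time to collect $\underline{S}$ and the minimizing player chooses one to pay $\overline{S}$. By the standard backward recursion $V^n$ is adapted, takes values in $[0,1]$, and satisfies $\underline{S}_{t_i^n} \le V^n_{t_i^n} \le \overline{S}_{t_i^n}$ at every grid point.

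Next, I would perform the discrete Doob decomposition $V^n = V_0^n + M^n + A^n$ along $\pi_n$. The key economic observation, which we inherit from the Dynkin game structure, is that on each grid interval where the predictable increment of $V^n$ is strictly positive, the maximizing player binds, so $V^n_{t_{i-1}^n} = \underline{S}_{t_{i-1}^n}$; a simple long-only strategy that goes long one share precisely on such intervals extracts the positive part of the drift in the bid--ask market, up to a martingale. Because $V^n \in [0,1]$ forces $\Vl \ge -1$ after a suitable rescaling, (NUPBR) then makes the positive predictable variation of $A^n$ bounded in probability uniformly in $n$. Adapting the brilliant idea of Lemma~4.7 of~\cite{delbaen1994general} to this long-only setting, I would similarly bound the quadratic variation $[M^n,M^n]_T$ in probability uniformly in $n$, again using the $[0,1]$-bound to preserve admissibility. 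Combined with $A^n_T = V^n_T - V^n_0 - M^n_T$ and the boundedness of $V^n$, this implies that the negative predictable variation is controlled too, so that $\Var_0^T(A^n)$ is bounded in probability uniformly in $n$.

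With $\sup_n\bigl(\Var_0^T(A^n) + [M^n,M^n]_T\bigr)$ bounded in probability, the processes $V^n$ form a family of quasimartingales with uniformly controlled conditional variation along all stopping time partitions. A Komlos/Mazur extraction on convex combinations, in the spirit of~\cite{beiglbock2011direct}, then yields a c\`adl\`ag limit $V$ that is a quasimartingale and hence, by Rao's theorem, a semimartingale on $[0,T]$. The sandwich $\underline{S}_t \le V_t \le \overline{S}_t$ is inherited from the grid bounds on the dense set $\bigcup_n \pi_n$, and the right-continuity of $\underline{S}$, $\overline{S}$ and $V$ extends it to all of $[0,T]$. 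I expect the main obstacle to be this limit step: making the sandwich inequality, the drift variation bound, and the martingale quadratic variation bound all survive a single convex-combination/compactness argument, and verifying that the resulting c\`adl\`ag object is genuinely a semimartingale rather than merely a pointwise limit on a countable dense subset, is the delicate core of the proof; the Dynkin game and its Doob decomposition then do the rest of the bookkeeping.
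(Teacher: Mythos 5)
Your overall plan tracks the paper's strategy closely: discrete-time Dynkin games on refining grids, the discrete Doob decomposition, the sign-structure observation that the drift is positive only where the value equals $\overline{S}$ and negative only where it equals $\underline{S}$ (this is exactly Lemma~\ref{lemma:MakingItWork}), the control of the martingale part via the idea of Lemma~4.7 of~\cite{delbaen1994general}, and the final appeal to Rao's theorem. Where you genuinely depart from the paper is in the limit step, and that departure creates a real gap.

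You propose to manufacture the limit process $S$ by a Komlos/Mazur extraction on convex combinations of the discrete game values $V^n$, and you acknowledge that turning this into a right-continuous quasimartingale rather than an a.s.\ limit along a countable dense set is the ``delicate core.'' The paper sidesteps precisely this difficulty: it never takes a compactness limit of processes at all. Instead, it defines $S$ at the outset as the right-continuous version of the \emph{continuous-time} Dynkin game value, whose existence is guaranteed by Lepeltier--Maingueneau~\cite{lepeltier1984jeu}, proves pointwise convergence $S^n_t \to S_t$ on dyadics (Proposition~\ref{prop:Approximation}), and then transfers the mean-variation bound from the discrete $S^n$ to the already-existing process $S$ through the monotonicity Lemmas~\ref{lemma:Beigl1} and~\ref{lemma:monotonie}. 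Rao's theorem then upgrades the right-continuous local quasimartingale $S$ to a c\`adl\`ag semimartingale. In other words, the existence and regularity of the candidate are handled \emph{before} any no-arbitrage estimate is made, so no process-level compactness argument is needed. Your proposal would have to reconstruct all of this from scratch: at a fixed dyadic $t$ the convex combinations converge, but nothing guarantees that you can choose a single sequence of convex weights working for all dyadics simultaneously, that the resulting limit on the dense set has a right-continuous modification, or that the mean-variation bound survives the combination. These are serious obstacles, not just bookkeeping.

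A secondary point: the Komlos-type argument in the paper (Lemma~\ref{lemma:Beigl2}, taken from~\cite{beiglbock2014riemann}) is applied to \emph{stopping times}, not to the processes. You need it to pass from stopping times $\tau_n$ working for each $S^n$ to a single $\tau$ after which $MV(S^\tau, D_n)$ is uniformly bounded. This is independent of, and orthogonal to, the compactness-of-processes device you propose, and you would still need something like it. Moreover, once you do have the mean-variation bound for the $\tau_n$-stopped discrete processes, you also need the inter-grid comparison Lemma~\ref{lemma:monotonie} (that $MV(S^{n,\cdot},D_n)$ is controlled by $MV(S^{m,\cdot},D_m)$ for $m>n$ up to an additive constant); this is not automatic because $S^m_t \neq S^n_t$ for $t \in D_n$ in general, unlike in the frictionless case. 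Your write-up does not address this subtlety. On the other hand, your observation that the negative predictable variation of $A^n$ is controlled via $A^n_T = V^n_T - V^n_0 - M^n_T$ together with the bound on the positive part is correct and is the same trick the paper uses at the end of Lemma~\ref{lemma:Stopping Times}.
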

\begin{proposition}\label{prop: ReductionStep} Theorem~\ref{theo:ExistenceSimple} implies Theorem~\ref{theo:ResultSemimartingale}.
\end{proposition}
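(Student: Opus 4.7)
The plan is to reduce the general setting to the bounded setting via a smooth strictly-increasing bounded change of coordinates. Take $f\colon\mathbb{R}\to(0,1)$ defined by $f(x):=\tfrac{1}{2}+\tfrac{1}{\pi}\arctan(x)$; this is a $C^\infty$-diffeomorphism onto $(0,1)$ whose inverse $f^{-1}$ is also $C^\infty$ on $(0,1)$. Setting $\underline{S}':=f(\underline{S})$ and $\overline{S}':=f(\overline{S})$ produces c\`adl\`ag adapted processes with $0<\underline{S}'\le\overline{S}'<1$, so the price-range hypothesis of Theorem~\ref{theo:ExistenceSimple} is automatic. The strategy is then to verify that $(\underline{S}',\overline{S}')$ satisfies (NUPBR) for simple long-only strategies, apply Theorem~\ref{theo:ExistenceSimple} to obtain a semimartingale $\tilde{S}$ lying between $f(\underline{S})$ and $f(\overline{S})$, and finally recover a semimartingale $S:=f^{-1}(\tilde{S})$ lying between $\underline{S}$ and $\overline{S}$.

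The crucial step will be to transfer (NUPBR) from $(\underline{S},\overline{S})$ to $(\underline{S}',\overline{S}')$. I would argue by contraposition: suppose a sequence $(\tilde{\varphi}^n)_{n\in\mathbb{N}}$ of simple long-only strategies realizes UPBR in the $f$-market, and construct strategies $(\varphi^n)$ realizing UPBR in the original market. The key tool is that c\`adl\`ag paths on $[0,T]$ are pathwise bounded: the events $A_K:=\{\sup_{t\in[0,T]}(|\underline{S}_t|\vee|\overline{S}_t|)\le K\}$ have probability tending to one as $K\to\infty$. On $A_K$ the restriction $f|_{[-K,K]}$ is bi-Lipschitz onto its image with constants depending only on $K$, so the $f$-market and original liquidation values can be compared pathwise on $A_K$. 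The local-boundedness-from-below hypothesis on $\underline{S}$ will be used to upgrade the admissibility bound $-1$ in the $f$-market to an almost-surely-finite admissibility bound in the original market after an appropriate deterministic rescaling and localization on a suitable $A_K$ of positive probability. Passing to a subsequence concentrated on this $A_K$ then yields UPBR for the original market, contradicting the hypothesis.

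Once (NUPBR) is established for $(\underline{S}',\overline{S}')$, Theorem~\ref{theo:ExistenceSimple} supplies a semimartingale $\tilde{S}$ with $\underline{S}'\le\tilde{S}\le\overline{S}'$, so $\tilde{S}$ takes values in $(0,1)$ pathwise. Applying It\^o's formula with the $C^2$ function $f^{-1}$ restricted to $(0,1)$ shows that $S:=f^{-1}(\tilde{S})$ is a semimartingale, and monotonicity of $f^{-1}$ gives $\underline{S}\le S\le\overline{S}$ everywhere. The main obstacle I expect is the (NUPBR)-transfer step: the derivative of $f^{-1}$ blows up at the endpoints of $(0,1)$, so the quantitative comparison of admissible strategies between the two markets is delicate, and this is precisely where the local-boundedness-from-below of $\underline{S}$ is essential — without it the liquidation value in the original market need not be uniformly bounded from below, even after strategy rescaling and pathwise localization.
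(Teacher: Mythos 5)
Your reduction via the global diffeomorphism $f(x)=\tfrac12+\tfrac1\pi\arctan(x)$ is appealing, but it has a genuine gap exactly at the step you flag as ``delicate'': (NUPBR) does \emph{not} transfer under a nonlinear transformation of the bid--ask pair, and bi-Lipschitz control of $f$ on $[-K,K]$ does not repair this. The liquidation value of a simple strategy is a signed sum $\sum_i\bigl(\underline S_{T_i}(\Delta^+\varphi_{T_i})^--\overline S_{T_i}(\Delta^+\varphi_{T_i})^+\bigr)$. Under an \emph{affine} map $x\mapsto ax+b$ with $a>0$, the $b$-terms telescope (for round-trips) and the whole liquidation value scales by $a$; hence (NUPBR) transfers trivially. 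Under your $f$, the increments $f(\underline S_{T_i})-f(\overline S_{T_j})$ have the same sign as $\underline S_{T_i}-\overline S_{T_j}$, but their \emph{magnitudes} are distorted by factors $f'(\xi)$ that vary along the path between $l_K$ and $L_K$. Because the total profit is a signed sum in which gains and losses can occur at price levels with very different values of $f'$, the ratio $\tilde V^{\mathrm{liq}}_T/V^{\mathrm{liq}}_T$ is not controlled: a strategy that earns $1$ where $f'\approx l_K$ and loses $1/2$ where $f'\approx L_K$ has $V^{\mathrm{liq}}_T=1/2>0$ but $\tilde V^{\mathrm{liq}}_T\approx l_K-L_K/2$, which can be arbitrarily negative. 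So the same sequence of strategies can realize (UPBR) in one market and be admissible-bounded and non-profitable in the other; no deterministic rescaling of the strategy fixes this, since the right scaling factor would have to be the non-anticipative, level-dependent, random quantity $f'(\cdot)$.

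A second, smaller issue is the phrase ``localization on a suitable $A_K$ of positive probability.'' The event $A_K=\{\sup_{t\le T}(|\underline S_t|\vee|\overline S_t|)\le K\}$ is not $\mathcal F_0$-measurable, so one cannot restrict or concentrate strategies on $A_K$; localization must go through stopping times such as $\tau^K=\inf\{t:|\underline S_t|\vee|\overline S_t|>K\}$. The paper does exactly that, and this is not merely a cosmetic substitution: stopping at $\tau^K$ is what makes the ``reduce to bounded prices'' step a genuine reduction of admissible strategies rather than a conditioning on a future event. Finally, your last step (recovering $S=f^{-1}(\tilde S)$ by It\^o) also needs a localization because $f^{-1}$ blows up at the endpoints of $(0,1)$ and $\tilde S$ is only known to lie in $(0,1)$; this is fixable, but only via the same stopping-time machinery you would need to set up anyway. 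The paper sidesteps all of this: Step~1 uses the affine maps $x\mapsto(x+n)/(n+1)$ on $\llbracket 0,\sigma^n\rrbracket$ (lower localization), Step~2 uses the affine maps $x\mapsto x/n$ together with stopping at $\tau^n=\inf\{t:\overline S_t>n\}$ and a short argument that a purchase at $\tau^n$ in the truncated market cannot be profitable, and the resulting local semimartingales are pasted together. The affine--plus--stopping structure is essential, not a convenience; replacing it by a global nonlinear change of variables is the point at which the proposal breaks down.
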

\begin{proof}
	We assume that Theorem~\ref{theo:ExistenceSimple} holds true. 
	
	\emph{Step 1:} Let $\underline{S}$ be locally bounded from below, $\ov{S}\le 1$, and $(\un{S},\ov{S})$ satisfies (NUPBR). Thus, there is an increasing sequence $(\sigma^n)_{n\in\mathbb{N}}$ of stopping times with $\mathbb{P}(\sigma^n=\infty)\to 1$ s.t. $\underline{S}\geq -n$ on $\llbracket 0,\sigma^n\rrbracket$ for all $n\in\mathbb{N}$. With $(\un{S},\ov{S})$, a fortiori  $((\un{S}^{\sigma^n}+n)/(n+1),(\ov{S}^{\sigma^n}+n)/(n+1))$ satisfies (NUPBR). By 
	Theorem~\ref{theo:ExistenceSimple}, there is a semimartingale $S^n$ for each $n\in\mathbb{N}$ s.t. $(\underline{S}^{\sigma^n}+n)/(n+1)\le S^n\le (\overline{S}^{\sigma^n}+n)/(n+1)$. Therefore, the process $S:=\sum_{n=1}^\infty \mathbbm{1}_{\llbracket {\sigma}^{n-1},{\sigma}^{n}\llbracket} ((n+1)S^{n}-n)$, where $\sigma^0:=0$, lies between $\underline{S}$ and $\overline{S}$. $S$ is a local semimartingale and, thus, a semimartingale. Consequently, Theorem~\ref{theo:ExistenceSimple} holds true under the milder condition that $\un{S}$ is only locally bounded from below instead of nonnegative.
	
	\textit{Step 2:} Let $\underline{S}$ be locally bounded from below and $(\un{S},\ov{S})$ satisfies (NUPBR) for simple long-only strategies.
	Consider the stopping times $\tau^n:=\inf\{t\geq0: \overline{S}_t > n\}$, $n\in\bbn$.
	One has that  $\mathbb{P}(\tau^n=\infty)= \mathbb{P}(\ov{S}_t\le n\ \forall t\in[0,T]) \to1$ as $n\to\infty$. 
	With short-selling constraints, liquidation value processes that are attainable in the market $((\underline{S}^{\tau^n}/n)\wedge 1,(\overline{S}^{\tau^n}/n)\wedge 1)$ can be dominated by those in $(\un{S},\ov{S})$. Indeed, for $t<\tau^n$, one has $(\ov{S}^{\tau^n}_t/n)\wedge 1 =\ov{S}_t/n$, and a purchase at time~$\tau^n$ cannot generate a profit in the market~$((\underline{S}^{\tau^n}/n)\wedge 1,(\overline{S}^{\tau^n}/n)\wedge 1)$.  
	Thus, $((\underline{S}^{\tau^n}/n)\wedge 1,(\overline{S}^{\tau^n}/n)\wedge 1)$ satisfies (NUPBR) with simple 
	{\em long-only} strategies and by Step~1 there exist 
	semimartingales~$S^n$ with $(\underline{S}^{\tau^n}/n)\wedge 1\le S^n\le (\overline{S}^{\tau^n}/n)\wedge 1$ for all $n\in\bbn$.
	Then, $S:=\sum_{n=1}^\infty \mathbbm{1}_{\llbracket {\tau}^{n-1},{\tau}^{n}\llbracket} n S^{n}$, where $\tau^0:=0$, shows the assertion. 
\end{proof}

For the remainder of the section, we work under the assumptions of Theorem~\ref{theo:ExistenceSimple}. More specifically we assume the following. 
\begin{assumption}\label{ass:bounded}
	We assume $0\leq \underline{S}\leq \overline{S}\leq 1$ and that $(\underline{S},\overline{S})$ satisfies (NUPBR) for simple long-only strategies for the remainder of the section.
\end{assumption}
In addition, we set w.l.o.g. $T=1$. We now proceed with the proof of Theorem~\ref{theo:ExistenceSimple}.
The candidate for the semimartingale will be the value process of a Dynkin zero-sum stopping game played on the bid and ask price, i.e., let $(S_t)_{t\in[0,1]}$ be the right-continuous version of
\begin{align}
	\begin{aligned}\label{eq:ContDynkin}
		S_t&:=\esssup_{\tau\in\mathcal{T}_{t,1}}\essinf_{\sigma\in\mathcal{T}_{t,1}}\E\left[\underline{S}_\tau\mathbbm{1}_{\{\tau\leq \sigma\}}+\overline{S}_\sigma \mathbbm{1}_{\{\tau >\sigma\}}\mid \F_t \right]\\&=\essinf_{\sigma\in\mathcal{T}_{t,1}}\esssup_{\tau\in\mathcal{T}_{t,1}}\E\left[\underline{S}_\tau\mathbbm{1}_{\{\tau\leq \sigma\}}+\overline{S}_\sigma \mathbbm{1}_{\{\tau >\sigma\}}\mid \F_t \right],
	\end{aligned}
\end{align}
where $\mathcal{T}_{t,1}$ is the set of $[t,1]$-valued stopping times for $t\in[0,1]$. The existence of such a process and the non-trivial equality in \eqref{eq:ContDynkin} is 
guaranteed by \emph{Th\'eor\`eme 7 \& 9 and Corollaire 12} in \cite{lepeltier1984jeu}. Obviously, $S=(S_t)_{t\in[0,1]}$ satisfies $\underline{S}\le S\le \overline{S}$. 
Thus, we only have to show that (NUPBR) for simple long-only trading strategies implies that $S$ is a semimartingale.
We note that all arguments remain valid for a different terminal value of the game between $\un{S}_1$ and $\ov{S}_1$. 

The arguments below also provide a financial interpretation of the value process~$S$ of this Dynkin game. In the special case that the terminal bid- and ask price coincide,
a discrete time approximation of $S$ can be interpreted as a shadow price for a utility maximization problem with a risk-neutral investor and the constraint that her dynamic stock 
position has to take values in $[-1,1]$. Put differently, in the bid-ask market, an investor can earn the same expected profit as via an optimal strategy 
in the frictionless market with price process~$S$ (besides a finite deviation caused by different liquidation values). 

Next, we recall the notion of a quasimartingale and Rao's Theorem (see, e.g., Theorem 17 in \cite[Chapter 3]{protter2005stochastic} or Theorem 3.1 in \cite{beiglbock2014riemann}). 
\begin{definition} Let $X=(X_t)_{t\in[0,1]}$ be an adapted process s.t.  
	$\E(|X_t|)<\infty$ for all $t\in[0,1]$.
	Given a deterministic partition $\pi=\{0=t_0<t_1<\dots<t_n=1\}$ of $[0,1]$ the \emph{mean-variation of $X$ along $\pi$} is defined as
	\begin{align*}
		MV(X,\pi):=\E\left[\sum_{t_i\in\pi}\left\vert \E\left[X_{t_i}-X_{t_{i+1}}\mid \F_{t_i}\right]\right\vert\right]
	\end{align*}
	and the \emph{mean variation of $X$} is defined as
	\begin{align*}
		MV(X):=\sup_\pi MV(X,\pi).
	\end{align*}
	Finally, $X$ is called a \emph{quasimartingale} if $MV(X)<\infty$.
\end{definition}
\begin{theorem}[Rao]\label{theo:rao}
	Let $X$ be an adapted right-continuous process. Then, $X$ is a quasimartingale if and only if $X$ has a decomposition $X=Y-Z$ where $Y$ and $Z$ are each positive right-continuous supermartingales. In this case, the paths of $X$ are a.s. c\`adl\`ag.
\end{theorem}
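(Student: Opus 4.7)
My plan follows the classical two-step Rao argument. For the easy direction, given $X = Y - Z$ with $Y, Z$ positive right-continuous supermartingales, for any partition $\pi = \{0 = t_0 < \dots < t_n = 1\}$ the supermartingale property of $Y$ makes $\bigl|\E[Y_{t_i} - Y_{t_{i+1}} \mid \F_{t_i}]\bigr| = Y_{t_i} - \E[Y_{t_{i+1}} \mid \F_{t_i}]$, so after taking expectations in $MV(Y,\pi)$ the sum telescopes to $\E[Y_0] - \E[Y_1] \leq \E[Y_0]$. The same bound holds for $Z$, whence $MV(X) \leq \E[Y_0] + \E[Z_0] < \infty$.

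For the converse, I would fix a refining sequence of deterministic partitions $(\pi_n)_{n\in\bbn}$ of $[0,1]$ with mesh tending to zero and, on any such grid $\pi$, set $a_j^\pi := \E[X_{t_j} - X_{t_{j+1}} \mid \F_{t_j}]$. Backward induction yields the identity $X_{t_i} = \E[X_1 \mid \F_{t_i}] + \sum_{j \geq i} \E[a_j^\pi \mid \F_{t_i}]$ on $\pi$. Splitting each $a_j^\pi$ into its positive and negative parts suggests the Jordan-type candidates
\begin{align*}
Y^\pi_{t_i} := \E\Bigl[X_1^+ + \sum_{j \geq i}(a_j^\pi)^+ \,\Bigm|\, \F_{t_i}\Bigr], \qquad Z^\pi_{t_i} := \E\Bigl[X_1^- + \sum_{j \geq i}(a_j^\pi)^- \,\Bigm|\, \F_{t_i}\Bigr].
\end{align*}
Both are nonnegative, are supermartingales along the grid since the inner sums decrease in $i$, and satisfy $X = Y^\pi - Z^\pi$ on $\pi$. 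The assumption $MV(X) < \infty$ gives $\sup_n \E[Y^{\pi_n}_0] \leq \E[X_1^+] + MV(X) < \infty$, uniformly in $n$, and an analogous bound for $Z^{\pi_n}$.

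The final step is passing to the limit as $n \to \infty$. For each rational $t \in [0,1]$, the sequence $(Y^{\pi_n}_t)_n$ is $L^1$-bounded; a diagonal Koml\'os / Dunford--Pettis argument extracts convex combinations converging almost surely to some $Y_t$, and the supermartingale inequality is preserved in the limit. Taking the right-continuous modification furnishes a positive c\`adl\`ag supermartingale $Y$, and the analogous construction produces $Z$. Right-continuity of $X$ then forces $X = Y - Z$ throughout $[0,1]$, and the c\`adl\`ag assertion for $X$ follows from the standard fact that positive right-continuous supermartingales have c\`adl\`ag paths. The main obstacle is precisely this compactness step: the partition-indexed candidates need not converge in any natural topology on processes, and one must invoke Koml\'os-type a.s.\ convergence along sparse subsequences in order to preserve both the supermartingale property and the decomposition $X = Y - Z$ simultaneously in the limit.
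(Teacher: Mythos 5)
Your construction is the classical Jordan-type decomposition used in the textbook proofs that the paper cites (Theorem~14 in \cite[Chapter~3]{protter2005stochastic}, Theorem~8.13 in \cite{he.wang.yan.1992}), and both directions of the argument are set up correctly: the telescoping bound $MV(X)\le\E[Y_0]+\E[Z_0]$ for the easy direction, the identity $X_{t_i}=\E[X_1\mid\F_{t_i}]+\sum_{j\ge i}\E[a_j^\pi\mid\F_{t_i}]$ on the grid, the nonnegativity and supermartingale property of $Y^\pi,Z^\pi$ along the grid, and the uniform $L^1$ bound via $MV(X)$. The decomposition is also automatically preserved under convex combinations and refinements because $Y^\pi_t-Z^\pi_t=X_t$ does not depend on $\pi$.

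Where you deviate from the classical route is the limit passage, and that is also where your only substantive error sits. You assert that ``the partition-indexed candidates need not converge in any natural topology'' and that a Koml\'os argument is therefore required. This is wrong: the candidates do converge, in fact monotonically. If $\pi'$ refines $\pi$ and $t\in\pi$, then splitting a grid interval $[t_j,t_{j+1}]$ of $\pi$ at a point $s\in\pi'$ gives
\[
a_j^\pi \;=\; a_j^{\pi'} + \E\bigl[a_s^{\pi'}\,\big|\,\F_{t_j}\bigr],
\qquad\text{hence}\qquad
(a_j^\pi)^+ \;\le\; (a_j^{\pi'})^+ + \E\bigl[(a_s^{\pi'})^+\,\big|\,\F_{t_j}\bigr]
\]
by Jensen's inequality for $x\mapsto x^+$; summing and conditioning yields $Y^\pi_t\le Y^{\pi'}_t$ a.s.\ (and similarly for $Z$). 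Along a refining sequence $(\pi_n)_{n\in\bbn}$, the sequences $(Y^{\pi_n}_t)_n$ and $(Z^{\pi_n}_t)_n$ are therefore nondecreasing and $L^1$-bounded, so they converge a.s.\ \emph{and} in $L^1$ for each $t\in\cup_n\pi_n$, and the supermartingale inequality passes to the limit cleanly. Koml\'os and Dunford--Pettis are not incorrect here, but they are unnecessary and, because you then have to extract simultaneously for $Y$ and $Z$ and diagonalize over $t$, they reintroduce complications that the monotone observation eliminates. Finally, the regularization step deserves a word more than you give it: one obtains $Y,Z$ on the dense set $\cup_n\pi_n$, applies Doob's supermartingale regularization (under the usual conditions) to get right-continuous versions $\tilde Y,\tilde Z$, and then uses the right-continuity of $X$ (and \emph{only} right-continuity, not existence of left limits) to conclude $X=\tilde Y-\tilde Z$ on all of $[0,T]$; the a.s.\ c\`adl\`ag property of $X$ then follows because right-continuous supermartingales are a.s.\ c\`adl\`ag, which is precisely the point made in Remark~\ref{remark:right-continuity}.
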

\begin{remark}\label{remark:right-continuity}
	Usually, Rao's theorem is formulated for an adapted \emph{c\`adl\`ag} process $X$. However, to show that $X$ can be written as the difference of two right-continuous supermartingales, the existence of the finite left limits of $X$ is not needed (see the proofs of Theorem~8.13 in \cite{he.wang.yan.1992} or
	Theorem 14 in \cite[Chapter 3]{protter2005stochastic}). On the other hand, right-continuous supermartingales possess a.s. finite left limits (see Theorem~VI.3 in \cite{dellacherie.meyer.1982}). This means that the theorem can be formulated for an a priori only right-continuous quasimartingale that turns out to be c\`adl\`ag.
\end{remark}

If we can show that the right-continuous process~$S$ is a local quasimartingale, Rao's theorem (in the version of Theorem~\ref{theo:rao}) yields that $S$ can locally be written as the difference of two supermartingales, and it admits a c\`{a}dl\`{a}g modification. Thus, $S$ is a semimartingale by the Doob-Meyer-Theorem (Case without Class D) \cite[Chapter 3, Theorem 16]{protter2005stochastic}. Hence, we now want to show that $S$ is a local quasimartingale.

For this, we consider a discrete time approximation $S^n=(S^n_t)_{t\in D_n}$ of $S$ on the set $D_n:=\{0,1/2^n,\dots (2^n-1)/2^n, 1\}$ of dyadic numbers defined by $S^n_1=\underline{S}_1$ and 
\begin{align}\label{eq:RecursiveDefinition}
	S^n_t:=\min\left(\overline{S}_t,\max\left( \underline{S}_t, \mathbb{E}\left[S^n_{t+1/2^n}\mid \mathcal{F}_t\right]\right)\right), \quad
	t\in D_n, \ t<1.
\end{align} 
Indeed, it is well-known (see, e.g., \cite[Proposition VI-6-9]{neveu1975discrete}) that 
\begin{align}
	\begin{aligned}\label{eq:DisDynkin}
		S^n_t&=\esssup_{\tau\in\mathcal{T}^n_{t,1}}\essinf_{\sigma\in\mathcal{T}^n_{t,1}}\E\left[\underline{S}_\tau\mathbbm{1}_{\{\tau\leq \sigma\}}+\overline{S}_\sigma \mathbbm{1}_{\{\tau >\sigma\}}\mid \F_t \right]\\&=\essinf_{\sigma\in\mathcal{T}^n_{t,1}}\esssup_{\tau\in\mathcal{T}^n_{t,1}}\E\left[\underline{S}_\tau\mathbbm{1}_{\{\tau\leq \sigma\}}+\overline{S}_\sigma \mathbbm{1}_{\{\tau >\sigma\}}\mid \F_t \right],\quad t\in D_n,
	\end{aligned}
\end{align}
where $\mathcal{T}^n_{t,1}$ denotes the set of all $\{t,t+1/2^n,\dots,1\}$-valued stopping times. The following proposition generalizes Kifer~\cite[Proposition 3.2]{kifer2000game} from continuous processes to right-continuous processes. 
\begin{proposition} \label{prop:Approximation}
	Let $m\in \mathbb{N}$ and $t\in D_m$. 
	Then, we have \[\lim\limits_{\substack{n\to\infty\\n\geq m}}S^n_t=S_t\quad \PM\mathrm{-a.s.}\]
\end{proposition}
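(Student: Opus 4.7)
Write $L:=\underline{S}$, $U:=\overline{S}$ (both $[0,1]$-valued) and $R(\tau,\sigma):=L_\tau\mathbbm{1}_{\{\tau\le\sigma\}}+U_\sigma\mathbbm{1}_{\{\tau>\sigma\}}$. The plan is to show separately that $\limsup_{n\to\infty}S^n_t\le S_t$ and $\liminf_{n\to\infty}S^n_t\ge S_t$ almost surely, using the inf-sup (resp.\ sup-inf) representation in \eqref{eq:ContDynkin} together with its discrete analogue \eqref{eq:DisDynkin}. The strategy mimics Kifer's continuous-obstacle argument: pick a near-optimal stopping time for the continuous game, approximate it by a dyadic stopping time from the right, and exploit right-continuity of $L$ and $U$ to transfer the near-optimality to the discrete game. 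The novelty compared to Kifer is that right-continuity only allows one-sided (from the right) dyadic approximation, so the discontinuity of $R$ along the diagonal $\{\tau=\sigma\}$ must be absorbed separately via the bid-ask inequality $L\le U$.

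\textbf{Upper bound.} For $\varepsilon>0$, pick $\sigma^\varepsilon\in\mathcal{T}_{t,1}$ with $\esssup_{\tau\in\mathcal{T}_{t,1}}\E[R(\tau,\sigma^\varepsilon)\mid\F_t]\le S_t+\varepsilon$ from the inf-sup of \eqref{eq:ContDynkin}, and set $\sigma^{\varepsilon,n}:=\inf\{s\in D_n\cup\{1\}:s\ge\sigma^\varepsilon\}\in\mathcal{T}^n_{t,1}$, which decreases to $\sigma^\varepsilon$. Plugging $\sigma^{\varepsilon,n}$ into the discrete inf-sup yields $S^n_t\le\esssup_{\tau^n\in\mathcal{T}^n_{t,1}}\E[R(\tau^n,\sigma^{\varepsilon,n})\mid\F_t]$. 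A case analysis on the ordering of $\tau^n$, $\sigma^\varepsilon$, $\sigma^{\varepsilon,n}$ gives the pointwise bound
\[
R(\tau^n,\sigma^{\varepsilon,n})-R(\tau^n,\sigma^\varepsilon)\le B^n:=(U_{\sigma^{\varepsilon,n}}-U_{\sigma^\varepsilon})^+ +\sup_{s\in[\sigma^\varepsilon,\sigma^{\varepsilon,n}]}(L_s-L_{\sigma^\varepsilon})^+,
\]
where the critical case $\sigma^\varepsilon<\tau^n\le\sigma^{\varepsilon,n}$ (discrete payoff $L_{\tau^n}$, continuous benchmark $U_{\sigma^\varepsilon}$) is reduced to $L_{\tau^n}-L_{\sigma^\varepsilon}\le\sup_{s\in[\sigma^\varepsilon,\sigma^{\varepsilon,n}]}(L_s-L_{\sigma^\varepsilon})^+$ via $L_{\sigma^\varepsilon}\le U_{\sigma^\varepsilon}$, and $B^n$ does not depend on $\tau^n$. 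Since $B^n\in[0,2]$ and $B^n\to 0$ a.s.\ by right-continuity of $L$ and $U$ at $\sigma^\varepsilon$, the conditional dominated convergence theorem gives $\E[B^n\mid\F_t]\to 0$ a.s. Hence $S^n_t\le S_t+\varepsilon+\E[B^n\mid\F_t]$, so $\limsup_n S^n_t\le S_t+\varepsilon$ almost surely, and letting $\varepsilon\downarrow 0$ along a countable sequence finishes the bound.

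\textbf{Lower bound and main difficulty.} The lower bound is entirely symmetric: pick $\tau^\varepsilon\in\mathcal{T}_{t,1}$ that is $\varepsilon$-optimal for the sup player in the sup-inf of \eqref{eq:ContDynkin}, right-approximate it by $\tau^{\varepsilon,n}$, use it in the discrete sup-inf of \eqref{eq:DisDynkin}, and bound the error by the mirror-image quantity to $B^n$. The main obstacle is the critical case above: in the frictionless setting $L=U$ at the relevant times and the diagonal discontinuity of $R$ would not be felt, whereas here the discrete-game payoff differs from the continuous benchmark by up to the full (potentially non-vanishing) spread $U-L$. This gap survives the limit only because the bid-ask inequality $L\le U$ lets us dominate it by a one-sided oscillation of $L$ (resp.\ $U$) on a shrinking interval to the right of $\sigma^\varepsilon$, which vanishes by right-continuity. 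A secondary technical point is that one really needs the \emph{conditional} dominated convergence theorem to upgrade the pathwise convergence $B^n\to 0$ to $\E[B^n\mid\F_t]\to 0$ almost surely, which in turn is needed to pass the $\limsup$ inside and close the argument without extracting subsequences.
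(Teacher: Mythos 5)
Your proof is correct and follows essentially the same route as the paper: both pick an $\eps$-optimal stopping time for the continuous game (the paper cites Lepeltier--Maingueneau, Corollaire 12, for the explicit $\eps$-Nash pair $\tau^*_t,\sigma^*_t$), right-approximate it on the dyadic grid, control the resulting error by a one-sided oscillation of $\un{S},\ov{S}$ that vanishes by right-continuity, absorb the diagonal discontinuity of the payoff via $\un{S}\le\ov{S}$, and conclude with the conditional dominated convergence theorem. The only cosmetic difference is that the paper plugs the approximated stopping time against the exact discrete Nash equilibrium $(\tau^n_t,\sigma^n_t)$ from \eqref{eq:DisEpsOpti}, whereas you use the inf-sup / sup-inf representations \eqref{eq:DisDynkin} of $S^n_t$ directly; these are equivalent ways of exploiting the existence of a value for the finite discrete game.
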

\begin{proof}
	Let $n\in\mathbb{N}$ with $n\geq m$ and $t\in D_m$. The pair of $\{t,t+1/2^n,\dots,1\}$-valued stopping times 
	\begin{align*}
		\tau^{n}_t&:=\inf\{s\geq t: s\in D_n, S^n_s = \underline{S}_s\},\\
		\sigma^n_t&:=\inf\{s\geq t: s\in D_n, S^n_s = \overline{S}_s\}
	\end{align*}
	is a Nash equilibrium of the discrete time game started at time~$t$, i.e.,
	\begin{align}\label{eq:DisEpsOpti}
		\E\left[R(\tau,\sigma^n_t)\mid \F_t \right]\leq S^n_t\leq \E\left[R(\tau^n_t,\sigma)\mid \F_t \right]\quad\text{for all}\ \tau,\sigma\in \mathcal{T}^n_{t,T},
	\end{align}
	where $R(\tau,\sigma):=\underline{S}_\tau\mathbbm{1}_{\{\tau\leq \sigma\}}+\overline{S}_\sigma \mathbbm{1}_{\{\tau >\sigma\}}$. This follows from \cite[Proposition VI-6-9]{neveu1975discrete} and its proof with the observation that in {\em finite} discrete time the assertion also holds for $\eps=0$ by dominated convergence.
	For any $\tau\in \mathcal{T}_{t,T}$, we let $D_n(\tau):=\inf\{t\geq \tau: t\in D_n\}$ and  
	\[\eta_n(\tau)(\omega):=\sup_{\substack{s\in(\tau(\omega),\tau(\omega)+1/2^n)}}\max\left(\left\vert \underline{S}_s(\omega)-\underline{S}_\tau(\omega)\right\vert, \left\vert \overline{S}_s(\omega)-\overline{S}_\tau(\omega)\right\vert\right),\quad \omega\in\Omega.\]
	This yields the estimates
	\begin{align}\label{eq:DynkingDiscreteContSwitch}
		R(\tau,D_n(\sigma))-\eta_n(\tau)\leq R(D_n(\tau),D_n(\sigma))\leq R(D_n(\tau),\sigma)+\eta_n(\sigma)
	\end{align}
	for all $\tau,\sigma\in \mathcal{T}_{0,T}$. Let $\eps>0$. For the continuous time game, the pair of stopping times 
	\begin{align*}
		\tau^{*}_t&:=\inf\{s\geq t: S_s\leq \underline{S}_s+\varepsilon\},\\
		\sigma^*_t&:=\inf\{s\geq t: S_s\geq \overline{S}_s-\varepsilon\}
	\end{align*}
	is an $\eps$-Nash equilibrium, i.e., 
	\begin{align}\label{eq:ContEpsOpti}
		\E\left[R(\tau,\sigma^*_t)\mid \F_t \right]-\varepsilon\leq S_t\leq \E\left[R(\tau^*_t,\sigma)\mid \F_t \right]+\varepsilon,\quad\text{for all}\ \tau,\sigma\in \mathcal{T}_{t,T}.
	\end{align}
	This is shown in Corollaire 12 and its proof in \cite{lepeltier1984jeu}.  Combining the first inequality in \eqref{eq:DisEpsOpti} with $\tau=D_n(\tau^*_t)$, the first inequality in \eqref{eq:DynkingDiscreteContSwitch} and the second inequality in \eqref{eq:ContEpsOpti} yields 
	\begin{align*}
		S^n_t&\geq \E\left[R\left(D_n(\tau^*_t), \sigma^n_t\right)\mid\mathcal{F}_t\right]\\
		&\geq \E\left[ R(\tau^*_t, \sigma^n_t)\mid\F_t\right]-\E\left[\eta_n(\tau^*_t)\mid \F_t\right]\\
		&\geq S_t -\varepsilon-\E\left[\eta_n(\tau^*_t)\mid \F_t\right].
	\end{align*}
	Similar, applying the second inequality \eqref{eq:DisEpsOpti} with $\sigma=D_n(\sigma^*_t)$, the second inequality in \eqref{eq:DynkingDiscreteContSwitch} and the first inequality in \eqref{eq:ContEpsOpti}, yields the corresponding upper estimate on $S^n_t$. Putting together, we get
	\begin{align*}
		S_t+\varepsilon+\E\left[\eta_n(\sigma^*_t)\mid\F_t\right]\geq S^n_t\geq S_t -\varepsilon-\E\left[\eta_n(\tau^*_t)\mid \F_t\right].
	\end{align*}
	Finally, as $\eta_n(\tau^*_t)\to 0$ and $\eta_n(\sigma^*_t)\to0$ a.s. by the right-continuity of $\overline{S}$ and $\underline{S}$, the dominated convergence theorem for conditional expectations implies
	\begin{align*}
		S_t+\varepsilon\geq \limsup\limits_{\substack{n\to\infty\\ n\geq m}}S^n_t\geq \liminf\limits_{\substack{n\to\infty\\ n\geq m}}S^n_t\geq S_t -\varepsilon \quad \PM\mathrm{-a.s.},
	\end{align*}
	which is the assertion as $\varepsilon>0$ is arbitrary.
\end{proof}
In the following, we will consider the discrete-time Doob-decomposition of the processes $(S^n)_{n\in\mathbb{N}}$, i.e.,
we write $S^n_t=S^n_0+M^n_t+A^n_t$ with
\begin{align}
	\label{eq:DMA}
	A^n_t&:=\sum_{t_i\in D_n, 0<t_i\leq t}\E\left[S^n_{t_i}-S^n_{t_{i-1}}\mid \F_{t_{i-1}}\right],\\
	\label{eq:DMM}M^n_t&:=\sum_{t_i\in D_n, 0<t_i\leq t}\left(S^n_{t_i}-S^n_{t_{i-1}}-\E\left[S^n_{t_i}-S^n_{t_{i-1}}\mid \F_{t_{i-1}}\right]\right)
\end{align} 
for $t\in D_n$. In particular, we have (with a slight abuse of notation) 
\begin{align}\label{eq:discreteMV}
	MV(S^n,D_n):=\E\left[\sum_{t_i\in D_n}\left\vert \E\left[S^n_{t_{i+1}}-S^n_{t_i}\mid \F_{t_i}\right]\right\vert\right]=\E\left[\sum_{t_i\in D_n}\left\vert A^n_{t_{i+1}}-A^n_{t_i}\right\vert\right].
\end{align}
The following observation is at the core of why our approach works.
\begin{lemma}\label{lemma:MakingItWork}
	Let $n\in\mathbb{N}$ and $t=0,1/2^n,\dots, (2^n-1)/2^n$. Then, we have 
	\begin{align*}
		\{A^n_{t+1/2^n}-A^n_{t}>0\}&\subseteq\{S^n_t=\overline{S}_t\},\\
		\{A^n_{t+1/2^n}-A^n_{t}<0\}&\subseteq\{S^n_t=\underline{S}_t\}.
	\end{align*}
\end{lemma}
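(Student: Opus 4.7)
The plan is to unwind the Doob-type definition of $A^n$ and apply the elementary behaviour of the median function in the recursive definition \eqref{eq:RecursiveDefinition}. First, from \eqref{eq:DMA} I would observe that
\begin{align*}
A^n_{t+1/2^n}-A^n_{t}=\E\!\left[S^n_{t+1/2^n}-S^n_t\mid\F_t\right]=\E\!\left[S^n_{t+1/2^n}\mid\F_t\right]-S^n_t,
\end{align*}
so that the sign of the drift increment is governed entirely by the difference between the conditional expectation $C_t:=\E[S^n_{t+1/2^n}\mid\F_t]$ and $S^n_t$ itself.

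Next, I would recall from \eqref{eq:RecursiveDefinition} that $S^n_t=\medi(\underline{S}_t,\overline{S}_t,C_t)$ and use the case analysis for the median of three real numbers, exploiting that $\underline{S}_t\le\overline{S}_t$ by Assumption~\ref{ass:ProcessProperties}. Pathwise, exactly one of three alternatives occurs:
\begin{align*}
(\mathrm{a})\ C_t\le\underline{S}_t &\Longrightarrow S^n_t=\underline{S}_t\text{ and }C_t-S^n_t\le 0,\\
(\mathrm{b})\ \underline{S}_t\le C_t\le\overline{S}_t &\Longrightarrow S^n_t=C_t\text{ and }C_t-S^n_t=0,\\
(\mathrm{c})\ C_t\ge\overline{S}_t &\Longrightarrow S^n_t=\overline{S}_t\text{ and }C_t-S^n_t\ge 0.
\end{align*}

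From the above, on $\{A^n_{t+1/2^n}-A^n_t>0\}=\{C_t-S^n_t>0\}$ neither (a) nor (b) can hold, hence (c) must hold, which forces $S^n_t=\overline{S}_t$; this is the first inclusion. Symmetrically, on $\{A^n_{t+1/2^n}-A^n_t<0\}$ only (a) is compatible, giving $S^n_t=\underline{S}_t$. The only subtle point is the measurable bookkeeping: the median in \eqref{eq:RecursiveDefinition} is taken pointwise in $\omega$, so the above pathwise dichotomy translates directly into the claimed set inclusions (modulo the usual $\PM$-null set arising from the conditional expectation). There is no real obstacle here; the lemma is essentially a statement about the projection onto the interval $[\underline{S}_t,\overline{S}_t]$, and the key structural content is precisely that the drift of $S^n$ is pushed upward only when $S^n$ gets pinned at the ask and downward only when it gets pinned at the bid---a fact the rest of the argument will exploit to earn this drift by trading in the bid-ask market.
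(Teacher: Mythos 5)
Your argument is correct and is essentially identical to the paper's own proof: both observe that $A^n_{t+1/2^n}-A^n_t=\E[S^n_{t+1/2^n}\mid\F_t]-S^n_t$ from \eqref{eq:DMA} and then read off the sign from the median characterization \eqref{eq:RecursiveDefinition}; you simply spell out the three-case dichotomy that the paper leaves implicit.
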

\begin{proof}
	From definition~\eqref{eq:DMA} we get $\E\left[S^n_{t+1/2^n}\mid\F_t\right]-S^n_t=A^n_{t+1/2^n}-A^n_t,$
	which together with  $S^n_t=\min(\overline{S}_t,\max( \underline{S}_t, \mathbb{E}[S^n_{t+1/2^n}\mid \mathcal{F}_t]))$ yields the assertion. 
\end{proof}

We now start to establish a uniform bound on \eqref{eq:discreteMV} (after some stopping). 
\begin{lemma}\label{Lemma:L0-Bound-Martingale}
	Let Assumption~\ref{ass:bounded} hold. Then, the set $$\left\{\sup_{t\in D_n}\vert M^n_t\vert: n\in\mathbb{N} \right\}$$ is bounded in probability.
\end{lemma}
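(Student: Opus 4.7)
The plan is proof by contradiction: assume $\{\sup_{t\in D_n}|M^n_t|\}_n$ is not bounded in probability and produce a sequence of admissible long-only simple strategies whose liquidation values at time $1$ are unbounded in probability, contradicting (NUPBR). As a first reduction, since $0\le S^n\le 1$ and $M^n=S^n-S^n_0-A^n$ one has $|M^n_t|\le 2+|A^n_t|$, so $\sup_t|A^n_t|$ is not bounded in probability either. Writing $A^n=A^{n,+}-A^{n,-}$ with $A^{n,\pm}$ the increasing parts in the Jordan decomposition (i.e.\ $\Delta A^{n,\pm}_{t_i}=(\Delta A^n_{t_i})^\pm$) and using $\sup_t|A^n_t|\le A^{n,+}_1+A^{n,-}_1$, after passing to a subsequence I may assume WLOG that $A^{n,+}_1$ is not bounded in probability.

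Second, I would build a drift-capturing long-only strategy. By Lemma~\ref{lemma:MakingItWork}, $\ov{S}=S^n$ at every grid-time where $A^n$ is about to increase and $\un{S}=S^n$ where it is about to decrease. Define recursively $T^1_{n,+}:=\inf\{t_i\in D_n:A^n_{t_i+1/2^n}-A^n_{t_i}>0\}$, and for $k\ge 1$ $T^k_{n,-}:=\inf\{t_i>T^k_{n,+},t_i\in D_n:A^n_{t_i+1/2^n}-A^n_{t_i}<0\}\wedge 1$ and $T^{k+1}_{n,+}:=\inf\{t_i>T^k_{n,-},t_i\in D_n:A^n_{t_i+1/2^n}-A^n_{t_i}>0\}$; set $\varphi^n:=\sum_k\mathbbm{1}_{\rrbracket T^k_{n,+},T^k_{n,-}\rrbracket}$. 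By Lemma~\ref{lemma:MakingItWork} together with $\un{S}_1=S^n_1$, every buy and sell of $\varphi^n$ executes at the Dynkin value $S^n$, and direct telescoping yields
\[\Vl_1(\varphi^n)=\sum_k\bigl(S^n_{T^k_{n,-}}-S^n_{T^k_{n,+}}\bigr)=A^{n,+}_1+\mathcal{M}^n,\]
where $\mathcal{M}^n:=\sum_k(M^n_{T^k_{n,-}}-M^n_{T^k_{n,+}})$ is a sum of martingale increments over disjoint predictable intervals and hence of mean zero.

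Third, I would enforce admissibility. Since a single trade moves the bond balance of $\varphi^n$ by at most $1$, stopping $\varphi^n$ at the first grid-time $\tau^n$ at which that balance drops below $-1$ yields $\psi^n:=\varphi^n\mathbbm{1}_{\llbracket 0,\tau^n\rrbracket}$ with $\Vl_\cdot(\psi^n)\ge -2$, so $\psi^n/2$ is admissible in the sense of Definition~\ref{def:UPBR}. Optional sampling on $\mathcal{M}^n$ then gives $\E[\Vl_1(\psi^n)]=\E[A^{n,+}_{\tau^n}]\ge 0$.

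The main obstacle---precisely the point where ``the brilliant idea in Lemma~4.7 of~\cite{delbaen1994general}'' alluded to in the introduction enters---is to deduce that $\Vl_1(\psi^n)$ is itself unbounded in probability: on $\{\tau^n=1\}$ the full $A^{n,+}_1$ is captured and we are done by the first paragraph, but on $\{\tau^n<1\}$ a large negative excursion of $\mathcal{M}^n$ has cancelled the drift. Such excursions encode large oscillations of $M^n$, and one must run a parallel Delbaen-Schachermayer-style oscillation stopping-time argument, again exploiting Lemma~\ref{lemma:MakingItWork} to trade frictionlessly at drift-sign points, in order to compound an auxiliary long-only strategy that converts those oscillations into additional profit. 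Adapting Lemma~4.7 of~\cite{delbaen1994general} to a long-only bid-ask setting, where only one of the two trading directions is admissible, is the genuinely new and most delicate technical point of the proof.
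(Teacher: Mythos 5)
Your overall framework matches the paper's: prove by contradiction, use Lemma~\ref{lemma:MakingItWork} to observe that buying at a grid-time where $\E[S^n_{t+1/2^n}\mid\F_t]-S^n_t>0$ executes at $\ov{S}_t=S^n_t$ (and similarly selling executes at $\un{S}_t=S^n_t$), build the drift-capturing long-only strategy between sign changes of the $A^n$-increments, and arrive at the telescoping identity $\Vl_1(\varphi^n)=A^{n,+}_1+\mathcal{M}^n$. All of this is correct and is precisely what the paper does (Step 2 of its proof). However, there is a genuine gap exactly where you flag the ``main obstacle.'' The step you label ``adapt Lemma~4.7 of~\cite{delbaen1994general}'' is not a detail but the heart of the entire proof, and neither of your two proposed control mechanisms does the job. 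Stopping at the first time the bond balance drops below $-1$ does not separate drift from martingale noise: conditional on $\{\tau^n<1\}$, you have captured essentially nothing, and nothing in the argument ensures $P(\tau^n=1)$ stays bounded away from zero. Likewise, your reduction ``WLOG $A^{n,+}_1$ is unbounded in probability'' does not follow from the triangle inequality and passing to a subsequence; the decreasing Jordan part $A^{n,-}_1$ could a priori be the unbounded one, in which case no long-only strategy captures it. Proving that the increasing part $A^{n,\uparrow}$ must accumulate is itself the content of the Delbaen--Schachermayer oscillation argument, so the WLOG is circular.

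What the paper does instead is quantitative from the start: it rescales $S^n$ by $n^{-2}$ (so one-step jumps of $\widetilde S^n$ are $\le n^{-2}$), defines stopping times $T_{n,i}$ at which the rescaled martingale $\widetilde M^n$ has moved by $1$, and notes that on the bad event there are $k_n\approx n/3$ such oscillations. The martingale property gives $\E[(\widetilde M^n_{T_{n,i}\wedge1}-\widetilde M^n_{T_{n,i-1}\wedge1})^-]\ge \tfrac12 P(T_{n,i}<\infty)>5\alpha$, and since $|\widetilde S^n_{T_{n,i}}-\widetilde S^n_{T_{n,i-1}}|\le n^{-2}$, a sizeable downward martingale move forces an equally sizeable upward move of $\widetilde A^{n,\uparrow}$ over the same interval; this is how positivity of the increasing Jordan part is \emph{derived}, not assumed. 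Then one pairs drift and martingale scalings over the same $T_{n,i}$ intervals: the drift gain compounds like $\alpha k_n$ (via a Koml\'os-type lemma, Corollary~A1.3 in~\cite{delbaen1994general}), while the martingale part has $L^2$-norm $O(\sqrt{k_n})$ by orthogonality of increments and Doob's inequality. Rescaling the strategy by $k_n^{-3/4}$ then makes the uniform lower bound tend to $0$ while the drift part $k_n^{-3/4}\cdot\alpha k_n\to\infty$, producing the (UPBR). Without this $L^2$-vs-$L^1$ scale separation between drift and martingale, run over the martingale-oscillation stopping times rather than a bond-balance threshold, the contradiction does not close.
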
 
\begin{proof}
	Before we begin, we roughly sketch the idea of the proof. If $\{\sup_{t\in D_n}\vert M^n_t\vert: n\in\mathbb{N}\}$ failed to be bounded in probability, the same would hold in some sense for the sequence $(A^n)_{n\in\mathbb{N}}$. Indeed, this is a consequence of $S^n=S^n_0 + M^n+A^n$ and the fact that $|S^n|\le 1$. Keeping Lemma~\ref{lemma:MakingItWork} in mind, we show that by suitable long-only investments in the bid-ask market, one can earn the increasing parts of $A^n$ without suffering from the decreasing parts. In doing so, 
	we would achieve an (UPBR) since
	the gains from $A^n$ are of a higher order than the potential losses from the martingale part~$M^n$.   
	The proof of the latter relies on the brilliant ideas of Delbaen and Schachermayer~\cite[Lemma 4.7]{delbaen1994general}, which we adapt to the present setting. The present setting is easier than in \cite[Lemma 4.7]{delbaen1994general} since the jumps of $S^n$ are uniformly bounded. 
	
	\textit{Step 1:}  Assume that the claim does not hold true, i.e., there is a subsequence $(\sup_{t\in D_{m_n}}\vert M^{m_n}_t\vert)_{n\in\mathbb{N}}$ and $\alpha\in (0,1/10)$ s.t. \begin{align*}
		\mathbb{P}(\sup_{t\in D_{m_n}}\vert M^{m_n}_t\vert \geq n^3)>10\alpha, \quad n\in\mathbb{N}.
	\end{align*}
	In the following, we write $(\sup_{t\in D_{n}}\vert M^{n}_t\vert)_{n\in\mathbb{N}}$ instead of $(\sup_{t\in D_{m_n}}\vert M^{m_n}_t\vert)_{n\in\mathbb{N}}$ in order to simplify the notation. For this, it is important to note that from now on, we do not use properties of $M^n$ that do not hold for $M^{m_n}$.
	Let $T_n:=\inf\{t\in D_n: \vert M^n_t\vert \geq n^3\}$ and define the process $(\wt{S}^n_t)_{t\in D_n}$ by $\wt{S}^n_t:=\frac{1}{n^2} S^n_{t\wedge T_n}$. Note that the (discrete-time) Doob decomposition of $\wt{S}^n$ is given by 
	\[\widetilde{S}^n_t=\wt{S}^n_0+\wt{M}^n_t+\wt{A}^n_t=\frac{1}{n^2}S_0^n+\frac{1}{n^2}M^n_{t\wedge T_n}+\frac{1}{n^2}A^n_{t\wedge T_n},\quad t\in D_n,\]
	where $(\wt{M}^n_t)_{t\in D_n}=(\frac{1}{n^2}M^n_{t\wedge T_n})_{t\in D_n}$ is the martingale part and $(\wt{A}^n_t)_{t\in D_n}=(\frac{1}{n^2}A^n_{t\wedge T_n})_{t\in D_n}$ the predictable part. In addition, we have 
	\begin{align}
		\label{eq:PropertyM1}
		\PM(\sup_{t\in D_n}\vert \wt{M}^n_t\vert \geq n)>10\alpha,\quad 
		\vert \wt{S}^n_t-\wt{S}^n_{t-1/2^n}\vert\leq \frac{1}{n^2},\ t\in D_n.
	\end{align} 
	Next, we define $T_{n,0}:=0$ and, recursively, \[T_{n,i}:=\inf\{t\geq T_{n,i-1}: t\in D_n,\ \vert \wt{M}^n_{t}-\wt{M}^n_{T_{n,i-1}}\vert\geq 1 \},\quad i\in\bbn.\] 
	Since $|A^n_{t}-A^n_{t-1/2^n}|\le 1$ and thus
	\beam\label{25.12.2019.1}
	|M^n_t-M^n_{t-1/2^n}|\le |S^n_{t}-S^n_{t-1/2^n}|	+ |A^n_t-A^n_{t-1/2^n}|\le 2
	\eeam
	for all $t\in D_n\setminus\{0\}$, we get  
	\begin{align}\label{eq:MartingalIncrement}
		|\wt{M}^n_{T_{n,i}\wedge 1}-\wt{M}^n_{T_{n,i-1}\wedge 1}|\leq 1+ |\wt{M}^n_{T_{n,i}\wedge 1}-\wt{M}^n_{(T_{n,i}-1/2^n)\wedge 1}|\leq 1+2/n^2\leq 3,
	\end{align}
	for all $n,i\in\bbn$. \eqref{eq:MartingalIncrement} implies
	\begin{align}\label{eq:finitenessTni1}
		\PM(T_{n,i}<\infty)>10\alpha\quad\text{for}\ n\in\bbn\quad\mbox{and}\quad i=0,\dots, k_n,
	\end{align}
	where $k_n:=\lfloor (n-1)/3\rfloor$ denotes the integer part of $(n-1)/3$. 
	
	Next, we establish a lower bound in $L^0(\PM)$ on $(\wt{M}^n_{T_{n,i}\wedge 1}-\wt{M}^n_{T_{n,i-1}\wedge 1})^-$ for $i=1,\dots, k_n$. The martingale property of $\wt{M}^n$  together with \eqref{eq:finitenessTni1} implies
	\begin{align*}
		\E\left[(\wt{M}^n_{T_{n,i}\wedge 1}-\wt{M}^n_{T_{n,i-1}\wedge 1})^-\right]=\frac{1}{2}\E\left[\vert\wt{M}^n_{T_{n,i}\wedge 1}-\wt{M}^n_{T_{n,i-1}\wedge 1}\vert\right]\geq \frac{1}{2}\mathbb{P}(T_{n,i}<\infty)>5\alpha.
	\end{align*}
	For $B_{n,i}:=\{(\wt{M}^n_{T_{n,i}\wedge 1}-\wt{M}^n_{T_{n,i-1}\wedge 1})^-\geq 2\alpha\}$, we get
	\beao
	\E\left[(\wt{M}^n_{T_{n,i}\wedge 1}-\wt{M}^n_{T_{n,i-1}\wedge 1})^-\mathbbm{1}_{B_{n,i}}\right]
	\ge \E\left[(\wt{M}^n_{T_{n,i}\wedge 1}-\wt{M}^n_{T_{n,i-1}\wedge 1})^-\right] - 2\alpha
	>3\alpha
	\eeao
	and thus by (\ref{eq:MartingalIncrement})
	\begin{align}\label{eq:MartingalpartLowerEstimate}
		\PM\left(B_{n,i}\right)> \alpha \quad\text{for}\ n\in\bbn\quad\mbox{and}\quad i=0,\dots, k_n.
	\end{align} 
	
	We now turn our attention to the increments $(\wt{A}^n_{T_{n,i}\wedge 1}-\wt{A}^n_{T_{n,i-1}\wedge 1})_{i=1,\dots, k_n}$ for $n\in\bbn$. 
	Since $|\wt{S}^n_{T_{n,i}\wedge 1}-\wt{S}^n_{T_{n,i-1}\wedge 1}|\le 1/n^2$, \eqref{eq:MartingalpartLowerEstimate} implies
	\begin{align*}
		\PM\left(\wt{A}^{n}_{T_{n,i}\wedge 1}-\wt{A}^n_{T_{n,i-1}\wedge 1}\geq \alpha \right)\geq 
		\PM\left(\wt{A}^{n}_{T_{n,i}\wedge 1}-\wt{A}^n_{T_{n,i-1}\wedge 1}\geq 2\alpha -\frac1{n^2}\right)\geq \PM\left(B_{n,i}\right)> \alpha
	\end{align*}
	for all $n\ge \sqrt{\alpha}$ and $i=1,\dots, k_n$.  In particular, if we define $(\wt{A}^{n,\uparrow}_t)_{t\in D_n}$ by 
	\[\wt{A}^{n,\uparrow}_t:=\sum_{t_i\in D_n, 0<t_i\leq t}
	(\wt{A}^{n}_{t_i}-\wt{A}^{n}_{t_{i-1}})^+,\quad t\in D_n, \]
	we also get 
	\begin{align}\label{eq:EstimateOnAProfit}
		\PM\left(\wt{A}^{n,\uparrow}_{T_{n,i}\wedge 1}-\wt{A}^{n,\uparrow}_{T_{n,i-1}\wedge 1}\geq \alpha\right)> \alpha
	\end{align}
	for all  all $n\ge 1/\sqrt{\alpha}$ and $i=1,\dots, k_n$.
	
	\textit{Step 2:} In the second part of the proof, we construct an (UPBR) by placing smart bets on the process $(\wt{A}^{n,\uparrow}_t)_{t\in D_n}$. This is similar to the second part of \cite[Lemma 4.7]{delbaen1994general} with the major difference that 
	we cannot invest directly into $S^n$. We define two sequences of $D_n\cup\{\infty\}$-valued stopping times $(\sigma^n_k)_{k=1}^{2^n}$ and $(\tau^n_k)_{k=1}^{2^n}$ by
	\begin{align*}
		\sigma^n_1:=\inf\{t\in D_n\mid A^n_{t+1/2^n}-A^n_t>0\}, \quad \tau^n_1:=\inf\{t>\sigma^n_1\mid t\in D_n,\ A^n_{t+1/2^n}-A^n_{t}<0\},
	\end{align*}  
	and, recursively,
	\begin{align*}
		\sigma^n_k&:=\inf\{t>\tau^n_{k-1}\mid t\in D_n,\ A^n_{t+1/2^n}-A^n_t>0\}, \\ \tau^n_k&:=\inf\{t>\sigma^n_k\mid t\in D_n,\ A^n_{t+1/2^n}-A^n_{t}<0\}
	\end{align*}
	for $k=2,3,\dots, 2^n$. Next, define a sequence of simple trading strategies $(\varphi^{n})_{n\in\mathbb{N}}$ by
	\begin{align*}
		\varphi^{n}:=\left(\sum_{k=1}^{2^n}\frac{1}{n^2}\mathbbm{1}_{\rrbracket\sigma^n_k,\tau^n_k\rrbracket}\right)\mathbbm{1}_{\rrbracket 0, T_{n,k_n}\rrbracket}.
	\end{align*} 
	By Lemma~\ref{lemma:MakingItWork}, the strategies~$\vp^n$ only buy if $S^n_t=\ov{S}_t$ and sell if $S^n_t=\un{S}_t$, despite of a possible liquidation at  $T_{n,k_n}$
	Together with $S^n_{t_i}-\un{S}_t\leq 1$ for all $t_i\in D_n$, $t\in[0,1]$, this implies that $\Vl(\varphi^n)$ can be bounded from below by 
	\begin{align}\nonumber
		\Vl_t(\varphi^{n})&\geq \sum_{t_i\in D_n, 0<t_i\leq t}\varphi^{n}_{t_i}(S^n_{t_{i}}-S^n_{t_{i-1}})-\frac{1}{n^2}\\\nonumber
		&=\wt{A}^{n,\uparrow}_{\lfloor 2^nt\rfloor /2^n\wedge T_{n,k_n}}+\sum_{t_i\in D_n, 0<t_i\leq t}\varphi^{n}_{t_i}(M^n_{t_{i}}-M^n_{t_{i-1}})-\frac{1}{n^2}\\ \nonumber
		&\geq \sum_{t_i\in D_n, 0<t_i\leq t}\varphi^{n}_{t_i}(M^n_{t_{i}}-M^n_{t_{i-1}})-\frac{1}{n^2}\\ \label{eq:profits1}
		&=\sum_{t_i\in D_n, 0<t_i\leq t}(n^2\varphi^{n}_{t_i})(\wt{M}^n_{t_{i}}-\wt{M}^n_{t_{i-1}})-\frac{1}{n^2},\quad t\in[0,1].
	\end{align} 
	This means that the strategy allows us to invest in $\wt{A}^{n,\uparrow}$, but we still do not know if it actually allows for an (UPBR) as we need to get some control on the martingale part in \eqref{eq:profits1}. Therefore notice that 
	\begin{align}\label{eq:AdjustedStrategyBound}
		& &	\left\Vert \sum_{t_i\in D_n, 0<t_i\leq T_{n,k_n}}(n^2\varphi^{n}_{t_i})(\wt{M}^n_{t_{i}}-\wt{M}^n_{t_{i-1}})\right\Vert_{L^2(\PM)}\leq \left\Vert \widetilde{M}_{T_{n,k_n}\wedge 1}\right\Vert_{L^2(\PM)}\nonumber\\
		& & \le	
		\sqrt{\sum_{i=1}^{k_n}\left\Vert\wt{M}^n_{T_{n,i}\wedge 1}-\wt{M}^n_{T_{n,i-1}\wedge 1}\right\Vert_{L^2(\PM)}^2}\leq  3\sqrt{k_n}.
	\end{align}
	Thus, Doob's maximal inequality yields
	\begin{align}\label{eq:MartingalePartZero}
		\left\Vert \sup_{t\in D_n,\ t\le T_{n,k_n}}\left\vert\sum_{t_i\in D_n, 0<t_i\leq t}(n^2\varphi^{n}_{t_i})(\wt{M}^n_{t_{i}}-\wt{M}^n_{t_{i-1}})\right\vert\right\Vert_{L^2(\PM)}\leq 6\sqrt{k_n}.
	\end{align}
	Consequently, we get the estimate
	\begin{align}\nonumber
		&\PM\left(\inf_{t\in[0,T_{n,k_n}\wedge 1]}\Vl_t\left(\varphi^{n}\right)\leq -k_n^{3/4}n^{-1/8}-n^{-2}\right)\\ \label{eq:lowerestimate}&\leq\PM\left(\sup_{t\in D_n,\ t\le T_{n,k_n}}\left\vert\sum_{t_i\in D_n, 0<t_i\leq t}(n^2\varphi^n_{t_i})(\wt{M}^n_{t_{i}}-\wt{M}^n_{t_{i-1}})\right\vert\geq k_n^{3/4}n^{-1/8}\right)\leq \frac{36n^{1/4}}{\sqrt{k_n}}
	\end{align}
	by Tschebyscheff's inequality. Thus, let us define the stopping times \[U_n:=\inf\{t\geq 0\ :\ \Vl_t(\varphi^{n})\leq -k_n^{3/4}n^{-1/8} -n^{-2} \}\wedge T_{n,k_n},\]
	which satisfy $\PM\left(U_n<T_{n,k_n}\right)\leq 36n^{1/4}/\sqrt{k_n}$. We now pass to the strategy \[\widetilde{\varphi}^{n}:=(k_n)^{-3/4}\varphi^{n}\mathbbm{1}_{\rrbracket0,U_n\rrbracket}.\] The left and right jumps of $\Vl(\wt{\varphi}^{n})$ are bounded from below by $-k_n^{-3/4}n^{-2}$, which is a direct consequence of $
	0\le \underline{S}\le \ov{S}\le 1$. We obtain \begin{align}\label{eq:UniformBoundBelow}
		\inf_{t\in[0,T_{n,k_n}\wedge 1]}\Vl_t(\wt{\varphi}^{n})\geq 
		-n^{-1/8}-2k_n^{-3/4}n^{-2}\to 0,\quad \text{for}\ n\to\infty.
	\end{align}
	It remains to show (\ref{eq:DefUPBR}). First notice that using \eqref{eq:EstimateOnAProfit} in conjunction with \cite[Corollary A1.3]{delbaen1994general}, yields
	\begin{align*}
		\PM\left(\wt{A}^{n,\uparrow}_{T_{n,k_n}\wedge 1}\geq\frac{\alpha^2}2\right)>\frac{\alpha}2.
	\end{align*}
	It follows that 
	\begin{align}\label{24.12.2019.1}
		\begin{aligned}
			\PM\left((k_n)^{-3/4}\wt{A}^{n,\uparrow}_{T_{n,k_n}\wedge \frac{\lfloor 2^nU_n\rfloor}{2^n}\wedge 1}\geq k_n^{1/4}\frac{\alpha^2}2\right)&>\frac{\alpha}2
			-\PM\left(U_n<T_{n,k_n}\right)
			\\ & \geq \frac{\alpha}2-\frac{36n^{1/4}}{\sqrt{k_n}}.
		\end{aligned}
	\end{align}
	Putting (\ref{eq:profits1}), (\ref{eq:lowerestimate}), (\ref{eq:UniformBoundBelow}), and (\ref{24.12.2019.1}) together yields that $(\wt{\vp}^n)_{n\in\bbn}$ is an (UPBR).
\end{proof}
\begin{lemma}\label{lemma:Stopping Times}
	Let Assumption~\ref{ass:bounded} hold. 
	For each $\varepsilon>0$, there exists a constant $C>0$ and a sequence of $D_n\cup\{\infty\}$-valued stopping times $(\tau_n)_{n\in\mathbb{N}}$ s.t. $\PM(\tau_n<\infty)<\varepsilon$ and the stopped processes $S^{n,\tau_n}=(S^n_{t\wedge \tau_n})_{t\in D_n}$, $A^{n,\tau_n}=(A_{t\wedge \tau_n})_{t\in D_n}$ satisfy
	\begin{align}
		\sum_{t_i\in D_n}\left\vert A^{n,\tau_n}_{t_{i+1}}-A^{n,\tau_n}_{t_i}\right\vert&\leq C\\
		\text{and, consequently,}\quad
		MV(S^{n,\tau_n},D_n)=\E\left[\sum_{t_i\in D_n}\left\vert A^{n,\tau_n}_{t_{i+1}}-A^{n,\tau_n}_{t_i}\right\vert\right]&\leq C.
	\end{align}
\end{lemma}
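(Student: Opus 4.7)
My plan is to first show that the pathwise variation $V^n := A^{n,\uparrow}_1 + A^{n,\downarrow}_1$ is bounded in probability uniformly in $n$, where $A^{n,\uparrow}_t := \sum_{t_i \in D_n,\, t_i\le t}(A^n_{t_i} - A^n_{t_{i-1}})^+$ and $A^{n,\downarrow}_t$ is defined analogously with negative parts, and then cut off at the first time this variation crosses a suitable deterministic level. Given $\eps > 0$, I would pick $C>0$ with $\PM(V^n > C)< \eps$ for all $n$ and set $\tau_n:= \inf\{t\in D_n : A^{n,\uparrow}_t + A^{n,\downarrow}_t > C\}$. Since $|A^n_{t_i}-A^n_{t_{i-1}}| = |\E[S^n_{t_i}-S^n_{t_{i-1}}\mid\F_{t_{i-1}}]| \le 1$, the stopped variation overshoots $C$ by at most one unit, yielding the pathwise bound $\sum_{t_i}|A^{n,\tau_n}_{t_{i+1}}-A^{n,\tau_n}_{t_i}| \le C+1$, hence also $MV(S^{n,\tau_n}, D_n)\le C+1$.

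Lemma~\ref{Lemma:L0-Bound-Martingale} combined with $0\le S^n\le 1$ gives $|A^n_1| \le 1 + |M^n_1|$ bounded in probability. Via the identity $V^n = 2A^{n,\uparrow}_1 - A^n_1$, it suffices therefore to show that $(A^{n,\uparrow}_1)_n$ is bounded in probability, which I would argue by contradiction along the lines of Step~2 of the proof of Lemma~\ref{Lemma:L0-Bound-Martingale}. Suppose there exist $\alpha \in (0,1)$ and $c_n\to\infty$ (after passing to a subsequence) with $\PM(A^{n,\uparrow}_1 \ge c_n^2)>\alpha$. Define stopping times $\sigma^n_k, \tau^n_k$ bracketing the monotone excursions of $A^n$, exactly as in that proof, and consider the simple long-only strategy $\psi^n := c_n^{-1}\sum_k \mathbbm{1}_{\rrbracket\sigma^n_k,\tau^n_k\rrbracket}$.

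By Lemma~\ref{lemma:MakingItWork}, $\psi^n$ purchases only at times where $S^n=\ov{S}$ and liquidates only at times where $S^n=\un{S}$, so buys and sells are effectively executed at the reference price $S^n$ with no transaction-cost penalty measured against $S^n$. The Doob decomposition $S^n = S^n_0 + M^n+ A^n$ then yields
\[
\Vl_t(\psi^n) \;\ge\; c_n^{-1}A^{n,\uparrow}_t + c_n^{-1}\!\!\sum_{t_i\in D_n,\,t_i\le t}\!\!\mathbbm{1}_{\rrbracket\sigma^n_\bullet,\tau^n_\bullet\rrbracket}(t_i)(M^n_{t_i}-M^n_{t_{i-1}}) - c_n^{-1}.
\]
The martingale integral on the right is dominated by $2c_n^{-1}\sup_{t\in D_n}|M^n_t|$, which tends to $0$ in probability by Lemma~\ref{Lemma:L0-Bound-Martingale}. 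Stopping $\psi^n$ the first time this integral falls below $-1$ only cuts off an event of probability $o(1)$ while enforcing admissibility, and the resulting stopped strategy satisfies $\Vl_1 \ge c_n/2$ with probability at least $\alpha/2$ for large $n$, contradicting (NUPBR).

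The main obstacle will be exactly this last step: stopping $\psi^n$ so that the admissibility constraint $\Vl\ge -1$ is respected without destroying the lower bound $c_n^{-1}A^{n,\uparrow}_1$ on the terminal gain. This is the same delicate $L^2$-Doob-inequality bookkeeping already performed in Step~2 of Lemma~\ref{Lemma:L0-Bound-Martingale} (with $k_n$ there replaced by $c_n^2$), and I expect only minor modifications of that argument to be required.
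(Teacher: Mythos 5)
Your overall strategy — establish that the total variation of $A^n$ is bounded in probability, then stop pathwise at a deterministic level $C$ with overshoot at most one by (\ref{25.12.2019.1}), and obtain the unboundedness claim by exhibiting an (UPBR) from the increasing part of $A^n$ — is exactly the paper's strategy. The reduction via $V^n = 2A^{n,\uparrow}_1 - A^n_1$ together with $|A^n_1|\le 1+|M^n_1|$ is also sound. But there is a genuine error in the control of the martingale term.

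You claim the martingale transform $\sum_{t_i}\psi^n_{t_i}(M^n_{t_i}-M^n_{t_{i-1}})$ is pathwise dominated by $2c_n^{-1}\sup_{t\in D_n}|M^n_t|$. This is false. The sum telescopes to $c_n^{-1}(M^n_{\tau^n_k}-M^n_{\sigma^n_k})$ only over a \emph{single} on--off interval; since $\psi^n$ enters and exits the market many times, the transform $c_n^{-1}\sum_k(M^n_{\tau^n_k}-M^n_{\sigma^n_k})$ is not controlled by the running maximum of $M^n$ and can be much larger (this is precisely why martingale transforms are interesting). Lemma~\ref{Lemma:L0-Bound-Martingale} gives boundedness in probability of $\sup_t|M^n_t|$, not a pathwise bound on any transform of it, so the asserted convergence to zero in probability does not follow.

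The fix — which is what the paper does, and what your final paragraph gestures at but does not supply — is a two-stage stopping. First, use the boundedness in probability of $\sup_t|M^n_t|$ together with the uniform increment bound (\ref{25.12.2019.1}) to find a constant $K_1$ and stopping times $T_n$ with $\PM(T_n<\infty)$ small such that $(M^n)^{T_n}$ is a.s.\ bounded by $K_1$, hence $L^2$-bounded. Only then is Doob's maximal inequality applicable to bound $\sup_t|\sum_{t_i\le t}\varphi^n_{t_i}\Delta(M^n)^{T_n}_{t_i}|$ in $L^2$ by $2K_1$, after which a second stopping time $\tau_n$ controls the transform uniformly. Your parting remark that the Doob bookkeeping requires ``only minor modifications'' understates the issue: the pathwise-bound sentence is not a simplification of this two-stage argument, it is an incorrect replacement for it, and the first stop at $K_1$ is an essential ingredient you have omitted entirely.
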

\begin{proof}
	The idea of the proof is akin to the proofs of Proposition 3.1 and Lemma 3.4 in Beiglb{\"o}ck et al.~\cite{beiglbock2011direct}. Thus, we only give a sketch of the proof 
	and leave the details to the reader. We first claim that \begin{align}\label{eq:newnumber2}
		\left\{\sum_{t_i\in D_n}\left( A^{n}_{t_{i+1}}-A^{n}_{t_i}\right)^+: n\in\mathbb{N}\right\}
	\end{align} is bounded in probability.  
	We proceed by contraposition, i.e., we suppose otherwise and want to show that this leads to an (UPBR). Using Lemma~\ref{lemma:MakingItWork}, we can analogously to the previous proof construct a sequence of simple trading strategies $(\varphi^{n})_{n\in\mathbb{N}}$ with $0\leq \vp^n\leq 1$ s.t. $\vp^n$ invests in $\sum_{t_i\in D_n}\left( A^n_{t_{i+1}}-A^n_{t_i}\right)^+$ while only making potential losses in the martingale part $M^n$ and at liquidation. Indeed, similar as in step 2 of the proof of Lemma~\ref{Lemma:L0-Bound-Martingale}, it can be shown that the associated liquidation values can be bounded from below by
	\begin{align}\label{eq:newnumber1}
		\Vl_t(\varphi^{n})\geq \sum_{t_i\in D_n, 0<t_i\leq t}\left( A^n_{t_{i+1}}-A^n_{t_i}\right)^++ \sum_{t_i\in D_n, 0<t_i\leq t}\varphi^{n}_{t_i}\left(M^n_{t_{i}}-M^n_{t_{i-1}}\right) -1.
	\end{align}   
	By the previous Lemma~\ref{Lemma:L0-Bound-Martingale} and some stopping, there is no loss of generality by assuming that $(M^n)_{n\in\bbn}$ is uniformly bounded. Hence, 
	by Doob's maximal inequality, the pathwise maxima of the martingale parts in \eqref{eq:newnumber1} are bounded in $L^2$. Thus, by further stopping (cf. the arguments used in 
	Beiglb{\"o}ck et al.~\cite[page 2433, lines 11-15]{beiglbock2011direct}), we may assume that \eqref{eq:newnumber1} is uniformly bounded from below. 
	On the other hand, by assumption, the RHS of 
	\eqref{eq:newnumber1} is unbounded in probability from above. Thus, the (adjusted) strategies yield an (UPBR) with long-only strategies (after rescaling), and we arrive at a 
	contradiction. Consequently, \eqref{eq:newnumber2} has to be bounded in probability. Since the martingale parts are also bounded in probability by Lemma~\ref{Lemma:L0-Bound-Martingale}, the 
	same holds for $\left\{\sum_{t_i\in D_n}\left( A^{n}_{t_{i+1}}-A^{n}_{t_i}\right)^-: n\in\mathbb{N}\right\}$, and we are done.
\end{proof}

In order to finish the proof of Theorem~\ref{theo:ExistenceSimple}  we still need a couple of auxiliary results, which give us some more information about $MV(S^n,D_n)$ in comparison to $MV(S^m,D_m)$. 
Given a partition $\pi=\{0=t_0<t_1<\dots<t_n=1\}$ of $[0,1]$ and a stopping time $\tau$, we have the following notation $\pi(\tau):=\inf\{t\in \pi: t\geq \tau\}$.	Recall the following useful result from \cite{beiglbock2014riemann}.
\begin{lemma}[Lemma 3.2 of \cite{beiglbock2014riemann}] \label{lemma:Beigl1}
	Let Assumption~\ref{ass:bounded} hold. Then 
	\begin{align*}
		\mathrm{MV}(S^{\pi(\tau)},\pi)=\E\left[\sum_{t_i\in\pi}\mathbbm{1}_{\{t_i<\tau\}}\left\vert\E\left[S_{t_{i+1}}-S_{t_i}\mid \F_{t_i}\right]\right\vert\right]
	\end{align*}
	and $\left\vert \mathrm{MV}(S^{\pi(\tau)},\pi)-\mathrm{MV}(S^\tau,\pi)\right\vert\leq 1$. 
\end{lemma}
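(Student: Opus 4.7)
The plan is to derive both claims directly from the definition of the mean variation: the first by unwrapping the stopping operation, the second by a term-by-term comparison of the increments of $S^{\pi(\tau)}$ and $S^\tau$.

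For the identity, I would first observe that $\pi(\tau)$ takes values in $\pi$, so for every grid point $t_i \in \pi$ we have $t_i < \pi(\tau)$ iff $t_i < \tau$ (a grid point is strictly below the smallest grid point $\geq \tau$ exactly when it is strictly below $\tau$). Consequently the $i$-th increment of the stopped process satisfies
\[
S^{\pi(\tau)}_{t_{i+1}} - S^{\pi(\tau)}_{t_i} = \mathbbm{1}_{\{t_i < \tau\}}\,(S_{t_{i+1}} - S_{t_i}),
\]
since on $\{t_i \geq \tau\}$ both endpoints already sit at the stopping level $S_{\pi(\tau)}$, while on $\{t_i < \tau\}$ both $t_i$ and $t_{i+1}$ lie on or below $\pi(\tau)$ and no stopping occurs on $(t_i,t_{i+1}]$. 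Because $\{t_i < \tau\} \in \F_{t_i}$, the indicator pulls out of the conditional expectation and the displayed identity follows.

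For the inequality, the plan is to apply $\bigl||a|-|b|\bigr| \leq |a-b|$ term by term together with Jensen for conditional expectations to obtain
\[
\bigl|\mathrm{MV}(S^{\pi(\tau)},\pi) - \mathrm{MV}(S^\tau,\pi)\bigr| \leq \sum_{t_i \in \pi} \E\Bigl[\bigl|(S^{\pi(\tau)}_{t_{i+1}} - S^{\pi(\tau)}_{t_i}) - (S^\tau_{t_{i+1}} - S^\tau_{t_i})\bigr|\Bigr].
\]
A short case analysis on the position of $\tau$ relative to $(t_i,t_{i+1}]$ shows that the $i$-th integrand vanishes unless $\tau \in (t_i, t_{i+1}]$, in which case the two stopped increments are $S_{t_{i+1}}-S_{t_i}$ and $S_\tau - S_{t_i}$, so their difference has absolute value $|S_{t_{i+1}} - S_\tau| = |S_{\pi(\tau)} - S_\tau|$. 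For each $\omega$ at most one index $i$ contributes, so the whole sum collapses to $\E[|S_{\pi(\tau)} - S_\tau|\,\mathbbm{1}_{\{\tau \leq 1\}}]$, which is bounded by $1$ because $0 \leq S \leq 1$ under Assumption~\ref{ass:bounded}.

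No real obstacle is expected, as the proof is essentially bookkeeping. The only subtleties are the $\F_{t_i}$-measurability of $\{t_i < \tau\}$ used in the first part (immediate from $\tau$ being a stopping time, and crucially, the fact that we take the event $\{t_i < \tau\}$ and \emph{not} $\{t_i < \tau \leq t_{i+1}\}$, which would fail to be $\F_{t_i}$-measurable) and the edge case $\tau > 1$ in the second part, on which $S^{\pi(\tau)}$ and $S^\tau$ both coincide with $S$ on $[0,1]$ and thus contribute nothing to the difference of mean variations.
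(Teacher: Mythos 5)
The paper does not prove this lemma itself (it cites it from Beiglböck--Siorpaes), so there is no in-paper argument to compare against. Your proof is correct and is the natural one: the key identity $S^{\pi(\tau)}_{t_{i+1}}-S^{\pi(\tau)}_{t_i}=\mathbbm{1}_{\{t_i<\tau\}}(S_{t_{i+1}}-S_{t_i})$, the $\F_{t_i}$-measurability of $\{t_i<\tau\}$, and the observation that the term-by-term difference between the $\pi(\tau)$-stopped and $\tau$-stopped increments is supported on the single index with $\tau\in(t_i,t_{i+1}]$ are all handled correctly, and the final bound uses $0\le S\le 1$ as intended.
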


Compared to the frictionless case with $S^n=\un{S}=\ov{S}$, the analysis is complicated by the fact that in general $S^m_t\not=S^n_t$ for $t\in D_n$. We have nevertheless the following monotonicity result. 
\begin{lemma}\label{lemma:monotonie}
	Let Assumption~\ref{ass:bounded} hold. In addition, let $n,m\in\mathbb{N}$ with $m>n$ and let $\tau_m$ be a $D_m\cup\{\infty\}$-valued stopping time. For any $s\in D_n$, we have 
	\begin{align*}
		&\E\left[\sum_{t_i\in D_n, t_i\geq s}\mathbbm{1}_{\{t_i<\tau_m\}}\vert\E\left[S^{n}_{t_{i+1}}-S^{n}_{t_{i}}\mid \F_{t_i}\right]\vert \mid \F_s\right]\\& \leq\ \E\left[\sum_{t_i\in D_m, t_i\geq s}\mathbbm{1}_{\{t_i<\tau_m\}}\vert\E\left[S^{m}_{t_{i+1}}-S^{m}_{t_{i}}\mid \F_{t_i}\right]\vert \mid \F_s\right]+\left(2-\vert S^{n}_s-S^{m}_s\vert\right)\mathbbm{1}_{\{s<\tau_m\}} .
	\end{align*}
	In particular, for $s=0$ this yields
	\begin{align*}
		MV(S^{n,D_n(\tau_m)},D_n)\leq MV(S^{m,\tau_m},D_m)+2.
	\end{align*}
	In addition, we have
	\beam\label{remark:ReplacingDnDm}
	\mathrm{MV}(S^{m,D_n(\tau)}, D_n)\leq \mathrm{MV}(S^{m,D_m(\tau)},D_m)+1. 
	\eeam
	for all $[0,1]\cup\{\infty\}$-valued stopping times~$\tau$.
\end{lemma}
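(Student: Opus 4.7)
The plan is to prove the displayed conditional inequality by backward induction on $s\in D_n$, starting from $s=1$, where both sums are empty and the correction $(2-|S^n_1-S^m_1|)\mathbbm{1}_{\{1<\tau_m\}}\ge 0$ is trivially sufficient. For the inductive step, set $u_j:=s+j/2^m$ for $j=0,\dots,2^{m-n}$. I split the left-hand side into the one-step term $\mathbbm{1}_{\{s<\tau_m\}}|\E[S^n_{s+1/2^n}-S^n_s\mid\F_s]|$ plus the $\F_s$-conditional expectation of the left-hand side at $s+1/2^n$, and split the right-hand side analogously into the $2^{m-n}$ refined $D_m$-increments in $[s,s+1/2^n)$ plus the $\F_s$-conditional expectation of the right-hand side at $s+1/2^n$. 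The induction hypothesis at $s+1/2^n$ takes care of both tails, so the real task is to compare the one-step quantities.

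The crux is an identity rather than an inequality. From the recursion $S^n_s=\medi(\un{S}_s,\ov{S}_s,\E[S^n_{s+1/2^n}\mid\F_s])$ and $S^m_s\in[\un{S}_s,\ov{S}_s]$, a short three-case check of the median shows that $\E[S^n_{s+1/2^n}\mid\F_s]-S^n_s$ and $S^n_s-S^m_s$ always share the same sign (or one of them is zero), so that
$|\E[S^n_{s+1/2^n}-S^n_s\mid\F_s]|+|S^n_s-S^m_s|=|\E[S^n_{s+1/2^n}\mid\F_s]-S^m_s|$.
Routing the right-hand side through $\E[S^m_{s+1/2^n}\mid\F_s]$ by the triangle inequality and conditional Jensen yields
$|\E[S^n_{s+1/2^n}-S^n_s\mid\F_s]|+|S^n_s-S^m_s|\le \E[|S^n_{s+1/2^n}-S^m_{s+1/2^n}|\mid\F_s]+|\E[S^m_{s+1/2^n}-S^m_s\mid\F_s]|$,
and a tower/triangle telescoping bounds the last summand by $\sum_{j=0}^{2^{m-n}-1}\E[|\E[S^m_{u_{j+1}}-S^m_{u_j}\mid\F_{u_j}]|\mid\F_s]$.

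A case split on $\tau_m$ closes the induction. If $\tau_m>s+1/2^n$, all indicators $\mathbbm{1}_{\{u_j<\tau_m\}}$ equal~$1$, so the telescoping bound cancels exactly the refined $D_m$-sum that appears in the decomposition of $G_m(s)$; combining this cancellation with the identity above and the induction hypothesis at $s+1/2^n$ (whose $-|S^n_{s+1/2^n}-S^m_{s+1/2^n}|$ absorbs the Jensen error $+\E[|S^n_{s+1/2^n}-S^m_{s+1/2^n}|\mid\F_s]$) gives exactly $2-|S^n_s-S^m_s|$ on $\{s<\tau_m\}$. If $\tau_m\le s$ both sides are zero. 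In the borderline case $s<\tau_m\le s+1/2^n$, the induction hypothesis contributes zero (its indicator vanishes), and the crude bound $|\E[S^n_{s+1/2^n}-S^n_s\mid\F_s]|\le 1\le 2-|S^n_s-S^m_s|$ suffices. Taking $s=0$ and using $|S^n_0-S^m_0|\ge 0$ yields the $MV$-inequality with constant~$2$. The decisive step is the first case: the median identity must be used in its equality form, since a rougher triangle bound would lose the $|S^n_s-S^m_s|$ needed to propagate the induction.

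For (\ref{remark:ReplacingDnDm}) only the single process $S^m$ is involved. I start from the first assertion of Lemma~\ref{lemma:Beigl1}, which gives $MV(S^{m,D_n(\tau)},D_n)=\E[\sum_{t_i\in D_n,\,t_i<\tau}|\E[S^m_{t_{i+1}}-S^m_{t_i}\mid\F_{t_i}]|]$, and telescope each $D_n$-increment into its constituent $D_m$-increments by tower and triangle. For the unique $D_n$-cell $[t_i^*,t_i^*+1/2^n)$ that contains $\tau$, I split $S^m_{t_i^*+1/2^n}-S^m_{t_i^*}$ at $D_m(\tau)$: the portion up to $D_m(\tau)$ telescopes into the $D_m$-sum defining $MV(S^{m,D_m(\tau)},D_m)$, while the residual contributes at most $\E[|S^m_{t_i^*+1/2^n}-S^m_{D_m(\tau)}|]\le 1$ thanks to $0\le S^m\le 1$, which is the source of the additive~$+1$.
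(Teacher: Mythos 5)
Your plan — a single backward induction on $s\in D_n$ that switches process and grid at once — is a sound and more economical route than the paper's two-step argument (the paper first replaces $S^n$ by $S^m$ on the coarse grid, then refines the grid for the single process $S^m$), and you correctly isolated the key median identity $|\E[S^n_{s+1/2^n}-S^n_s\mid\F_s]|+|S^n_s-S^m_s|=|\E[S^n_{s+1/2^n}\mid\F_s]-S^m_s|$ that drives the induction. However, the ``case split on $\tau_m$'' is not rigorous as written. Both sides of the inequality you are trying to prove are $\F_s$-measurable random variables, while the events $\{\tau_m>s+1/2^n\}$ and $\{s<\tau_m\le s+1/2^n\}$ are not $\F_s$-measurable. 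So you cannot say ``if $\tau_m>s+1/2^n$, all indicators $\mathbbm{1}_{\{u_j<\tau_m\}}$ equal~$1$ and the telescoping cancels the refined $D_m$-sum'': the quantity $\E[\Sigma\mid\F_s]:=\E[\sum_j\mathbbm{1}_{\{u_j<\tau_m\}}|\E[S^m_{u_{j+1}}-S^m_{u_j}\mid\F_{u_j}]|\mid\F_s]$ is a single $\F_s$-measurable object; it is in general strictly smaller than your telescoping bound $\sum_j\E[|\E[S^m_{u_{j+1}}-S^m_{u_j}\mid\F_{u_j}]|\mid\F_s]$, and the ``cancellation'' cannot be restricted to a non-$\F_s$-measurable event. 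Likewise, in the ``borderline case'' $s<\tau_m\le s+1/2^n$ the statement that the induction hypothesis ``contributes zero'' is an $\omega$-wise assertion that does not interface with the conditional expectation.

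The gap is fixable, but it requires precisely the ingredient you omitted: the telescoping residual must be bounded \emph{inside} $\E[\cdot\mid\F_s]$. Concretely, on $\{\tau_m>s\}$ you need
\[
|\E[S^m_{s+1/2^n}-S^m_s\mid\F_s]|\le \E[\Sigma\mid\F_s]+\PM(s<\tau_m\le s+1/2^n\mid\F_s),
\]
which holds because the post-$\tau_m$ piece $\sum_j\mathbbm{1}_{\{u_j\ge\tau_m\}}(S^m_{u_{j+1}}-S^m_{u_j})$ collapses to a single difference of $[0,1]$-valued quantities and is therefore bounded by $\mathbbm{1}_{\{s<\tau_m\le s+1/2^n\}}$ — this is exactly the estimate the paper carries out explicitly in its Step~2 (display \eqref{eq:MonotonEstimate2}). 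Combined with $\Delta:=|S^n_{s+1/2^n}-S^m_{s+1/2^n}|\le1$ to get $\E[\Delta\mathbbm{1}_{\{\tau_m\le s+1/2^n\}}\mid\F_s]\le\PM(s<\tau_m\le s+1/2^n\mid\F_s)$, the budget closes: $\E[\Delta\mid\F_s]+\PM(s<\tau_m\le s+1/2^n\mid\F_s)+\E[(2-\Delta)\mathbbm{1}_{\{s+1/2^n<\tau_m\}}\mid\F_s]\le 2\PM(\tau_m>s\mid\F_s)=2$ on $\{\tau_m>s\}$. Your description of the IH ``absorbing'' the Jensen error hides exactly this accounting, which is where the argument lives. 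The proof of \eqref{remark:ReplacingDnDm} has the same ``unique $D_n$-cell containing $\tau$'' imprecision, but there the fix is routine: sum the residual over all cells and observe the contributing events $\{t_i<\tau\le t_{i+1}\}$ are disjoint, so the total is at most~$1$.
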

\begin{proof}
	\textit{Step 1:} In a first step, we keep the grid $D_n$ but replace $S^n$ with $S^m$. Thus, we want to show
	\begin{align}\nonumber
		&\E\left[\sum_{t_i\in D_n, t_i\geq s}\mathbbm{1}_{\{t_i<\tau_m\}}\vert\E\left[S^{n}_{t_{i+1}}-S^{n}_{t_{i}}\mid \F_{t_i}\right]\vert \mid \F_s\right]\\ \label{eq:Step1Stopped} &\leq\E\left[\sum_{t_i\in D_n, t_i\geq s}\mathbbm{1}_{\{t_i<\tau_m\}}\vert\E\left[S^{m}_{t_{i+1}}-S^{m}_{t_{i}}\mid \F_{t_i}\right]\vert \mid \F_s\right]+\left(1-\vert S^{n}_s-S^{m}_s\vert\right)\mathbbm{1}_{\{s<\tau_m\}}.
	\end{align}
	We start by showing the one-step estimate 
	\beam\label{21.12.2019.3}
	& &  \left\vert\E\left[S^n_{s+1/2^n}-S^n_s\mid \F_s\right]\right\vert\nonumber\\
	& & =\left\vert\E\left[S^n_{s+1/2^n}-S^m_s\mid \F_s\right]\right\vert-\vert S^n_{s}-S^m_s\vert\nonumber\\
	& & \leq \left\vert\E\left[S^m_{s+1/2^n}-S^m_s\mid \F_s\right]\right\vert
	+\E\left[\left\vert S^m_{s+1/2^n}-S^n_{s+1/2^n}\right\vert\mid \F_s\right] -\vert S^n_{s}-S^m_s\vert
	\eeam
	for all $s=1-1/2^n,1-2/2^n,\ldots,0$. The equality in (\ref{21.12.2019.3}) can be checked separately on the $\mathcal{F}_s$-measurable sets $B_1:=\{\E\left[S^n_{s+2^{-n}}\mid \F_s\right]>\overline{S}_{s}\}$,  $B_2:=\{\E\left[S^n_{s+2^{-n}}\mid \F_{s}\right]<\underline{S}_{s}\}$, and $B_3:=\{\underline{S}_{s}\leq \E\left[S^n_{s+2^{-n}}\mid \F_{s}\right]\leq\overline{S}_s\}$.
	By the definition of $S^n$, $B_1\subseteq \{S^n_{s}=\overline{S}_{s}\}$. On the other hand, $S^m_{s}\leq \overline{S}_{s}$, which implies the equality on $B_1$. 
	On the set $B_2\subseteq\{S^n_{s}=\underline{S}_{s}\}$, the situation is completely symmetric. Finally, on $B_3
	=\{S^n_s = \E\left[S^n_{s+2^{-n}}\mid \F_{s}\right]\}$, the equality is obvious. The inequality in (\ref{21.12.2019.3}) follows from Jensen's inequality for conditional expectations and the triangle inequality.
	
	Now, we show \eqref{eq:Step1Stopped} by a backward-induction on $s=1-1/2^n,1-2/2^n,\ldots,0$. For the initial step $s=1-1/2^n$, we 
	only have to multiply (\ref{21.12.2019.3}) for $s=1-1/2^n$ by $\mathbbm{1}_{\{1-2^{-n}<\tau_m\}}$ and use that  $|S^m_1-S^n_1|\le 1$. 
	
	Induction step $s+1/2^n\leadsto s$: By the induction hypothesis,
	one has 
	\begin{align}\nonumber
		\E\left[\sum_{t_i\in D_t, t_i\geq s+1/2^n}\mathbbm{1}_{\{t_i<\tau^m\}}\left\vert\E\left[S^n_{t_{i+1}}-S^n_{t_{i}}\mid \F_{t_i}\right]\right\vert \bigg\vert \F_{s}\right]\\ \nonumber
		\leq\E\left[\sum_{t_i\in D_n, t_i\geq s+1/2^n}\mathbbm{1}_{\{t_i<\tau_m\}}\left\vert\E\left[S^{m}_{t_{i+1}}-S^{m}_{t_{i}}\mid\F_{t_i}\right]\right\vert \bigg\vert \F_{s}\right]\\ \label{eq:IVMonotonieStopped}+\mathbbm{1}_{\{s<\tau_m\}}\E\left[1-\vert S^{n}_{s+1/2^n}-S^{m}_{s+1/2^n}\vert\vert \F_s\right], 
	\end{align} 
	where we take on both sides of \eqref{eq:Step1Stopped} for $s+1/2^n$    the conditional expectation under $\F_s$
	and use that $\{s+1/2^n<\tau_m\}\subseteq\{s<\tau_m\}$.
	Multiplying (\ref{21.12.2019.3}) by $\mathbbm{1}_{\{s<\tau_m\}}$ and adding (\ref{eq:IVMonotonieStopped}) yields \eqref{eq:Step1Stopped}.
	
	\textit{Step 2:} We still need to pass from $D_n$ to $D_m$ for the process~$S^m$, i.e., we now want to show that
	\begin{align} \nonumber
		&\E\left[\sum_{t_i\in D_n, t_i\geq s}\mathbbm{1}_{\{t_i<\tau_m\}}\vert\E\left[S^{m}_{t_{i+1}}-S^{m}_{t_{i}}\mid \F_{t_i}\right]\vert \mid \F_s\right]\\&\leq \E\left[\sum_{t_i\in D_m, t_i\geq s}\mathbbm{1}_{\{t_i<\tau_m\}}\vert\E\left[S^{m}_{t_{i+1}}-S^{m}_{t_{i}}\mid\F_{t_i}\right]\vert \mid \F_s\right]+\mathbbm{1}_{\{s<\tau_m\}}. \label{eq:MontonySecondAss}
	\end{align}
	This is less tricky: for $\tau_m=1$, it directly follows from the triangle inequality together with Jensen's inequality for conditional expectations and the second summand on the RHS is not needed. However, in the general case there is the problem that $\tau_m$ can stop in $D_m\setminus D_n$. Thus, for every $i\in\{s2^n,s2^n+1,\ldots,2^n-1\}$, we have to make the following calculations
	\begin{align}\nonumber
		&\mathbbm{1}_{\{i/2^n<\tau_m\}}\left\vert\E\left[ S^m_{(i+1)/2^n}-S^m_{i/2^n}\mid \F_{i/2^n}\right]\right\vert\\ \nonumber
		=&\ \mathbbm{1}_{\{i/2^n<\tau_m\}}\left\vert \E\left[\sum_{j=i2^{m-n}}^{(i+1)2^{m-n}-1}\left(S^m_{(j+1)/2^m}-S^m_{j/2^m}\right)\mid \F_{i/2^n}\right]\right\vert\\\nonumber
		\le &\ \E\left[\sum_{j=i2^{m-n}}^{(i+1)2^{m-n}-1}\mathbbm{1}_{\{j/2^m<\tau_m\}}\left\vert\E\left[S^m_{(j+1)/2^m}-S^m_{j/2^m}\mid \F_{j/2^m}\right]\right\vert
		\mid \F_{i/2^n}\right] \\\label{eq:MonotonEstimate}& 
		+\left\vert\E\left[\mathbbm{1}_{\{i/2^n<\tau_m\}}
		\sum_{j=i2^{m-n}}^{(i+1)2^{m-n}-1}
		\mathbbm{1}_{\{j/2^m\geq\tau_m\}}\left(S^m_{(j+1)/2^m}-S^m_{j/2^m}\right)\mid \F_{i/2^n}\right]\right\vert.
	\end{align}
	For the second summand, we can use the estimate
	\begin{align}
		\nonumber
		&\left\vert\mathbbm{1}_{\{i/2^n<\tau_m\}}
		\sum_{j=i2^{m-n}}^{(i+1)2^{m-n}-1}\mathbbm{1}_{\{j/2^m\geq\tau_m\}}\left(S^m_{(j+1)/2^m}-S^m_{j/2^m}\right)\right\vert\\
		\nonumber
		&=\left\vert\sum_{j=i2^{m-n}+1}^{(i+1)2^{m-n}-1}\mathbbm{1}_{\{(j-1)/2^m< \tau_m\le j/2^m\}}\left(S^m_{(i+1)/2^n}-S^m_{j/2^m}\right)\right\vert\\  \label{eq:MonotonEstimate2}
		&\leq \sum_{j=i2^{m-n}+1}^{(i+1)2^{m-n}-1}\mathbbm{1}_{\{(j-1)/2^m< \tau_m\le j/2^m\}} \le \mathbbm{1}_{\{i/2^n< \tau_m\le (i+1)/2^n\}},
	\end{align}
	where we use $0\le S^m_{t_i}\leq 1$ for all $t_i\in D_m$. Putting \eqref{eq:MonotonEstimate} and \eqref{eq:MonotonEstimate2} together and summing up over all $i$, we  arrive at \eqref{eq:MontonySecondAss}. Together with \eqref{eq:Step1Stopped}, this yields the main assertion. (\ref{remark:ReplacingDnDm}) is just (\ref{eq:MontonySecondAss}). 
\end{proof}

For the convenience of the reader, we recall the following result from \cite{beiglbock2014riemann}.
\begin{lemma}[Lemma 4.2 in \cite{beiglbock2014riemann}] \label{lemma:Beigl2}
	Assume that $(\tau_n)_{n\in\mathbb{N}}$ is a sequence of $[0,1]\cup \{\infty\}$-valued stopping times s.t. $\PM(\tau_n=\infty)\geq 1-\varepsilon$ for some $\varepsilon>0$ and all $n\in\bbn$. Then, there exists a stopping time $\tau$ and for each $n\in\bbn$ convex weights $\mu_n^n,\dots,\mu_{N_n}^n$, i.e., $\mu^n_k\ge 0$, $k=n,\ldots,N_n$ and $\sum_{k=n}^{N_n}\mu_k^n=1$, s.t. $\PM(\tau=\infty)\geq 1-3\varepsilon$ and 
	\begin{align}
		\mathbbm{1}_{\llbracket 0,\tau\rrbracket}\leq 2\sum_{k=n}^{N_n}\mu_k^n\mathbbm{1}_{\llbracket 0,\tau_k\rrbracket},\quad n\in\bbn.
	\end{align}
\end{lemma}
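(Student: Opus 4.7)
The argument relies on a Komlos-type forward convex combinations theorem (e.g., Lemma~A1.1 of \cite{delbaen1994general}) together with a pathwise construction of $\tau$ from a level-crossing time. Introduce the bounded sequence $X_k:=\mathbbm{1}_{\{\tau_k<\infty\}}$, which satisfies $\E[X_k]\le\varepsilon$. By forward Komlos, one obtains convex weights $\mu^n_k$, $k=n,\dots,N_n$, such that $\tilde X_n:=\sum_{k=n}^{N_n}\mu_k^n X_k$ converges almost surely to some $X_\infty\in[0,1]$ with $\E[X_\infty]\le\varepsilon$ (dominated convergence). A standard diagonal refinement --- pass to a thin subsequence $(n_j)$ with $\|\tilde X_{n_j}-X_\infty\|_{L^1}\le\eta\,2^{-j}$ and reuse the weights $\mu^{n_{j+1}}$ (which are supported in $\{n_{j+1},n_{j+1}+1,\dots\}\subseteq\{n,n+1,\dots\}$) for all intermediate indices $n\in[n_j,n_{j+1})$ --- allows to arrange in addition that $\sum_n\|\tilde X_n-X_\infty\|_{L^1}\le\eta$, where $\eta>0$ can be chosen arbitrarily small.

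Using these same weights, consider the decreasing, left-continuous, adapted processes
\[
\tilde Y_n(t):=\sum_{k=n}^{N_n}\mu^n_k\mathbbm{1}_{\llbracket 0,\tau_k\rrbracket}(t)=\sum_{k=n}^{N_n}\mu^n_k\mathbbm{1}_{\{\tau_k\ge t\}},\qquad t\in[0,1],
\]
and define the level-crossing times $\sigma_n:=\inf\{t\in[0,1]:\tilde Y_n(t)<1/2\}$, with $\inf\emptyset:=\infty$. Since $\tilde Y_n$ is decreasing in $t$, one has $\{\sigma_n>t\}=\{\tilde Y_n(t)\ge 1/2\}$, which lies in $\F_t$ because each $\{\tau_k\ge t\}$ does, so $\sigma_n$ is a stopping time. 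By construction $\tilde Y_n\ge 1/2$ on $\llbracket 0,\sigma_n\rrbracket$, hence $\mathbbm{1}_{\llbracket 0,\sigma_n\rrbracket}\le 2\tilde Y_n$. Setting $\tau:=\inf_{n\in\bbn}\sigma_n$, still a stopping time, one immediately gets the desired pathwise inequality $\mathbbm{1}_{\llbracket 0,\tau\rrbracket}\le\mathbbm{1}_{\llbracket 0,\sigma_n\rrbracket}\le 2\sum_{k=n}^{N_n}\mu^n_k\mathbbm{1}_{\llbracket 0,\tau_k\rrbracket}$ for every $n\in\bbn$.

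It remains to show $\PM(\tau=\infty)\ge 1-3\varepsilon$. Since $\tilde Y_n(1)\ge\sum_k\mu^n_k\mathbbm{1}_{\{\tau_k=\infty\}}=1-\tilde X_n$ and $\tilde Y_n$ is decreasing, $\{\sigma_n=\infty\}=\{\tilde Y_n(1)\ge 1/2\}\supseteq\{\tilde X_n\le 1/2\}$, so
\[
\PM(\tau<\infty)\le\PM\Big(\exists n:\tilde X_n>\tfrac 12\Big)\le\PM\big(X_\infty>\tfrac 13\big)+\sum_n\PM\big(|\tilde X_n-X_\infty|>\tfrac 16\big)\le 3\varepsilon+6\eta,
\]
using Markov on $X_\infty$ (with $\E[X_\infty]\le\varepsilon$) and a union bound combined with Markov on each $|\tilde X_n-X_\infty|$. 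Choosing $\eta$ sufficiently small (and, if necessary, nudging the split level $1/3$ very slightly upward so that the $6\eta$ term is absorbed into the Markov estimate) yields $\PM(\tau=\infty)\ge 1-3\varepsilon$. The principal obstacle throughout is precisely this uniformity over all $n$: the raw a.s.\ convergence $\tilde X_n\to X_\infty$ is too weak, because the bound $\tilde X_n\le 1/2$ must hold for every single $n$, including small ones; the fast $L^1$-convergence produced by the diagonal refinement in the first step is exactly what makes the union bound close.
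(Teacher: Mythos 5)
The paper cites this lemma from Beiglb\"ock and Siorpaes~\cite{beiglbock2014riemann} without reproducing a proof, so there is no in-paper argument to compare against; your route --- forward Komlos-type convex combinations applied to $X_k:=\mathbbm{1}_{\{\tau_k<\infty\}}$ followed by a $\tfrac12$-level-crossing definition of $\tau$ --- is the natural one and matches the spirit of the cited source. The pieces you assemble are sound: $\tilde Y_n$ is decreasing, left-continuous and adapted, so each $\sigma_n$ is a stopping time and $\tilde Y_n\geq\tfrac12$ on $\llbracket 0,\sigma_n\rrbracket$, giving the pathwise inequality once $\tau:=\inf_n\sigma_n$; and the split into a Markov bound on $X_\infty$ and a union bound on the deviations $|\tilde X_n-X_\infty|$ is the right decomposition.

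There is, however, a genuine bookkeeping gap in the ``diagonal refinement'' as written. Reusing $\mu^{n_{j+1}}$ on the whole block $n\in[n_j,n_{j+1})$ gives $\tilde X_n^{\mathrm{new}}=\tilde X_{n_{j+1}}^{\mathrm{old}}$ there, so
\begin{align*}
\sum_n\bigl\Vert\tilde X_n^{\mathrm{new}}-X_\infty\bigr\Vert_{L^1}
=\sum_j(n_{j+1}-n_j)\bigl\Vert\tilde X_{n_{j+1}}^{\mathrm{old}}-X_\infty\bigr\Vert_{L^1}
\leq \sum_j (n_{j+1}-n_j)\,\eta\,2^{-(j+1)},
\end{align*}
and the block lengths $n_{j+1}-n_j$ are not under control --- they are forced upward precisely by the requirement that $\Vert\tilde X_{n_{j+1}}-X_\infty\Vert_{L^1}$ be small, so the weighted sum need not converge, let alone be at most $\eta$. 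Two drop-in repairs: (i) for each $n$ pick, independently of the others, an index $K(n)\geq n$ with $\Vert\tilde X_{K(n)}-X_\infty\Vert_{L^1}\leq\eta\,2^{-n}$ and set $\mu^n:=\mu^{K(n)}$ (still a forward convex combination, since the support lies in $\{K(n),K(n)+1,\dots\}\subseteq\{n,n+1,\dots\}$), after which $\sum_n\Vert\tilde X_n^{\mathrm{new}}-X_\infty\Vert_{L^1}\leq\eta$ holds by construction; or (ii) keep the block scheme but observe that $\sigma_n^{\mathrm{new}}$ is constant on each block, so $\{\tau<\infty\}=\bigcup_j\{\sigma_{n_{j+1}}^{\mathrm{old}}<\infty\}$ and the union bound should be indexed by $j$, not $n$, giving $\sum_j\PM(|\tilde X_{n_{j+1}}^{\mathrm{old}}-X_\infty|>\tfrac16)\leq 6\sum_j\eta\,2^{-(j+1)}\leq 6\eta$. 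With either fix, the rest of your argument --- including nudging the split level slightly above $1/3$ to absorb the $O(\eta)$ term --- closes the estimate as claimed.
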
	

We are now in the position to prove Theorem~\ref{theo:ExistenceSimple}. 
\begin{proof}[Proof of Theorem \ref{theo:ExistenceSimple}] Let Assumption~\ref{ass:bounded} hold. Let $\varepsilon>0$, $(\tau_n)_{n\in\mathbb{N}}$ and $C>0$ as in Lemma~\ref{lemma:Stopping Times}. In addition, let $\tau$ as in Lemma~\ref{lemma:Beigl2}. We have
	\begin{align}\nonumber
		MV(S^{n,D_n(\tau)},D_n)&=\E\left[\sum_{t_i\in D_n}\mathbbm{1}_{\{t_i<\tau\}}\left\vert\E\left[S^n_{t_{i+1}}-S^n_{t_i}\mid \F_{t_i}\right]\right\vert\right]\\ \nonumber
		&\leq 2\E\left[\sum_{t_i\in D_n}\sum_{k=n}^{N_n}\mu_k^n\mathbbm{1}_{\{t_i<\tau_k\}}\left\vert\E\left[S^n_{t_{i+1}}-S^n_{t_i}\mid \F_{t_i}\right]\right\vert\right]\\ \nonumber
		&=2\sum_{k=n}^{N_n}\mu_k^n MV(S^{n,D_n(\tau_k)},D_n)\\ \label{eq:EstimateProof1}
		&\leq 2\sum_{k=n}^{N_n}\mu_k^n (MV(S^{k,\tau_k},D_k)+2)\leq 2C+4,\quad n\in\mathbb{N}.
	\end{align}
	Indeed, both equalities hold by Lemma~\ref{lemma:Beigl1}. The first inequality is due to Lemma~\ref{lemma:Beigl2}
	and the second inequality follows from Lemma~\ref{lemma:monotonie}.
	The third inequality holds by Lemma~\ref{lemma:Stopping Times}. 
	Next, let us show that
	\begin{align*}
		\mathrm{MV}(S^{D_n(\tau)},D_n)=\lim\limits_{\substack{m\to\infty\\ m\geq n}}\mathrm{MV}(S^{m,D_n(\tau)},D_n)\leq\limsup\limits_{\substack{m\to\infty\\ m\geq n}}\mathrm{MV}(S^{m,D_m(\tau)},D_m)+1\leq 2C+5,
	\end{align*}
	$n\in\bbn$, where $S$ is the value process of the continuous time game.
	Indeed, the equality follows from Proposition~\ref{prop:Approximation} and the dominated convergence theorem. 
	The first inequality is (\ref{remark:ReplacingDnDm}) and the second follows from \eqref{eq:EstimateProof1}. Together with Lemma~\ref{lemma:Beigl1}, we arrive at
	\begin{align}\label{eq:ProofFinal}
		\mathrm{MV}(S^\tau,D_n)\leq 2C+6,\quad n\in\bbn.
	\end{align}
	Finally, by the right-continuity of $S^\tau$ and \eqref{eq:ProofFinal}, we get
	\begin{align*}
		\mathrm{MV}(S^\tau)=\lim\limits_{n\to\infty}\mathrm{MV}(S^\tau,D_n)\leq 2C+6. 
	\end{align*}
	Together with $\PM(\tau<\infty)\leq 3\varepsilon$, this establishes that the right-continuous process~$S$ is a  local quasimartingale and,
	thus, a semimartingale by Rao's theorem (in the version of Theorem~\ref{theo:rao}) and the Doob-Meyer-Decomposition~\cite[Chapter 3, Theorem 16]{protter2005stochastic}. \end{proof}
\begin{proof}[Proof of Theorem~\ref{theo:ResultSemimartingale}]
	Having shown that Theorem~\ref{theo:ExistenceSimple} holds the assertion follows directly by Proposition~\ref{prop: ReductionStep}.
\end{proof}
\begin{remark}
	The arguments presented here rely heavily on the two-dimensional setting. However, Theorem~2.7 can be directly applied to a model with 
	a bank account and finitely 
	many risky assets since in this case it is sufficient to have a semimartingale price system for each risky asset separately (cf. also \cite[Theorem~7.2]{delbaen1994general}). 
	On the other hand, 
	it seems that the approach cannot be adapted to the general Kabanov model (cf. Kabanov and 
	Safarian~\cite[Section 3.6]{kabanov.safarian.2009}), in which there need not exist a bank account that is involved in every transaction. 
\end{remark}
\section{The self-financing condition}\label{21.12.2019.02}

As already discussed in the introduction, we use the semimartingale 
to define the self-financing condition in the bid-ask model for general strategies. A self-financing condition can be identified with an operator $\vp\mapsto \Pi(\vp)$ that maps each amount of risky assets to the corresponding
position in the risk-less bank account (if the later exists). Here, we assume that the initial position and the risk-less interest are zero. 
In addition, {\bf for the rest of the paper, we assume that there exists 
	a semimartingale price system $S$}, i.e., $S$ is a semimartingale s.t. $\underline{S}\leq S\leq \overline{S}$ (cf. Theorem~\ref{theo:ResultSemimartingale}). The aim is to define $\Pi(\vp)$ as $\vp\mal S - \vp S - {\rm ``costs"}$, where the process~$\vp\mal S$ denotes the stochastic integral. At this stage, the process~$\vp$ is bounded (see Proposition~\ref{prop:Welldefined} for the extension to general strategies).
The costs are caused by the fact that the trades are carried out at the less favorable bid-ask prices.
Since the gains in the semimartingale are finite, they cannot compensate infinite costs and the latter lead to ruin.

\subsection{Construction of the cost term}

We construct the cost associated to a strategy $\varphi\in\bP$ path-by-path, i.e., in the following, $\omega\in\Omega$ is fixed and $\vp,\un{S},\ov{S},S$ are identified with functions in time. 

We follow a two-step procedure. First, we calculate the costs on intervals in which the left limit of the spread is bounded away from zero by means of a modified Riemann-Stieltjes integral. The integral turns out to always exist (but it can take the value $\infty$).  In the second step, we exhaust the set of points 
with positive spread by finite unions of such intervals and define the total costs as the supremum of the costs along these unions.
One may see a vague analogy between the second step and the way a Lebesgue integral is constructed. 

This approach leads to a well-founded self-financing condition under the additional Assumption~\ref{8.12.2019.3} on the behavior of the spread at zero. Very roughly speaking, there should not occur costs 
if the investor builds up positions  at times the spread is zero 
and the positions are already closed {\em before} the spread reaches any positive value (cf. Example~\ref{16.12.2019.1} for a counterexample). Since for the construction  of our cost process itself, the assumption is not needed, we introduce it later on. 

In order to introduce the integral, we need the following notation. 
\begin{definition}\label{def:Partition}
	Let $I=[a,b]\subseteq [0,T]$ with $a<b$.
	\begin{enumerate}[(i)]
		\item A collection $P=\{t_0,\dots t_n\}$ of points $t_i\in[a,b]$ for $n\in\bbn$ and $i=0,\dots,n$ with $a=t_0< t_1<\dots< t_n=b$ is called a \emph{partition of $I$}.
		\item A partition $P'=\{t'_0,\dots,t'_m\}$ with  $P'\supseteq P$ is called a \emph{refinement of $P$}.
		\item If $P,P'$ are two partitions of $I$, the \emph{common refinement} $P\cup P'$ is the partition obtained by ordering the points of $\{t_0,\dots t_n\}\cup\{t'_0,\dots,t'_m\}$ in increasing order.
		\item Given a partition $P=\{t_0,\dots,t_n\}$ of $I$ a collection $\lambda=\{s_1,\dots,s_n\}$ with $s_i\in [t_{i-1},t_i)$ for $i=1,\dots,n$ is called a \emph{modified intermediate subdivision of $P$}.
		\item Let $\varphi\in\bP$, $P=\{t_0,\dots,t_n\}$ be a partition of $I$ and $\lambda=\{s_1,\dots,s_n\}$ be an modified intermediate subdivision of $P$, the \emph{modified Riemann-Stieltjes sum} is defined by
		\begin{align*}
			R(\varphi,P,\lambda):=\sum_{i=1}^{n} (\overline{S}_{s_i}-S_{s_i})(\varphi_{t_i}-\varphi_{t_{i-1}})^++\sum_{i=1}^{n} (S_{s_i}-\underline{S}_{s_i})(\varphi_{t_i}-\varphi_{t_{i-1}})^-.
		\end{align*}
	\end{enumerate} 
\end{definition}
\begin{definition}\label{def:CostTermIntervall} Let $\varphi\in\bP$ and $I=[a,b]\subseteq [0,T]$ with $a<b$. The \emph{cost term of $\varphi$ on $I$} exists and equals $C(\varphi,I)\in\mathbb{R}_+\cup\{\infty\}$ if for all $\varepsilon>0$ there is a partition $P_\varepsilon$ of $I$ s.t. for all refinements $P$ of $P_\varepsilon$ and all modified intermediate subdivisions $\lambda$ of $P$ the following is satisfied:	
	\begin{enumerate}[(i)]
		\item In the case of $C(\varphi,I)<\infty$, we  have $\vert C(\varphi,I)-R(\varphi,P,\lambda)\vert<\varepsilon,$
		\item In the case of $C(\varphi,I)=\infty$, we have $\vert R(\varphi,P,\lambda)\vert>\frac1{\varepsilon}.$
	\end{enumerate}
	In addition, we set $C(\varphi,\{a\}):=0$ for all $a\in[0,T]$ and $C(\varphi,\emptyset):=0$.   
\end{definition}

The next proposition establishes the existence of the cost term on an interval $I$ where the spread is bounded away from zero. 
\begin{proposition}\label{prop:ExistenceCostTerm}
	Let $\varphi\in\bP$ and $I=[a,b]\subseteq [0,T]$ with $a<b$ s.t. $\inf_{t\in[a,b)}(\overline{S}_t-\underline{S}_t)>0$. Then, the cost term $C(\varphi,I)$ in Definition~\ref{def:CostTermIntervall} exists and is unique. In addition, we have
	\begin{align*}
		\begin{cases}
			C(\varphi,I)<\infty, & \text{if}\ \mathrm{Var}_a^b(\varphi)<\infty\\
			C(\varphi,I)=\infty, & \text{if}\ \mathrm{Var}_a^b(\varphi)=\infty,
		\end{cases}
	\end{align*}
	where $ \mathrm{Var}_a^b(\varphi)$ denotes the pathwise variation of $\vp$ on the interval $[a,b]$.	
\end{proposition}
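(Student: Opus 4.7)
The proof splits into three stages: setup and basic bounds, a Cauchy-type argument for finite variation, and a dichotomy-plus-divergence argument for infinite variation. Since $\ov S,S,\un S$ are c\`adl\`ag on the compact interval $[a,b]$, $K:=\sup_{t\in[a,b]}(\ov S_t-\un S_t)<\infty$ and, by hypothesis, $\delta:=\inf_{t\in[a,b)}(\ov S_t-\un S_t)>0$. Every summand of $R(\vp,P,\lambda)$ is nonnegative and
\[
0\le R(\vp,P,\lambda)\le K\sum_{i=1}^n|\vp_{t_i}-\vp_{t_{i-1}}|\le K\cdot\mathrm{Var}_a^b(\vp).
\]
Uniqueness of any candidate for $C(\vp,I)$ is immediate from Definition~\ref{def:CostTermIntervall}, and this bound already rules out the value $\infty$ when $\mathrm{Var}_a^b(\vp)<\infty$.

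When $\mathrm{Var}_a^b(\vp)<\infty$ I would run a Cauchy argument. Given $\eta>0$, choose $P_\eta$ that contains the finitely many jump times of $\ov S-S$ and $S-\un S$ of size at least $\eta$ and is otherwise fine enough that both c\`adl\`ag (hence regulated) functions oscillate by less than $\eta$ on each subinterval; refine further so that the positive and negative variations of $\vp$ along $P_\eta$ approximate their totals to within $\eta$. Comparing two refinements via their common refinement, each subinterval contributes a discrepancy of at most $\eta|\Delta\vp_i|$ from sample shifts, plus a further contribution bounded by $K$ times the increment of the one-sided variation under the split; summing yields $|R(\vp,P,\lambda)-R(\vp,P',\lambda')|\le C\eta$ for a constant $C$ depending only on $K$ and $\mathrm{Var}_a^b(\vp)$. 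Letting $\eta\downarrow 0$ yields a unique finite limit $C(\vp,I)$.

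For $\mathrm{Var}_a^b(\vp)=\infty$ I must exhibit, for every $M>0$, a partition $P_M$ with $R(\vp,P,\lambda)>M$ for all refinements $P\supseteq P_M$ and all $\lambda$. Choose $P_0$ so that $\ov S-S$ oscillates by less than $\delta/4$ on each subinterval; since $\ov S-\un S\ge\delta$, each subinterval then falls into Case~A ($\ov S-S\ge\delta/4$ throughout) or Case~B ($S-\un S\ge\delta/4$ throughout), a classification inherited by every refinement. This yields the key lower bound
\[
R(\vp,P,\lambda)\ge\frac{\delta}{4}\Bigl(\sum_{i\in\mathcal{A}}(\vp_{t_i}-\vp_{t_{i-1}})^+ + \sum_{i\in\mathcal{B}}(\vp_{t_i}-\vp_{t_{i-1}})^-\Bigr)
\]
independently of $\lambda$. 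Because $\vp$ is bounded, on any maximal block of consecutive Case~A subintervals the upward and downward variations of $\vp$ along the partition are simultaneously finite or infinite (their difference telescopes to a bounded quantity), and likewise for Case~B blocks; hence infinite total variation either produces some block carrying infinite internal variation -- whereupon the matching side of the bound above diverges under refinement -- or forces the variation to concentrate at the transitions between the two cases, in which case adding enough transition points to $P_M$ makes each $\vp$-jump at a transition contribute at least $\delta/4$ times its size to $R$. The main obstacle is this last bookkeeping step, especially when $\ov S-S$ crosses the threshold $\delta/2$ infinitely often so that there are infinitely many Case~A/B blocks; the fix must combine within-block and boundary contributions and use the predictability of $\vp$ together with the left-leaning convention $s_i\in[t_{i-1},t_i)$ to ensure $\vp$-jumps at transitions are attributed to the subinterval on whose side the matching spread factor is bounded below by $\delta/4$.
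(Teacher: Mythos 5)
Your approach tracks the paper's closely: establish uniqueness, run a Cauchy argument under finite variation, and for infinite variation set up a finite dichotomy of $[a,b]$ into regions where $\ov{S}-S$ or $S-\un{S}$ is bounded below, closing with the boundedness trick. The only structural difference is that the paper implements the dichotomy via a finite sequence of stopping times alternating between $\{S-\un{S}>\delta/3\}$ and $\{\ov{S}-S>\delta/3\}$, whereas you use a small-oscillation partition of $[a,b]$; for a fixed path these are interchangeable.

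The obstacle you flag at the end is, however, not a real gap, and your argument in fact closes cleanly. Once the oscillation-controlled partition $P_0=\{t_0,\dots,t_m\}$ is fixed, each half-open subinterval $[t_{j-1},t_j)$ is classified as Case~A or Case~B (possibly both), and this classification is inherited by every subinterval of every refinement of $P_0$; so there are never more than $m$ blocks, no matter how often $\ov{S}-S$ crosses a threshold inside a subinterval, because the oscillation bound pins the whole subinterval into one class. Variation is additive over the finite partition: $\mathrm{Var}_a^b(\vp)=\sum_{j=1}^m\mathrm{Var}_{t_{j-1}}^{t_j}(\vp)$, so infinite total variation forces some $j_0$ with $\mathrm{Var}_{t_{j_0-1}}^{t_{j_0}}(\vp)=\infty$. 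On that subinterval, for any refinement $P$ of $P_0$, the positive and negative increment sums restricted to it satisfy $\bigl|\sum(\cdot)^+-\sum(\cdot)^-\bigr|\le 2\sup|\vp|<\infty$, so both must diverge as $P$ refines; whichever one your lower bound picks up (the $+$ side if $j_0$ is Case~A, the $-$ side if Case~B) drives $R(\vp,P,\lambda)$ to infinity, uniformly over $\lambda$ because the weight $\ov{S}_{s_i}-S_{s_i}$ (resp.\ $S_{s_i}-\un{S}_{s_i}$) stays above $\delta/4$ for every admissible $s_i\in[t_{i-1},t_i)$ in a Case~A (resp.\ Case~B) subinterval. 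Moreover, since refining only increases these one-sided sums, once a refinement $P_\varepsilon$ of $P_0$ makes the relevant sum on $[t_{j_0-1},t_{j_0}]$ exceed $4/(\delta\varepsilon)$, every further refinement does too, which is exactly Definition~\ref{def:CostTermIntervall}(ii). Transitions between blocks are finitely many and contribute a bounded amount to the variation, so nothing concentrates there; and predictability of $\vp$ plays no role in this pathwise estimate --- the left-leaning convention $s_i\in[t_{i-1},t_i)$ is used only, and exactly as you guessed, to keep the matching spread factor bounded below by $\delta/4$.
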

We postpone the technical proof of Proposition~\ref{prop:ExistenceCostTerm} to Appendix~\ref{app:TechnicalCosts}.

\begin{remark}\label{remark:StieltjesFiniteVar}	
	First note that a priori, $\vp$ need not be of finite variation. Thus, we cannot decompose it into its
	increasing part~$\varphi^\uparrow$ and decreasing part~$\varphi^\downarrow$ to define 
	$\int_{a}^{b}(\overline{S}_s-S_s)d\varphi^\uparrow_s+\int_{a}^{b}(S_s-\underline{S}_s)d\varphi^\downarrow_s:=C(\varphi^\uparrow,[a,b])+C(\varphi^\downarrow,[a,b])$. Instead, we consider the increasing and decreasing parts of $\vp$ along grids and weight them with the corresponding prices before passing to the limit.   
	
	However, if $\mathrm{Var}_a^b(\varphi)<\infty$, it can be shown that   $C(\varphi^\uparrow,[a,b])+C(\varphi^\downarrow,[a,b])=C(\varphi,[a,b])$. This can be seen by an inspection of the proof of
	Proposition~\ref{prop:ExistenceCostTerm}, in which the condition
	$\inf_{t\in [a,b)}(\overline{S}_t-\underline{S}_t)>0$ can be dropped
	if $\mathrm{Var}_a^b(\varphi)<\infty$.
\end{remark}
\begin{remark}
	Definition~\ref{def:CostTermIntervall}(i) only requires that the cost term exists in the Moore-Pollard-Stieltjes-sense (see, e.g., Hildebrandt~\cite[Section 4]{hildebrandt1938definitions} and Mikosch and  Norvai{\v{s}}a~\cite[Section 2.3]{mikosch2000stochastic}), i.e., as the limit of the net $R(\varphi,\cdot,\cdot)$ indexed by the directed set of tuples $(P,\lambda)$ with the partial order $(P,\lambda)\geq (P',\lambda')$ iff $P$ is a refinement of $P'$.
	This is weaker than the existence in the norm-sense, i.e., as the limit of the net $R(\varphi,\cdot,\cdot)$ indexed by the tuples $(P,\lambda)$ with the partial order $(P,\lambda)\geq (P',\lambda')$ iff $\max_{i=1,\dots,n}(t_i-t_{i-1})\leq \max_{i=1,\dots,m}(t'_{i}-t'_{i-1})$, that is 
	guaranteed for the usual Riemann-Stieltjes integral with a continuous integrator of finite variation. A straightforward adaptation of the existence in the norm-sense of the usual 
	Riemann-Stieltjes integral to the present context would read: 
	
	The cost term is said to exist and equal to $C(\varphi,I)\in\mathbb{R}_+$ if for each $\varepsilon>0$ there is $\delta>0$ s.t. $\vert C(\varphi,I)-R(\varphi,P,\lambda)\vert<\varepsilon$ for all partitions $P=\{t_0,\dots,t_n\}$ with $\max_{i=1,\dots,n}(t_i-t_{i-1})<\delta$ and all intermediates subdivision $\lambda=\{s_1,\dots,s_n\}$ with $s_i\in[t_{i-1},t_i)$. 
	
	But, the following example, similar to Guasoni et al.~\cite[Example A.3]{guasoni2012fundamental} shows that $C(\vp,I)$ does in general not exist in the norm sense: let $T=2$, $\overline{S}-S=\mathbbm{1}_{[1,2]}$ and $\varphi=\mathbbm{1}_{(1,2]}$. Namely, if $t_i=1$ is not included in the partition~$P$, $R(\varphi,P,\lambda)$ can oscillate
	between $0$ and $1$.
	
	The example shows that the points of common discontinuities of integrator and integrand are critical to calculate the costs. Thus, they have to be included 
	in the partition, which is guaranteed by the Moore-Pollard-Stieltjes approach.
\end{remark}
\begin{remark}\label{remark:intermediate}
	The restriction that the point~$s_i$ of the intermediate 
	subdivision~$\lambda$ has to lie in the interval $[t_{i-1},t_{i})$, and not only in $[t_{i-1},t_{i}]$, has a clear financial interpretation.
	
	If an investor buys $\vp_s-\vp_{s-}$ shares at time $s$, she
	pays $(\vp_s-\vp_{s-})\ov{S}_{s-}$ monetary units. Consequently, 
	if she updates her position between $t_{i-1}$ and $t_i$, only the stock prices on the time interval~$[t_{i-1},t_{i})$ have to be considered. In the limit, the choice of the price in $[t_{i-1},t_{i})$ does not matter.
	Indeed, a well-known way to guarantee the existence of Riemann-Stieltjes integrals in the case of simultaneous jump discontinuities on the same side of integrator and integrand is to exclude the boundary points (see Hildebrandt~\cite[Section 6]{hildebrandt1938definitions}).
	
	Finally, we mention that in the case of $\mathrm{Var}_a^b(\varphi)<\infty$, the integrals are the same as in Guasoni et al.~\cite[Section A.2]{guasoni2012fundamental}. But, besides considering different processes, we introduce the integrals in a different way.
\end{remark}

The next proposition states that the cost term is additive with regard to the underlying interval. Its proof is obvious. 
\begin{proposition}\label{prop:SubIntervallProperty}
	Let $\varphi\in\bP$, $I=[a,b]\subseteq[0,T]$ s.t. $\inf_{t\in[a,b)}(\overline{S}_t-\underline{S}_t)>0$ and $c\in[a,b]$. Then, we have 
	\begin{align*}
		C(\varphi, [a,b])=C(\varphi, [a,c])+ C(\varphi, [c,b]).
	\end{align*}
\end{proposition}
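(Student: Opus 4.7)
The plan is to exploit the fact that the modified Riemann-Stieltjes sums are always nonnegative and additive under concatenation of partitions that contain the splitting point $c$. Since $\inf_{t\in[a,b)}(\ov{S}_t-\un{S}_t)>0$ implies the same bound on $[a,c)$ and $[c,b)$, Proposition~\ref{prop:ExistenceCostTerm} guarantees that all three cost terms $C(\varphi,[a,b])$, $C(\varphi,[a,c])$, $C(\varphi,[c,b])$ exist as elements of $\bbr_+\cup\{\infty\}$, so only the value of the equation needs to be verified.

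The key observation I would isolate first is that for any partition $P$ of $[a,b]$ containing $c$ and any modified intermediate subdivision $\lambda$ of $P$, the pieces $P_1:=P\cap[a,c]$ and $P_2:=P\cap[c,b]$ are partitions of $[a,c]$ and $[c,b]$, and $\lambda$ restricts to modified intermediate subdivisions $\lambda_1,\lambda_2$ of $P_1,P_2$ (here the strict inequality $s_i<t_i$ in the definition is crucial for the subinterval ending at $c$). By construction
\[
R(\varphi,P,\lambda)=R(\varphi,P_1,\lambda_1)+R(\varphi,P_2,\lambda_2),
\]
and moreover each summand is nonnegative since $\un{S}\leq S\leq \ov{S}$.

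In the case where both $C(\varphi,[a,c])$ and $C(\varphi,[c,b])$ are finite, I fix $\varepsilon>0$, pick partitions $P_\varepsilon^{(1)}$ of $[a,c]$ and $P_\varepsilon^{(2)}$ of $[c,b]$ witnessing Definition~\ref{def:CostTermIntervall}(i) with tolerance $\varepsilon/2$, and set $P_\varepsilon:=P_\varepsilon^{(1)}\cup P_\varepsilon^{(2)}$; any refinement of $P_\varepsilon$ automatically contains $c$, so the splitting above together with the triangle inequality gives $|R(\varphi,P,\lambda)-C(\varphi,[a,c])-C(\varphi,[c,b])|<\varepsilon$. Since this also forces $C(\varphi,[a,b])<\infty$ (as Proposition~\ref{prop:ExistenceCostTerm} ties finiteness to finite variation, which is additive over subintervals), the equality follows.

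If instead one of the two subinterval costs is infinite, say $C(\varphi,[a,c])=\infty$, then for any $M>0$ there exists a partition $P_M^{(1)}$ of $[a,c]$ such that $R(\varphi,P_1,\lambda_1)>M$ for every refinement $P_1$ and every modified intermediate subdivision $\lambda_1$. Taking any partition $P_M^{(2)}$ of $[c,b]$ and setting $P_M:=P_M^{(1)}\cup P_M^{(2)}$, the nonnegativity of $R(\varphi,P_2,\lambda_2)$ yields $R(\varphi,P,\lambda)\ge R(\varphi,P_1,\lambda_1)>M$ for every refinement $P$ of $P_M$; hence $C(\varphi,[a,b])=\infty$, which equals the right-hand side. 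The main (and essentially only) obstacle is the bookkeeping in the finite case to ensure that refining $P_\varepsilon$ yields refinements of both $P_\varepsilon^{(1)}$ and $P_\varepsilon^{(2)}$ and that a modified intermediate subdivision of $P$ splits consistently; the strict right-open condition $s_i\in[t_{i-1},t_i)$ makes this decomposition unambiguous without any boundary issue at $c$.
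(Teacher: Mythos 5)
Your proposal is correct and follows essentially the same route as the paper: insert $c$ into the partition, observe that the modified Riemann–Stieltjes sum splits additively as $R(\varphi,P,\lambda)=R(\varphi,P\cap[a,c],\lambda\cap[a,c])+R(\varphi,P\cap[c,b],\lambda\cap[c,b])$, and conclude via the definition of the cost term. The paper leaves the $\varepsilon/2$ bookkeeping and the infinite case to the reader ("it can easily be seen that the claim holds"), which you have simply carried out in detail.
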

Having defined the costs for all subintervals $I=[a,b]\subseteq [0,T]$ with $\inf_{t\in[a,b)}(\overline{S}_t-\underline{S}_t)>0$, we now proceed to define the accumulated costs as a process. Therefore, we let 
\begin{align}\label{21.2.2020.1}
	\mathcal{I}:=\left\{\cup_{i=1}^n[a_i,b_i]: \begin{array}{l}
		n\in\bbn,\ 0\leq a_1\le b_1\leq a_2\le \dots\leq a_n\le b_n\leq T,\\ \inf_{t\in [a_i,b_i)}(\overline{S}_t-\underline{S}_t)>0,\ i=1,\dots,n\end{array}\right\}\cup\{\emptyset\}.
\end{align}
We now extend the cost term to $\mathcal{I}$. Given $\varphi\in\bP$ and $J=\cup_{i=1}^n[a_i,b_i]$ with $\inf_{t\in [a_i,b_i)}(\overline{S}_t-\underline{S}_t)>0$ for all $i=1,\dots,n$, we define \emph{the costs along $J$} by 
\beam\label{31.12.2019.1}
C(\varphi, J):=\sum_{i=1}^nC(\varphi,[a_i,b_i]),
\eeam
where the cost terms $C(\varphi,[a_i,b_i])$ for $i=1,\dots,n$ are defined in Definition~\ref{def:CostTermIntervall}. By Proposition~\ref{prop:SubIntervallProperty}, the RHS of (\ref{31.12.2019.1}) does not depend on the representation of $J$. Thus, the cost term $C(\varphi,J)$ is well-defined for all $J\in\mathcal{I}$.  
\begin{definition}\label{def:PathwiseCost}(Cost process)
	Let $\varphi\in\bP$. Then, the \emph{cost process} $(C_t(\varphi))_{t\in[0,T]}$ is defined by
	\begin{align*}
		C_t(\varphi):=\sup_{J\in\mathcal{I}}C(\varphi,J\cap [0,t])\in[0,\infty], \quad t\in[0,T]
	\end{align*}
	(Note that $\{0\}\in \mathcal{I}$ with $C(\vp,\{0\})=0$ and thus the supremum is nonnegative). If it is clear from the context, we also write $(C_t)_{t\in[0,T]}$ for the cost process associated to $\varphi$. 
\end{definition}

\begin{proposition}\label{prop:PathwiseProperties}
	Let $\varphi\in\bP$. The cost process $(C_t(\varphi))_{t\in[0,T]}$ is $[0,\infty]$-valued, increasing and, consequently, l\`agl\`ad (if finite). In addition, the following assertions hold.
	\begin{enumerate}[(i)]
		\item For any $0\leq s\leq t\leq T$, we have $C_t(\varphi) =C_s(\varphi)+\sup_{J\in \mathcal{I}}C(\varphi, J\cap [s,t])$,
		\item For any $0\leq s\leq t\leq T$ with $\inf_{\tau\in [s,t)}(\overline{S}_\tau-\underline{S}_\tau)>0$, we have $C_t(\varphi)= C_s(\varphi)+C(\varphi, [s,t])$,
		\item For any $0\leq s\leq t\leq T$, we have $C_t(\varphi)\leq C_s(\varphi)+ \sup_{\tau\in[s,t)}(\overline{S}_\tau-\underline{S}_\tau)\mathrm{Var}_s^t(\varphi)$.
	\end{enumerate} 
\end{proposition}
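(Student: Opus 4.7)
The plan is to reduce all four assertions to two elementary facts: the nonnegativity of the modified Riemann--Stieltjes sums $R(\vp, P,\lambda)$, and the interval additivity of the single-interval cost (Proposition~\ref{prop:SubIntervallProperty}). The nonnegativity is immediate from the definition, since each summand $(\ov{S}_{s_i}-S_{s_i})(\vp_{t_i}-\vp_{t_{i-1}})^{\pm}$ and $(S_{s_i}-\un{S}_{s_i})(\vp_{t_i}-\vp_{t_{i-1}})^{\pm}$ is a product of nonnegative quantities. Passing to the Moore--Pollard limit, $C(\vp,[a,b])\in[0,\infty]$ whenever $\inf_{\tau\in[a,b)}(\ov{S}_\tau-\un{S}_\tau)>0$. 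Hence every $C(\vp,J\cap[0,t])$ is nonnegative, and so $C_t\in[0,\infty]$. For monotonicity, fix $s\le t$ and $J\in\mathcal{I}$: any interval of $J$ straddling $s$ can be split at $s$ via Proposition~\ref{prop:SubIntervallProperty} (both halves still satisfying the spread condition), so $C(\vp,J\cap[0,t])=C(\vp,J\cap[0,s])+C(\vp,J\cap(s,t])\ge C(\vp,J\cap[0,s])$, and taking the supremum yields $C_s\le C_t$. Any increasing $[0,\infty]$-valued function on $[0,T]$ admits left and right limits at every point on which it is finite, so $C$ is l\`agl\`ad wherever finite.

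For the identity (i) I prove both inequalities. For ``$\ge$'', I first note that without loss of generality I may restrict the supremum defining $C_s$ to $J_s\in\mathcal{I}$ with $J_s\subseteq[0,s]$ (intersecting with $[0,s]$ does not increase cost), and similarly on $[s,t]$. For such $J_s$ and any $J\in\mathcal{I}$ with $J\subseteq[s,t]$, the juxtaposition $J_s\cup J$ lies in $\mathcal{I}$ (adjacent intervals meeting at $s$ are allowed by the definition of $\mathcal{I}$), and by Proposition~\ref{prop:SubIntervallProperty} applied to any such touching pair one has $C(\vp,(J_s\cup J)\cap[0,t])=C(\vp,J_s\cap[0,s])+C(\vp,J\cap[s,t])$. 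Taking suprema successively in $J_s$ and $J$ gives $C_t-C_s\ge\sup_{J\in\mathcal{I}}C(\vp,J\cap[s,t])$. For ``$\le$'', an arbitrary $J_t\in\mathcal{I}$ is split at $s$: if $s$ lies strictly inside some constituent interval $[a_k,b_k]$ of $J_t$, I use Proposition~\ref{prop:SubIntervallProperty} to write $C(\vp,[a_k,b_k])=C(\vp,[a_k,s])+C(\vp,[s,b_k])$, thereby presenting $J_t$ as $J_t^{s}\cup J_t^{st}$ with $J_t^{s}\subseteq[0,s]$, $J_t^{st}\subseteq[s,t]$ and $C(\vp,J_t\cap[0,t])=C(\vp,J_t^{s})+C(\vp,J_t^{st})\le C_s+\sup_{J}C(\vp,J\cap[s,t])$. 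The case $C_s=\infty$ is trivial since then $C_t=\infty$ by monotonicity.

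Part (ii) is a direct consequence of (i): under $\inf_{\tau\in[s,t)}(\ov{S}_\tau-\un{S}_\tau)>0$, the element $\{[s,t]\}\in\mathcal{I}$ gives $\sup_{J}C(\vp,J\cap[s,t])\ge C(\vp,[s,t])$; conversely, for any $J\in\mathcal{I}$ the intersection $J\cap[s,t]$ is a finite ordered union of subintervals $[\alpha_1,\beta_1],\dots,[\alpha_k,\beta_k]$ of $[s,t]$, and repeated application of Proposition~\ref{prop:SubIntervallProperty} (valid because the spread is uniformly positive on all of $[s,t)$) decomposes $C(\vp,[s,t])$ into $\sum_j C(\vp,[\alpha_j,\beta_j])$ plus the nonnegative contributions from the complementary gaps, whence $\sum_j C(\vp,[\alpha_j,\beta_j])\le C(\vp,[s,t])$. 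Part (iii) (reading the right-hand side as $\sup_{\tau\in[s,t)}(\ov{S}_\tau-\un{S}_\tau)\cdot\mathrm{Var}_s^t(\vp)$, since $\ov{S}_s-\un{S}_s$ without $\tau$ appears to be a typographical artefact) follows from the elementary bound
\[
R(\vp,P,\lambda)\le \sup_{\tau\in[s,t)}(\ov{S}_\tau-\un{S}_\tau)\sum_{i}|\vp_{t_i}-\vp_{t_{i-1}}|
\]
valid on any subinterval of $[s,t]$; summing this bound over the disjoint constituent intervals of a $J\in\mathcal{I}\cap[s,t]$ (the partial variations add subadditively to at most $\mathrm{Var}_s^t(\vp)$), passing to the limit and then taking the supremum over $J$ together with (i) yields the estimate.

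The main obstacle is the careful bookkeeping at the point $s$ in the proof of (i), in both directions: a $J\in\mathcal{I}$ may have touching intervals and $s$ may lie either at the boundary of two constituent intervals or strictly inside one of them. Handling both cases hinges on invoking Proposition~\ref{prop:SubIntervallProperty} in the two opposite directions, splitting and merging at $s$, which is permissible because every element of $\mathcal{I}$ satisfies the uniform-spread assumption of that proposition on each constituent interval.
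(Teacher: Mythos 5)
Your proof is correct and is exactly the kind of elementary argument the paper has in mind: the authors explicitly leave it to the reader with the remark that ``the assertions above follow directly from Definitions~\ref{def:CostTermIntervall} and \ref{def:PathwiseCost},'' and you supply that argument faithfully, built on nonnegativity of the Riemann--Stieltjes sums and interval additivity from Proposition~\ref{prop:SubIntervallProperty}. Two small cosmetic points: you correctly flag the typo in (iii) ($\overline{S}_s-\underline{S}_s$ should read $\overline{S}_\tau-\underline{S}_\tau$ inside the supremum); and the identity in (i) should be read in the form $C_t(\varphi)=C_s(\varphi)+\sup_{J\in\mathcal{I}}C(\varphi,J\cap[s,t])$ to avoid the undefined expression $\infty-\infty$ when $C_s(\varphi)=\infty$, which you implicitly handle by noting monotonicity forces $C_t(\varphi)=\infty$ as well.
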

The assertions above follow directly from Definitions~\ref{def:CostTermIntervall} and \ref{def:PathwiseCost}. Thus, we leave the easy proof to the reader.

The next proposition determines sequences of partitions whose corresponding Riemann-Stieltjes sums converge to the cost term on an interval where the spread is bounded away from zero. This will be crucial to show that the cost term is predictable. For this purpose, recall that the oscillation $\mathrm{osc}(f,I)$ of a function $f:[0,T]\to\mathbb{R}$ on an interval $I\subseteq[0,T]$ is defined by $\mathrm{osc}(f,I):=\sup\{\vert f(t)-f(s)\vert: s,t\in I\}$. 
\begin{proposition}\label{prop:approximationcost}
	Let $\varphi\in\bP$ and $I=[a,b]\subseteq [0,T]$ with $a<b$ and $\inf_{t\in[a,b)}(\overline{S}_t-\underline{S}_t)>0.$
	In addition, let $(P_n)_{n\in\mathbb{N}}$ be a refining sequence of partitions of $I$, i.e., $P_n=\{t_0^n,\dots,t_{m_n}^n\}$ with $a=t_0^n< t_1^n< \dots < t_{m_n}^n=b$ and $P_{n+1}\supseteq P_n$, s.t. \begin{enumerate}[(i)]
		\item $\lim\limits_{n\to\infty}\max(\sup\limits_{i=1,\dots, m_n}\mathrm{osc}(\overline{S}-S,[t^n_{i-1},t^n_i)), \sup\limits_{i=1,\dots, m_n}\mathrm{osc}(S-\underline{S},[t^n_{i-1},t^n_i)))=0$
		\item $\lim\limits_{n\to\infty}\sum_{i=1}^{m_n}\vert \varphi_{t^n_i}-\varphi_{t^n_{i-1}}\vert =\mathrm{Var}_a^b(\varphi)$.
	\end{enumerate}
	Then, for any sequence $\lambda_n=\{s_1^n,\dots,s_{m_n}^n\}$ of modified intermediate subdivision, we have \begin{align*}
		R(\varphi,P_n,\lambda_n) \to C(\varphi,[a,b])\quad \text{as}\ n\to\infty.
	\end{align*}
	In addition, such a sequence $(P_n)_{n\in\bbn}$ always exists. 
\end{proposition}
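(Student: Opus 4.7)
The plan is to work pathwise, fixing $\omega\in\Omega$, and to split according to whether $V:=\mathrm{Var}_a^b(\varphi)$ is finite or infinite. Throughout, put $M:=\sup_{t\in[a,b]}(\overline{S}_t(\omega)-\underline{S}_t(\omega))$, which is finite by c\`adl\`ag on the compact interval $[a,b]$, and write $\omega_n$ for the maximum oscillation appearing in (i), so $\omega_n\to 0$ by assumption.

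For the case $V<\infty$, I would compare $R(\varphi,P_n,\lambda_n)$ to $R(\varphi,Q_n,\lambda_n')$ with $Q_n:=P_n\cup P_\varepsilon$, where $P_\varepsilon$ is furnished by Definition~\ref{def:CostTermIntervall}(i) for a given $\varepsilon>0$. I would pick $\lambda_n'$ so that on every $Q_n$-subinterval $[u_{j-1},u_j)$ contained in $[t^n_{i-1},t^n_i)$, the intermediate point equals $s_i^n$ whenever $s_i^n\in[u_{j-1},u_j)$; this covers in particular every $P_n$-subinterval not split by $P_\varepsilon$. Using $|x^\pm-y^\pm|\le|x-y|$ and writing $V^n_i$ for the variation of $\varphi$ along $Q_n\cap[t^n_{i-1},t^n_i]$, an elementary bookkeeping gives
\[
|R(\varphi,Q_n,\lambda_n')-R(\varphi,P_n,\lambda_n)|\le M\sum_i\bigl(V^n_i-|\varphi_{t^n_i}-\varphi_{t^n_{i-1}}|\bigr)+\omega_n\sum_iV^n_i.
\]
Now $\sum_iV^n_i$ is the total variation of $\varphi$ along $Q_n$ and is therefore sandwiched between $\sum_i|\varphi_{t^n_i}-\varphi_{t^n_{i-1}}|$ and $V$; by (ii) the lower side tends to $V$, so $\sum_iV^n_i\to V$ and the first sum on the right vanishes, while the second is bounded by $\omega_nV\to 0$. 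Combined with $|R(\varphi,Q_n,\lambda_n')-C(\varphi,[a,b])|<\varepsilon$ and $\varepsilon\downarrow 0$, this yields $R(\varphi,P_n,\lambda_n)\to C(\varphi,[a,b])$.

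The case $V=\infty$, so $C(\varphi,[a,b])=\infty$ by Proposition~\ref{prop:ExistenceCostTerm}, is where I expect the main obstacle, because the error term $\omega_n\sum_iV^n_i$ need no longer vanish. My plan here is to exploit directly the positivity $f+g\ge\delta>0$ of the spread on $[a,b)$ (with $f=\overline{S}-S$, $g=S-\underline{S}$): once $\omega_n<\delta/4$, each subinterval $[t^n_{i-1},t^n_i)$ is entirely of one type, either $B_n:=\{f\ge\delta/4\text{ throughout}\}$ or $S_n:=\{g\ge\delta/4\text{ throughout}\}$, which yields
\[
R(\varphi,P_n,\lambda_n)\ge \frac{\delta}{4}\Bigl(\sum_{i\in B_n}(\varphi_{t^n_i}-\varphi_{t^n_{i-1}})^++\sum_{i\in S_n}(\varphi_{t^n_i}-\varphi_{t^n_{i-1}})^-\Bigr).
\]
Since $\sum_i|\varphi_{t^n_i}-\varphi_{t^n_{i-1}}|\to\infty$ by (ii) and $\varphi$ is bounded, both $\sum_i(\varphi_{t^n_i}-\varphi_{t^n_{i-1}})^+$ and $\sum_i(\varphi_{t^n_i}-\varphi_{t^n_{i-1}})^-$ diverge. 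The delicate step is to show that the captured parts on the right-hand side also diverge; I would argue this by contradiction, assuming boundedness of the captured part and then comparing with $R(\varphi,Q_n,\lambda_n')>T$ for $P_T$ from Definition~\ref{def:CostTermIntervall}(ii) via the same bookkeeping identity as in the finite-variation case, driving $T\to\infty$ to rule out the scenario in which $\varphi$'s positive moves cluster inside $S_n$ and its negative moves inside $B_n$.

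Existence of a refining sequence $(P_n)$ satisfying (i) and (ii) is a routine construction. For any c\`adl\`ag function $h$ on $[a,b]$ and $\eta>0$ there is a finite partition on whose half-open subintervals $h$ has oscillation below $\eta$, a standard consequence of the existence of right and left limits; applying this to $\overline{S}-S$ and $S-\underline{S}$ simultaneously with $\eta=1/k$ gives a partition $\pi_k$. By definition of total variation there is a partition $\rho_k$ with $\sum_i|\varphi_{t_i}-\varphi_{t_{i-1}}|>(\mathrm{Var}_a^b(\varphi)\wedge k)-1/k$, and this lower bound is preserved under refinement. Setting $P_0:=\{a,b\}$ and $P_k:=P_{k-1}\cup\pi_k\cup\rho_k$ then gives a refining sequence satisfying both (i), since oscillations do not grow under refinement, and (ii).
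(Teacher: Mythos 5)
Your argument for the case $\mathrm{Var}_a^b(\varphi)<\infty$ is correct and essentially the same as the paper's: you compare $R(\varphi,P_n,\lambda_n)$ with a refinement of $P_\varepsilon$ via a bookkeeping bound of the form $M(\sum_i V^n_i - \sum_i|\Delta\varphi_i|)+\omega_n\sum_iV^n_i$, and both error terms vanish because the variation along $P_n$ and along $Q_n$ both converge to the total variation; the paper phrases this as a Cauchy estimate but the mechanics are identical. Your construction of the refining sequence is also the same as the paper's.

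The genuine gap is exactly where you flag it, the case $\mathrm{Var}_a^b(\varphi)=\infty$. After splitting the $P_n$-subintervals into buy-capturing and sell-capturing classes $B_n,S_n$, you obtain $R(\varphi,P_n,\lambda_n)\ge\frac{\delta}{4}\bigl(\sum_{i\in B_n}(\Delta\varphi_i)^+ +\sum_{i\in S_n}(\Delta\varphi_i)^-\bigr)$, but nothing forces this captured quantity to diverge: both $\sum(\Delta\varphi_i)^+$ and $\sum(\Delta\varphi_i)^-$ blow up, yet the positive moves could conceivably concentrate in $S_n$ and the negative moves in $B_n$. Your proposed contradiction argument via the bookkeeping identity does not close this hole, because the error term $\omega_n\sum_iV^n_i$ is no longer controllable: $\sum_iV^n_i\to\infty$ while $\omega_n\to0$, and their product need not vanish, so you cannot relate $R(\varphi,P_n,\lambda_n)$ to $R(\varphi,Q_n,\lambda_n')$ in the required way. (Moreover, the constant $M\cdot|P_T|\cdot\|\varphi\|_\infty$ in the first error term grows with $T$, so even the first term cannot be beaten by taking $T\to\infty$.)

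The paper avoids this by localizing with a single, $n$-independent, \emph{finite} alternating sequence of random times $a=\sigma_0\le\sigma_1<\cdots<\sigma_K=b$ such that $S-\underline{S}>\delta/3$ on each $[\sigma_{2k},\sigma_{2k+1})$ and $\overline{S}-S>\delta/3$ on each $[\sigma_{2k+1},\sigma_{2k+2})$. Since $K<\infty$, divergence of $\sum_i|\Delta\varphi_i|$ forces divergence of the variation of $\varphi$ along $P_n$ restricted to at least one $[\sigma_k,\sigma_{k+1}]$; there, boundedness of $\varphi$ gives that \emph{both} the increasing and the decreasing sums diverge, and because the whole interval has a single ``type'', one of them is captured with weight at least $\delta/3$. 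This is the idea your proof is missing: you need a coarse, fixed decomposition of time on which the type is constant, not a per-$P_n$-subinterval decomposition where the segregation scenario cannot be excluded.
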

The proof of Proposition~\ref{prop:approximationcost} is closely related to the proof of Proposition~\ref{prop:ExistenceCostTerm}. Thus, we also postpone it to Appendix~\ref{app:TechnicalCosts}. We now conclude the subsection with a first approximation result. 
\begin{proposition}\label{prop:liminf}
	Let $\varphi,\varphi^n\in\bP$, $n\in\bbn$, $t\in[0,T]$ and $J\in\mathcal{I}$. Then, we have the implication
	\begin{align}\label{eq:liminfclaim1}
		&\varphi^n\to\varphi\quad \text{pointwise}\quad \Rightarrow\quad  \liminf_{n\to\infty}C(\varphi^n,J\cap [0,t])\geq C(\varphi,J\cap[0,t]).
	\end{align}
\end{proposition}
\begin{proof} 
	Let $\varphi^n\to\varphi$ pointwise and $t\in [0,T]$. We start by noting that the claim is trivial if $J=\{a\}$ for some $a\in[0,T]$ or if $J=\emptyset$. 
	
	\emph{Step 1.} We now treat the special case $J=[a,b]\in\mathcal{I}$ with $a<b$. In this case, we have $C(\varphi, J\cap [0,t])=C(\varphi,[a,b\wedge t])$ and 
	$C(\varphi^n, J\cap [0,t])=C(\varphi^n,[a,b\wedge t])$ for all $n\in\bbn$, where we use the convention $[c,d]=\emptyset$ if $d<c$. In addition, by the preceding observation, 
	we may assume $t>a$. 
	
	We only consider the case $C(\varphi,[a,b\wedge t])<\infty$ since the opposite case $C(\varphi,[a,b\wedge t])=\infty$ is analogous.  
	Let $\varepsilon>0$. There is a partition $P_\varepsilon=\{t_0,\dots,t_m\}$ of $[a,b\wedge t]$ s.t.
	\begin{align}\label{eq:liminf0}
		\begin{aligned}
			&\sum_{i=1}^{m} \inf_{s\in[t_{i-1},t_i)}(\overline{S}_{s}-S_{s})(\varphi_{t_i}-\varphi_{t_{i-1}})^++\sum_{i=1}^{m} \inf_{s\in[t_{i-1},t_i)}(S_{s}-\underline{S}_{s})(\varphi_{t_i}-\varphi_{t_{i-1}})^-\\\geq&\ C(\varphi,[a,b\wedge t])-\varepsilon.
		\end{aligned}
	\end{align}
	Using the pointwise convergence of $(\varphi^n)_{n\in\mathbb{N}}$, we can find $N\in\bbn$ s.t. for all $n\geq N$, we have
	\begin{align}\label{eq:liminf1}
		\begin{aligned}
			&\sum_{i=1}^{m} \inf_{s\in[t_{i-1},t_i)}(\overline{S}_{s}-S_{s})(\varphi^n_{t_i}-\varphi^n_{t_{i-1}})^++\sum_{i=1}^{m} \inf_{s\in[t_{i-1},t_i)}(S_{s}-\underline{S}_{s})(\varphi^n_{t_i}-\varphi^n_{t_{i-1}})^-\\\geq&\ C(\varphi,[a,b\wedge t])-2\varepsilon.
		\end{aligned}
	\end{align}
	
	Keeping this in mind, for each $n$, we choose a partition $\overline{P}_n$ s.t. for all refinements $P$ of $\overline{P}_n$ and intermediate subdivisions $\lambda$ of $P$, we have $C(\varphi^n,[a,b\wedge t])\geq R(\varphi^n,P,\lambda)-\varepsilon$. Now, we let $P_n:=P_\varepsilon\cup \overline{P}_n$ and write $P_n=\{t_0^n,\dots, t^n_{m_n}\}$. Denoting by $t_{i-1}=t^n_{i_1}< t^n_{i_{2}}<\dots < t^n_{i_j}=t_i$ the points of $P_n$ in between $t_{i-1}$ and $t_i$, we have 
	\begin{align*}
		\sum_{k=2}^{j}(\varphi^n_{t^n_{i_k}}-\varphi^n_{t^n_{i_{k-1}}})^+\geq (\varphi^n_{t_i}-\varphi^n_{t_{i-1}})^+\quad\text{and}\quad
		\sum_{k=2}^{j}(\varphi^n_{t^n_{i_k}}-\varphi^n_{t^n_{i_{k-1}}})^-\geq (\varphi^n_{t_i}-\varphi^n_{t_{i-1}})^-.
	\end{align*}
	Together with \eqref{eq:liminf1} this yields
	\begin{align*}
		C(\varphi^n,[a,b\wedge t])\geq R(\varphi^n,P_n,\lambda_n)-\varepsilon\geq C(\varphi,[a,b\wedge t])-3\varepsilon
	\end{align*}
	for all $n\geq N$ and intermediate subdivision $\lambda_n$ of $P_n$. Hence, we have
	\begin{align*}
		\liminf_{n\to\infty}C(\varphi^n,[a,b\wedge t])\geq C(\varphi,[a,b\wedge t])-3\varepsilon,
	\end{align*}
	which tantamount to the claim as $\varepsilon\downarrow0$. 
	
	\emph{Step 2.} Finally, let $J=\cup_{i=1}^m[a_i,b_i]\in\mathcal{I}$. Then, using the non-negativity of the sequences $(C(\varphi^n,[a_i,b_i\wedge t]))_{n\in\bbn}$ for $i=1,\dots,m$, we have 
	\begin{align*}
		\liminf_{n\to\infty} C(\varphi^n, J\cap [0,t])=\liminf_{n\to\infty}\sum_{i=1}^mC(\varphi^n, [a_i,b_i\wedge t])\geq \sum_{i=1}^m \liminf_{n\to\infty}C(\varphi^n, [a_i,b_i\wedge t]).
	\end{align*} 
	Thus, \eqref{eq:liminfclaim1} follows directly from step 1 and the observation at the start of the proof.
\end{proof}

\subsection{The cost term as a stochastic process}
Until now we kept $\omega\in\Omega$ fixed, i.e., the construction is path-by-path. To show some measurability properties of the cost term, we now consider it as a stochastic process. 
\begin{proposition}\label{prop:costprocessproperties}
	Let $\varphi\in\bP$. The cost process $C(\varphi)=(C_t(\varphi))_{t\in[0,T]}$ coincides with a predictable process up to evanescence.
\end{proposition}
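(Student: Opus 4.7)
The plan is to express $C(\vp)$ as a countable supremum of predictable processes, exploiting that $\vp \in \bP$ makes $\vp_\tau$ an $\F_{\tau-}$-measurable random variable for every stopping time $\tau$, while the Riemann--Stieltjes construction evaluates the spread factors $\ov S - S$ and $S - \un S$ at intermediate points strictly to the left of the partition points where $\vp$ is evaluated, so that only left-limit price data enter. First, I would localize to stochastic intervals of uniform positive spread: for each $n \in \bbn$, set $\sigma^n_0 := 0$ and, inductively,
\begin{align*}
\rho^n_k := \inf\{t \geq \sigma^n_k : \ov S_t - \un S_t < 1/n\}, \qquad \sigma^n_{k+1} := \inf\{t > \rho^n_k : \ov S_t - \un S_t \geq 2/n\}.
\end{align*}
By right-continuity of $\ov S - \un S$, the spread stays at least $1/n$ on each $\llbracket \sigma^n_k, \rho^n_k \llbracket$, so Proposition~\ref{prop:ExistenceCostTerm} applies to $[\sigma^n_k, \rho^n_k]$. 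Setting $\tilde C^n_t := \sum_k C(\vp, [\sigma^n_k, t \wedge \rho^n_k]) \mathbbm{1}_{\{\sigma^n_k \leq t\}}$ and using Proposition~\ref{prop:SubIntervallProperty} together with the observation that every $J \in \mathcal{I}$ is path-wise contained in $\bigcup_k [\sigma^n_k, \rho^n_k]$ once $n$ exceeds the reciprocal of its positive spread lower bound, one obtains $C_t(\vp) = \sup_n \tilde C^n_t$ outside a pathwise null set. Since a countable supremum of predictable processes is predictable, it suffices to show that each $\tilde C^n$ admits a predictable modification.

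For the latter, I would approximate $C(\vp, [\sigma^n_k, t \wedge \rho^n_k])$ by dyadic Riemann--Stieltjes sums with partition points $\tau^{n,N}_{k,j} := \sigma^n_k + j 2^{-N}(\rho^n_k - \sigma^n_k)$ (stopping times) and intermediate points chosen at the left endpoints $\tau^{n,N}_{k,j-1}$. Each summand is $\F_{\tau^{n,N}_{k,j}-}$-measurable (by predictability of $\vp$ and the fact that the spread factors involve only times strictly before $\tau^{n,N}_{k,j}$) and is carried by the predictable stochastic interval $\llbracket \tau^{n,N}_{k,j-1}, \tau^{n,N}_{k,j} \llbracket$, so the $t$-indexed Riemann--Stieltjes process -- completed subintervals plus a partial-sum correction for the subinterval containing $t$ -- is left-continuous and hence predictable. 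Proposition~\ref{prop:approximationcost}, applied to dyadic partitions augmented path-wise with the at most countably many jump times of $\ov S - S$ and $S - \un S$, then forces these predictable approximations to converge path-wise to $\tilde C^n$ as $N \to \infty$, so $\tilde C^n$ is a pointwise limit of predictable processes and thus itself predictable.

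The main obstacle is precisely this last predictability step: the endpoints $\rho^n_k$ and $\tau^{n,N}_{k,j}$ are only stopping times, not predictable ones in general, so one must carefully verify that every jump contribution of $\tilde C^n$ at such a $\tau$ lies in $\F_{\tau-}$ -- which is exactly where the strict-left intermediate-point convention combined with $\vp_\tau \in \F_{\tau-}$ becomes essential -- since otherwise the resulting process would only be optional. A second point to check is that the path-wise augmentation of the dyadic partitions with jump times does not destroy the predictable structure of the approximants, which follows from the fact that a c\`adl\`ag process admits only countably many jump times along each path.
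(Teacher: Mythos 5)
Your reduction to stochastic intervals of uniformly positive spread is sound and matches the paper's Step (via the stopping times in \eqref{eq:StoppingTimes} and Lemma~\ref{lemma:approximationlemma}). However, your treatment of the key measurability step -- showing that $C(\varphi,[\sigma\wedge\cdot,\tau\wedge\cdot])$ is predictable on such an interval -- has two genuine gaps, and the paper's Lemma~\ref{lemma:PredictableIntervall} invokes a tool (Doob's predictable separability theorem) specifically to get around them.

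First, the times $\tau^{n,N}_{k,j}:=\sigma^n_k+j2^{-N}(\rho^n_k-\sigma^n_k)$ are \emph{not} stopping times. A convex combination of two stopping times $\sigma<\rho$ is in general only a random time: to decide $\{(1-\lambda)\sigma+\lambda\rho\le t\}$ one needs information about $\rho$ up to the (future) time $(t-(1-\lambda)\sigma)/\lambda$, which is not in $\F_t$. So the ``predictable stochastic interval $\llbracket\tau^{n,N}_{k,j-1},\tau^{n,N}_{k,j}\llbracket$'' that carries your partial sums is not a stochastic interval in the usual sense, and the $\F_{\tau^{n,N}_{k,j}-}$-measurability claim is void.

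Second, even granting some partition of genuine stopping times with mesh tending to zero and fine enough to control the oscillation of $\ov S-S$ and $S-\un S$ (addressing hypothesis~(i) of Proposition~\ref{prop:approximationcost}), hypothesis~(ii) -- that $\sum_i|\varphi_{t^n_i}-\varphi_{t^n_{i-1}}|\to\mathrm{Var}_a^b(\varphi)$ -- is \emph{not} automatic. A l\`agl\`ad path of $\varphi$ can have ``double jumps'' $\Delta\varphi_c\neq0$, $\Delta^+\varphi_c\neq0$ whose variation contribution is invisible to any partition that misses $c$; and $c$ has nothing to do with the jump times of the spread. Augmenting the partitions path-wise with the jump times of $\ov S-S$, $S-\un S$ therefore does not repair~(ii), and in any case path-wise augmentation produces random times that are not stopping times, so the approximants would lose their predictable structure -- this is precisely the worry you flag in your last paragraph, but the countability of jump times does not resolve it.

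The paper's proof of Lemma~\ref{lemma:PredictableIntervall} avoids both problems by invoking Doob's Theorem~\ref{theo:DoobSeparable}: every predictable process agrees up to evanescence with a predictably separable one, i.e.\ one whose graph is captured, path by path, by a countable family $\{T_m\}$ of \emph{predictable} stopping times. The partitions used to apply Proposition~\ref{prop:approximationcost} are then built from those $T_m$'s (which control the variation of $\varphi$, i.e.\ hypothesis~(ii)) together with oscillation-hitting stopping times $V^{n,i}_\ell$ for the spread (which control hypothesis~(i)), giving Riemann--Stieltjes approximants that are honestly built from stopping times and hence predictable. Without Doob's theorem, or an equivalent device ensuring that your partition points are genuine stopping times that also exhaust the variation of $\varphi$, the argument does not close.

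Your left-end intermediate-point convention and the remark that $\varphi_\tau$ is $\F_{\tau-}$-measurable for predictable $\varphi$ at a predictable time $\tau$ are the right intuitions for why each approximant should end up predictable -- but they only pay off once the partition points themselves are shown to be (predictable) stopping times, which your construction does not deliver.
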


In order to prove Proposition~\ref{prop:costprocessproperties}, we need the following lemma, whose proof relies on some deep results of Doob~\cite{doob1975stochastic} and thus is postponed to Appendix~\ref{app:TechnicalCosts}.
\begin{lemma}\label{lemma:PredictableIntervall}
	Let $\varphi\in\bP$ and $\sigma\le \tau$ two stopping times s.t. $\inf_{\sigma(\omega)\leq t<\tau(\omega)}(\overline{S}_t(\omega)-\underline{S}_t(\omega)>0$ for all $\omega\in\Omega$. Then, the process $C(\varphi, [\sigma\wedge \cdot,\tau\wedge \cdot])$ coincides with a predictable process up to evanescence. 
\end{lemma}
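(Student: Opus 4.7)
My plan is to construct a sequence of predictable processes $(C^n)_{n\in\mathbb{N}}$ that converges pointwise to $C(\varphi, [\sigma \wedge \cdot, \tau \wedge \cdot])$ off an evanescent set; since pointwise limits of predictable processes are predictable, this yields the assertion.

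For each $n$, I would inductively define stopping times $\sigma =: \tau^n_0 \leq \tau^n_1 \leq \cdots$ by
\[
\tau^n_{i+1} := \inf\{t > \tau^n_i : \max(|\overline{S}_t - \overline{S}_{\tau^n_i}|,\, |\underline{S}_t - \underline{S}_{\tau^n_i}|,\, |S_t - S_{\tau^n_i}|) > 1/n\} \wedge (\tau^n_i + 1/n) \wedge \tau.
\]
Each $\tau^n_i$ is a stopping time by the right-continuity of the three processes, and Doob's classical approximation theorem for c\`adl\`ag processes ensures that, pathwise, $\tau^n_i = \tau$ for all $i$ sufficiently large. By successively taking unions I would also arrange $\{\tau^n_i : i\} \supseteq \{\tau^{n-1}_i : i\}$ so that the partitions refine as $n$ grows.

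Now set
\[
C^n_t := \sum_{i\ge 1}\Big[(\overline{S}_{\tau^n_{i-1}} - S_{\tau^n_{i-1}})(\varphi_{\tau^n_i \wedge t} - \varphi_{\tau^n_{i-1} \wedge t})^+ + (S_{\tau^n_{i-1}} - \underline{S}_{\tau^n_{i-1}})(\varphi_{\tau^n_i \wedge t} - \varphi_{\tau^n_{i-1} \wedge t})^-\Big].
\]
Each summand is predictable in $t$: it vanishes on $\llbracket 0,\tau^n_{i-1}\rrbracket$, the coefficient $(\overline{S}_{\tau^n_{i-1}} - S_{\tau^n_{i-1}})$ is $\mathcal{F}_{\tau^n_{i-1}}$-measurable, and the product of an $\mathcal{F}_T$-measurable random variable with the left-continuous adapted indicator $\mathbbm{1}_{\rrbracket T,\infty\llbracket}$ is predictable. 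Meanwhile $t \mapsto \varphi_{\tau^n_i \wedge t}$ is predictable by a monotone class argument: it is clearly predictable for left-continuous adapted $\varphi$, and the class of bounded $\varphi$ with this property is stable under bounded pointwise limits. Hence $C^n$ is predictable. For fixed $\omega$ and $t$, the partition $(\tau^n_i(\omega) \wedge t)_i$ of $[\sigma(\omega) \wedge t, \tau(\omega) \wedge t]$ refines in $n$ with mesh going to zero and satisfies condition (i) of Proposition~\ref{prop:approximationcost}. Provided condition (ii) is also secured, Proposition~\ref{prop:approximationcost} applied with the left-endpoint intermediate subdivision yields $C^n_t \to C(\varphi, [\sigma \wedge t, \tau \wedge t])$, and the desired predictable process emerges as the pointwise limit.

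The main obstacle is arranging condition (ii) of Proposition~\ref{prop:approximationcost}, namely that $\sum_i |\varphi_{\tau^n_i \wedge t} - \varphi_{\tau^n_{i-1} \wedge t}|$ converges to $\mathrm{Var}_{\sigma \wedge t}^{\tau \wedge t}(\varphi)$ pathwise. The oscillation-controlled stopping times depend only on $\overline{S}$, $\underline{S}$, and $S$, so they need not resolve the pathwise variation of an arbitrary bounded predictable $\varphi$, which may fail to be c\`adl\`ag; and a general bounded measurable function need not have its total variation captured by any prescribed countable family of partitions. This is where the deep results of Doob~\cite{doob1975stochastic} enter: one must extract a measurable (stopping-time) refinement of $(\tau^n_i)$ along which the variation sums for $\varphi$ approach the pathwise variation, while preserving the predictability of the $C^n$. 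With such a measurable selection, the argument goes through and $C(\varphi, [\sigma \wedge \cdot, \tau \wedge \cdot])$ is seen to coincide with a predictable process up to evanescence.
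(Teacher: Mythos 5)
Your plan is structurally identical to the paper's: approximate the cost on $[\sigma\wedge\cdot,\tau\wedge\cdot]$ by predictable Riemann--Stieltjes sums along sequences of stopping times that control the oscillation of $\overline{S}-S$ and $S-\underline{S}$, verify the hypotheses of Proposition~\ref{prop:approximationcost}, and pass to the pointwise limit. You also correctly locate the genuine difficulty: your oscillation-controlled times depend only on the price processes, so there is no reason the telescoped variation sums of an arbitrary bounded predictable $\varphi$ along them should converge to $\mathrm{Var}_{\sigma\wedge t}^{\tau\wedge t}(\varphi)$, and a bounded predictable process need not be c\`adl\`ag, so no ``canonical'' countable family of partitions works. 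But you stop there, flagging ``one must extract a measurable (stopping-time) refinement'' as the role of Doob's results without actually carrying it out. That is precisely the heart of the proof, and leaving it as an acknowledged obstacle means the argument is incomplete.

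What fills the gap is Doob's \emph{predictable separability} theorem (Theorem~5.2 of \cite{doob1975stochastic}): every predictable process coincides, up to evanescence, with a \emph{predictably separable} predictable process, i.e.\ one admitting a countable family $\{T_n\}$ of predictable stopping times such that, for a.e.\ $\omega$, $\{T_n(\omega)\}$ is dense in $[0,T]$ and the graph of $t\mapsto\varphi_t(\omega)$ is the closure of its restriction to $\{T_n(\omega)\}$. After replacing $\varphi$ by such a version (the two produce the same cost process up to evanescence), this separability set yields, for each $n$, finite increasing sequences of stopping times $\sigma=T^n_0\le\cdots\le T^n_{m_n}=\tau$ along which $\sum_i|\varphi_{T^n_i\wedge t}-\varphi_{T^n_{i-1}\wedge t}|\to\mathrm{Var}_{\sigma\wedge t}^{\tau\wedge t}(\varphi)$ pointwise --- the paper explicitly notes these need not be predictable stopping times; what matters is that each term $(\overline{S}_{\nu^n_{k-1}}-S_{\nu^n_{k-1}})(\varphi_{\nu^n_k\wedge\cdot}-\varphi_{\nu^n_{k-1}\wedge\cdot})^+$ is a predictable process in $t$, which you argued correctly. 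One then interleaves (as you also proposed) the oscillation-controlled stopping times, obtaining a refining family satisfying both conditions of Proposition~\ref{prop:approximationcost}, and the pointwise limit is $C(\varphi,[\sigma\wedge\cdot,\tau\wedge\cdot])$. Incidentally, the extra mesh cap $\wedge(\tau^n_i+1/n)$ in your definition of $\tau^n_{i+1}$ is harmless but unnecessary: what Proposition~\ref{prop:approximationcost} needs is oscillation control and variation resolution, not mesh going to zero (the integral only exists in the $\sigma$-/Moore--Pollard sense).
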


In order to establish Proposition~\ref{prop:costprocessproperties}, we still need to approximate the supremum in Definition~\ref{def:PathwiseCost} in a measurable way. Therefore, we define for each $n\in \mathbb{N}$ a sequence of stopping times by $\tau^n_0:=0$ and \begin{align} \label{eq:StoppingTimes}
	\tau^n_k:=\begin{cases}
		\inf\{t\geq \tau^n_{k-1}:\overline{S}_t-\underline{S}_t\leq 2^{-(n+1)}\},& k\ \mathrm{odd}\\
		\inf\{t> \tau^n_{k-1}: \overline{S}_t-\underline{S}_t\geq 2^{-n}\},& k\ \mathrm{even}
	\end{cases},\quad \mathrm{for}\ k\in\bbn.
\end{align}
Note that only a finite number of $\{\tau_k^n(\omega)\}_{k\in\mathbb{N}}$ is less than infinity as the process $\overline{S}-\underline{S}$ has c\`adl\`ag sample paths, $\tau^n_{2k}<\tau^n_{2k+1}$ on $\{\tau_{2k}<\infty\}$, and  
\begin{align*}
	\inf_{\tau^n_{2k}(\omega)\leq t<\tau^n_{2k+1}(\omega)} \left(\overline{S}_t(\omega)-\underline{S}_t(\omega)\right)\geq 2^{-(n+1)}\quad\ \text{for}\quad k\in\bbn_0
\end{align*} 
for all $\omega\in\Omega$. In particular, this means that the process $C^n(\varphi)=(C^n(\varphi)_t)_{t\in[0,T]}$  
\begin{align}\label{eq:ApproximationCost1}
	C^n_t(\varphi):=\sum_{k=0}^{\infty} C(\varphi, [\tau^n_{2k}\wedge t, \tau^n_{2k+1}\wedge t])
\end{align}
is well-defined and coincides with a predictable process up to evanescence for each $n\in\mathbb{N}$ by Lemma~\ref{lemma:PredictableIntervall}.
\begin{lemma}\label{lemma:approximationlemma}
	Let $\varphi\in\bP$ and $(C^n(\varphi))_{n\in\mathbb{N}}$ as above. Then, $C^n(\varphi)\to C(\varphi)$ pointwise. 
\end{lemma}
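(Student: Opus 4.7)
The plan is to prove both $C^n_t(\varphi)\le C_t(\varphi)$ for every $n$ (easy direction) and $\liminf_{n\to\infty} C^n_t(\varphi)\ge C_t(\varphi)$ (the substantive direction), working at a fixed $\omega\in\Omega$ and fixed $t\in[0,T]$. The two together give the pointwise convergence, including the case $C_t(\varphi)=\infty$.

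For the upper bound I would observe that $J^n:=\bigcup_{k\ge 0}[\tau^n_{2k}\wedge t,\tau^n_{2k+1}\wedge t]$ is a \emph{finite} union of intervals, because $\ov S-\un S$ is c\`adl\`ag and therefore bounded on $[0,T]$, so only finitely many $\tau^n_k$ lie below $T$. By definition of $\tau^n_{2k+1}$ as the first hitting time of $\{\ov S-\un S\le 2^{-(n+1)}\}$ after $\tau^n_{2k}$, and using that $\ov S_{\tau^n_{2k}}-\un S_{\tau^n_{2k}}\ge 2^{-n}$ (by right-continuity and the definition of the even hitting times), the spread strictly exceeds $2^{-(n+1)}$ on each half-open component $[\tau^n_{2k}\wedge t,\tau^n_{2k+1}\wedge t)$ of $J^n$. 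Hence $J^n\in\mathcal{I}$ and $C^n_t(\varphi)=C(\varphi,J^n\cap[0,t])\le C_t(\varphi)$ by definition of the cost process.

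For the lower bound, fix an arbitrary $J=\bigcup_{i=1}^m[a_i,b_i]\in\mathcal{I}$ and, without loss of generality (truncating endpoints at $t$ and dropping empty pieces), assume $b_i\le t$ for all $i$. Set $\delta:=\min_i\inf_{s\in[a_i,b_i)}(\ov S_s-\un S_s)>0$ and pick $N$ with $2^{-N}<\delta$. The key observation is that for each $n\ge N$, every $[a_i,b_i)$ lies inside a \emph{single} excursion $[\tau^n_{2k_*(i)},\tau^n_{2k_*(i)+1})$. Indeed, the complement of $\bigcup_k[\tau^n_{2k},\tau^n_{2k+1})$ in $[0,T]$ equals $\bigcup_k[\tau^n_{2k+1},\tau^n_{2k+2})$, on which the spread is $<2^{-n}\le\delta$; this forces $[a_i,b_i)\subseteq\bigcup_k[\tau^n_{2k},\tau^n_{2k+1})$, and since $[a_i,b_i)$ is connected it must fit inside one component, so $[a_i,b_i]\subseteq[\tau^n_{2k_*(i)},\tau^n_{2k_*(i)+1}]$.

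On each $[\tau^n_{2k},\tau^n_{2k+1}]$ the spread is bounded below on the half-open part, so Proposition~\ref{prop:SubIntervallProperty} applies and we may write $C(\varphi,[\tau^n_{2k},\tau^n_{2k+1}])$ as the sum of the $C(\varphi,[a_i,b_i])$ for $i$ with $k_*(i)=k$ plus non-negative ``gap'' contributions. Since $b_i\le t$ and $b_i\le\tau^n_{2k+1}$ whenever $k_*(i)=k$, we have $[a_i,b_i]\subseteq[\tau^n_{2k}\wedge t,\tau^n_{2k+1}\wedge t]$, and summing over $k$ yields
\begin{equation*}
C(\varphi,J\cap[0,t])=\sum_{i=1}^m C(\varphi,[a_i,b_i])\le \sum_{k\ge 0}C(\varphi,[\tau^n_{2k}\wedge t,\tau^n_{2k+1}\wedge t])=C^n_t(\varphi)
\end{equation*}
for all $n\ge N$. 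Taking $\liminf_{n\to\infty}$ and then the supremum over $J\in\mathcal{I}$ gives $C_t(\varphi)\le\liminf_n C^n_t(\varphi)$. The case $C_t(\varphi)=\infty$ is automatic: either the supremum is attained by some $J$ with $C(\varphi,J\cap[0,t])$ finite but arbitrarily large, in which case the same inequality sends $C^n_t(\varphi)\to\infty$, or some $C(\varphi,[a_i,b_i])$ is already $\infty$, in which case $C^n_t(\varphi)=\infty$ for all $n\ge N$.

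The main obstacle is the connectedness step, i.e.\ showing that each $[a_i,b_i)$ sits inside one single excursion once $n$ is large enough; after that the proof is just additivity and non-negativity. A secondary technicality is being careful about the strictness of the inequalities at the hitting times $\tau^n_{2k}$ and $\tau^n_{2k+1}$, which is handled by invoking the right-continuity of $\ov S-\un S$ and the definitions of the even (strict) and odd (non-strict) hitting times.
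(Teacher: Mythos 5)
Your proof is correct and follows the same route as the paper: the upper bound $C^n_t(\varphi)\le C_t(\varphi)$ is read off from the definition of $C_t(\varphi)$ as a supremum, and the lower bound is obtained by fixing $J\in\mathcal{I}$, choosing $N$ with $2^{-N}$ below the minimal spread over $J$, and showing $J\cap[0,t]$ is covered by the excursion intervals $[\tau^n_{2k}\wedge t,\tau^n_{2k+1}\wedge t]$ for $n\ge N$, then invoking Proposition~\ref{prop:SubIntervallProperty}. You spell out the ``each $[a_i,b_i)$ fits inside a single excursion'' step and the treatment of $C_t(\varphi)=\infty$ more explicitly than the paper does, but the substance is identical.
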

\begin{proof} We write $C^n$ instead of $C^n(\varphi)$ to not overburden the notation. Let $(\omega,t)\in\Omega\times[0,T]$. For $C_t(\omega)<\infty$,  we claim: for each $\varepsilon>0$ there is $N=N(\omega)\in\mathbb{N}$ s.t. \begin{align}\label{eq:ApproximationCost2}
		C_t(\omega)-\varepsilon\leq C^n_t(\omega)\leq C_t(\omega)\quad \text{for all}\quad n\geq N.
	\end{align} 
	Thus, let us prove \eqref{eq:ApproximationCost2}. It is obvious from Definitions~\ref{def:PathwiseCost} that we have $C_t(\omega)\geq C^n_t(\omega)$ for all $n\in\mathbb{N}$. To prove the other inequality, let $\varepsilon>0$ and choose $0\leq a_1<b_1\leq a_2<\dots\leq a_n<b_n\leq t$ s.t. $\inf_{t\in[a_i,b_i)}(\overline{S}_t(\omega)-\underline{S}_t(\omega))>0$ for $i=1,\dots, n$ and
	\begin{align}\label{eq:ApproximationCost3}
		C_t(\omega)-\varepsilon\leq \sum_{i=1}^{n}C(\varphi(\omega),[a_i,b_i])
	\end{align}   
	Let $\delta:=\min_{i=1,\dots,n}\inf_{t\in[a_i,b_i)}(\overline{S}_t(\omega)-\underline{S}_t(\omega))>0$ and choose $N\in\bbn$ s.t. $2^{-N}<\delta$. Then, it follows from the definition of the stopping times~\eqref{eq:StoppingTimes} that
	\begin{align*}
		\bigcup_{i=1}^n[a_i,b_i]\subseteq\bigcup_{k=0}^\infty [\tau^n_{2k}(\omega)\wedge t, \tau^n_{2k+1}(\omega)\wedge t]\quad \text{for all}\quad n\geq N. 
	\end{align*}
	Combining \eqref{eq:ApproximationCost3} this with Proposition~\ref{prop:SubIntervallProperty}, we find $C_t(\omega)-\varepsilon\leq C^n_t(\omega)$ for all $n\geq N,$ which proves (\ref{eq:ApproximationCost2}). Of course, for $C_t(\omega)=\infty$ the arguments are completely analogous. 
\end{proof}
\begin{proof}[Proof of Proposition~\ref{prop:costprocessproperties}]
	Applying Lemma~\ref{lemma:PredictableIntervall}, we find that $C^n$ 
	coincides with a predictable process up to evanescence. Together with Lemma~\ref{lemma:approximationlemma} this yields that $C$ does the same. 
\end{proof}

Next, we want to calculate the cost of an ``almost simple'' trading strategy (cf. Guasoni et al.~\cite{guasoni2012fundamental} for a detailed discussion).
\begin{definition}
	A predictable stochastic process $\varphi$ of finite variation is called an {\em almost simple strategy} if there is a sequence of stopping times $(\tau_n)_{n\geq 0}$ with $\tau_n<\tau_{n+1}$ on $\{\tau_n<\infty\}$ and $\#\{n: \tau_n(\omega)<\infty\}<\infty$ for all $\omega\in\Omega$, s.t. 
	\begin{align*}
		\varphi=\sum_{n=0}^{\infty}(\varphi_{\tau_n}\mathbbm{1}_{\llbracket \tau_n\rrbracket}+\varphi_{\tau_n+}\mathbbm{1}_{\rrbracket \tau_n,\tau_{n+1}\llbracket}).
	\end{align*} 
\end{definition}

\begin{proposition}\label{prop:SimpleStrat}
	Let $\varphi$ be an almost simple strategy. We have
	\begin{align*}
		C_t(\varphi)&=\sum_{n=0}^{\infty}\mathbbm{1}_{\{\tau_n\leq t\}}\left((\overline{S}_{\tau_n-}-S_{\tau_n-})(\varphi_{\tau_n}-\varphi_{\tau_n-})^++(S_{\tau_n-}-\underline{S}_{\tau_n-})(\varphi_{\tau_n}-\varphi_{\tau_n-})^-\right)\\
		&+\sum_{n=0}^{\infty}\mathbbm{1}_{\{\tau_n< t\}}\left((\overline{S}_{\tau_n}-S_{\tau_n})(\varphi_{\tau_n+}-\varphi_{\tau_n})^++(S_{\tau_n}-\underline{S}_{\tau_n})(\varphi_{\tau_n+}-\varphi_{\tau_n})^-\right)
	\end{align*}
	for all $t\in[0,T]$.
\end{proposition}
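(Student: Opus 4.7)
The key feature of an almost simple strategy is that, on every $\omega$, the path $t \mapsto \varphi_t(\omega)$ is piecewise constant with jumps occurring only at the finitely many times $\tau_n(\omega) < \infty$ (a left jump $\varphi_{\tau_n} - \varphi_{\tau_n-}$ at $\tau_n$ and a right jump $\varphi_{\tau_n+} - \varphi_{\tau_n}$ immediately afterwards). In particular $\varphi$ is of finite variation, so Remark~\ref{remark:StieltjesFiniteVar} applies on every subinterval. Denote by $F_t(\omega)$ the right-hand side of the claimed formula; it is a finite sum and we must show $C_t(\varphi)(\omega) = F_t(\omega)$ pointwise in $\omega$, which we fix from now on.

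\textbf{Upper bound $C_t(\varphi) \leq F_t$.} Take an arbitrary $J = \bigcup_{i=1}^m [a_i,b_i] \in \mathcal{I}$ and a single subinterval $[a,b] := [a_i, b_i \wedge t]$, which by definition has $\inf_{s\in[a_i,b_i)}(\overline{S}_s - \underline{S}_s) > 0$. Apply Proposition~\ref{prop:approximationcost} with a refining sequence $(P_n)_{n \in \bbn}$ of partitions of $[a,b]$ which, in addition to satisfying (i) and (ii) of that proposition, contains every jump time $\tau_k \in [a,b]$ from some index on. Because $\varphi$ is constant on $(\tau_{k-1}, \tau_k)$ and equal to $\varphi_{\tau_k+}$ on $(\tau_k, \tau_{k+1})$, each increment $\varphi_{t^n_j} - \varphi_{t^n_{j-1}}$ vanishes unless $t^n_j = \tau_k$ (left jump, equal to $\varphi_{\tau_k} - \varphi_{\tau_k-}$) or $t^n_{j-1} = \tau_k < t^n_j$ (right jump, equal to $\varphi_{\tau_k+} - \varphi_{\tau_k}$). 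For the left jump, any modified intermediate point $s^n_j \in [t^n_{j-1}, \tau_k)$ satisfies $s^n_j \uparrow \tau_k$ as the mesh shrinks, so the corresponding prices $\overline{S}_{s^n_j}, S_{s^n_j}, \underline{S}_{s^n_j}$ tend to the left limits at $\tau_k$. For the right jump, one may take $s^n_j = \tau_k$, producing exactly the right-continuous prices $\overline{S}_{\tau_k}, S_{\tau_k}, \underline{S}_{\tau_k}$. Passing to the limit in the Riemann-Stieltjes sums along $(P_n, \lambda_n)$ and summing over $i = 1, \dots, m$, we obtain
\[
C(\varphi, J\cap[0,t]) = \sum_{\tau_k \in J,\, \tau_k \leq t} (\text{left-jump term}) + \sum_{\tau_k \in J,\, \tau_k < t} (\text{right-jump term}) \;\leq\; F_t,
\]
since each summand is nonnegative. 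Taking the supremum over $J \in \mathcal{I}$ yields $C_t(\varphi) \leq F_t$.

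\textbf{Lower bound $C_t(\varphi) \geq F_t$.} It suffices to exhibit, for each $\varepsilon > 0$, some $J \in \mathcal{I}$ with $C(\varphi, J\cap[0,t]) \geq F_t - \varepsilon$. Enumerate the finitely many jump times $\tau_{n_1} < \dots < \tau_{n_N} \leq t$ that contribute a strictly positive term in the definition of $F_t$. If the left-jump term at $\tau_{n_j}$ is nonzero, then $\overline{S}_{\tau_{n_j}-} - \underline{S}_{\tau_{n_j}-} > 0$ and càdlàg of $\overline{S}$ and $\underline{S}$ furnishes, for any $\delta > 0$, an interval $[\tau_{n_j} - \delta, \tau_{n_j}]$ with spread bounded away from zero on $[\tau_{n_j}-\delta, \tau_{n_j})$. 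Likewise, if the right-jump term at $\tau_{n_j}$ is nonzero, then $\overline{S}_{\tau_{n_j}} - \underline{S}_{\tau_{n_j}} > 0$ and càdlàg provides an interval $[\tau_{n_j}, \tau_{n_j} + \delta]$ with spread bounded away from zero on $[\tau_{n_j}, \tau_{n_j}+\delta)$. Taking $\delta$ small enough that these candidate intervals are pairwise disjoint and contain no other jump time, their union $J$ lies in $\mathcal{I}$ and by the computation of the previous step $C(\varphi, J\cap[0,t])$ equals the sum of the corresponding jump contributions, which agrees with $F_t$.

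\textbf{Main obstacle.} The delicate point is the lower bound at jump times where the spread process vanishes on one side. There, the formula automatically yields zero contribution, and one must verify that no $J \in \mathcal{I}$ can produce a spurious extra term from such a jump; this is handled by the fact that the defining condition of $\mathcal{I}$ requires the spread to be bounded away from zero on a half-open interval around each jump, which is unavailable precisely when the corresponding term in $F_t$ already vanishes. The bookkeeping distinguishing left- vs.\ right-approachable jumps (mirrored by the indicators $\mathbbm{1}_{\{\tau_n \leq t\}}$ and $\mathbbm{1}_{\{\tau_n < t\}}$) is the only real subtlety.
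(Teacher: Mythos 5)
Your proof is correct and takes essentially the same approach as the paper, which simply notes that it suffices to consider partitions containing the points $\tau_i(\omega)-\delta$, $\tau_i(\omega)$ (resp.\ $\tau_i(\omega)$, $\tau_i(\omega)+\delta$) when the left-sided (resp.\ right-sided) spread at $\tau_i$ is positive, leaving the rest to the reader. You flesh out both directions of that brief sketch; the only minor imprecision is that the intermediate equality in your upper bound should be an inequality ``$\leq$'' (the left-jump at $\tau_k$ is not captured when $\tau_k$ coincides with a left endpoint $a_i$ of $J$), but this has no bearing on the conclusion $C_t(\varphi)\leq F_t$.
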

\begin{proof}
	For $\omega\in\Omega$ fixed, there is some $n\in\bbn_0$ with  
	$\tau_0(\omega)<\ldots<\tau_{n-1}(\omega)\le T$ and $\tau_n(\omega)=\infty$. Now, it is sufficient to consider partitions containing $\tau_i(\omega)-\delta,\tau_i(\omega)$ if $(\ov{S}_{\tau_i-}(\omega)  -S_{\tau_i-}(\omega))\wedge (S_{\tau_i-}(\omega)-\un{S}_{\tau_i-}(\omega))>0$ and
	$\tau_i(\omega),\tau_i(\omega)+\delta$ if $(\ov{S}_{\tau_i}(\omega)  -S_{\tau_i}(\omega))\wedge (S_{\tau_i}(\omega)-\un{S}_{\tau_i}(\omega))>0$
	for $i=0,\ldots,n-1$ and $\delta>0$ small. We leave the details to the reader.
\end{proof}

At last, we show how a $\varphi\in\bP$, which incurs finite cost on a stochastic interval where the spread is bounded away from zero, can be approximated by almost simple strategies on this interval s.t. the cost terms converges as well. 
\begin{proposition}\label{1.1.2020.2}
	Let $\varphi\in\bP$ and $\sigma\le \tau$ two stopping times s.t. \[\inf_{\sigma(\omega)\leq t<\tau(\omega)}(\overline{S}_t(\omega)-\underline{S}_t(\omega)>0\] for all $\omega\in\Omega$ and $C(\varphi,[\sigma\wedge T,\tau\wedge T])<\infty$ a.s.
	Then, there exists a uniformly bounded sequence $(\varphi^n)_{n\in\mathbb{N}}$ s.t. $\varphi^n\mathbbm{1}_{\rrbracket \sigma, \tau\rrbracket}$ is almost simple with $\varphi^n_\sigma=\varphi_\sigma$ on $\{\sigma<\infty\}$ and $\vert\varphi-\varphi^n\vert\leq 1/n$ on $\llbracket \sigma,\tau\rrbracket$ (up to evanescence) for all $n\in\mathbb{N}$, and s.t.
	\begin{align}\label{eq:ApproximationBV}
		\sup_{t\in[0,T]}\vert C(\varphi^n,[\sigma\wedge t, \tau\wedge t])-C(\varphi,[\sigma\wedge t, \tau\wedge t])\vert\to0,\quad\PM\mbox{-a.s.} 
	\end{align}
\end{proposition}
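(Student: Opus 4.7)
The plan is to work pathwise. By Proposition~\ref{prop:ExistenceCostTerm}, the assumption $C(\varphi, [\sigma, \tau]) < \infty$ together with $\inf_{\sigma\le t<\tau}(\overline{S}_t - \underline{S}_t) > 0$ forces $\mathrm{Var}_\sigma^\tau(\varphi) < \infty$ almost surely; I fix $\omega$ in the full-measure set where this holds. For each $n$, I recursively define stopping times $\rho^n_0 := \sigma$ and $\rho^n_{k+1} := \inf\{t > \rho^n_k : |\varphi_t - \varphi_{\rho^n_k}| \ge 1/n\} \wedge \tau$. The finite pathwise variation of $\varphi$ on $[\sigma,\tau]$ guarantees that $\rho^n_k = \tau$ for all $k \ge K_n(\omega)$ with some finite $K_n(\omega)$. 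Setting $\varphi^n := \varphi$ on $\llbracket 0, \sigma\llbracket \cup \rrbracket \tau, T\rrbracket$ and, on $\llbracket\sigma, \tau\rrbracket$,
\[
\varphi^n := \sum_{k\ge 0}\varphi_{\rho^n_k}\mathbbm{1}_{\llbracket \rho^n_k, \rho^n_{k+1}\llbracket},
\]
one checks that $\varphi^n$ is uniformly bounded by $\|\varphi\|_\infty$, that $\varphi^n_\sigma = \varphi_\sigma$, that $|\varphi - \varphi^n| \le 1/n$ on $\llbracket \sigma, \tau\rrbracket$, and that $\varphi^n\mathbbm{1}_{\rrbracket \sigma, \tau\rrbracket}$ is almost simple with distinguished stopping times $\sigma<\rho^n_1<\ldots<\rho^n_{K_n}$.

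For the cost convergence, I would use Proposition~\ref{prop:SimpleStrat} to express $C(\varphi^n, [\sigma\wedge t, \tau\wedge t])$ as a finite explicit sum over the jump times $\{\rho^n_k : \rho^n_k \le t\}$. This sum is precisely a modified Riemann–Stieltjes sum $R(\varphi, P_n \cap [\sigma, \tau\wedge t], \lambda_n)$ for $\varphi$ on the partition $P_n := \{\rho^n_k\}_k$ with the left-endpoint intermediate subdivision. To invoke Proposition~\ref{prop:approximationcost} and conclude convergence to $C(\varphi, [\sigma \wedge t, \tau \wedge t])$, the partition $P_n$ must satisfy condition (i) (vanishing oscillation of $\overline{S}-S$ and $S-\underline{S}$ on subintervals) and (ii) (variation sums converging to $\mathrm{Var}_\sigma^\tau(\varphi)$). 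Condition (ii) is automatic after enlarging $P_n$, since the $\rho^n_k$ already capture every $\varphi$-increment of size $\ge 1/n$. Condition (i) is the delicate point: I would enhance $P_n$ by further stopping times at jumps of $\overline{S}, \underline{S}, S$ of magnitude $\ge 1/n$ (finitely many by the c\`adl\`ag property on $[0,T]$) together with a deterministic dyadic mesh $\{kT/2^n\}$, which jointly controls the oscillation of the c\`adl\`ag processes $\overline{S}-S$ and $S-\underline{S}$. Since any refinement can only decrease $|\varphi^n - \varphi|$, the properties listed above are preserved when $\varphi^n$ is redefined on the refined partition.

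To lift pointwise convergence at $t=\tau$ to the uniform statement, I would apply the same Paragraph~2 argument simultaneously at all $t \in [0,T]$, exploiting the additivity of costs (Proposition~\ref{prop:SubIntervallProperty}) together with the monotonicity and l\`agl\`ad structure of $t\mapsto C(\varphi^n, [\sigma \wedge t, \tau \wedge t])$ and $t\mapsto C(\varphi, [\sigma \wedge t, \tau \wedge t])$ from Proposition~\ref{prop:PathwiseProperties}; a Dini-type argument for monotone functions with only countably many discontinuities then upgrades pointwise to uniform convergence. The principal obstacle I expect is the construction in Paragraph~2: finding stopping times that are measurable in $\omega$, that terminate in finitely many steps, and that simultaneously enforce conditions (i) and (ii) of Proposition~\ref{prop:approximationcost}. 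The tension is that while the $\varphi$-based stopping times terminate thanks to finite variation, the processes $\overline{S}, \underline{S}, S$ are only c\`adl\`ag, so a purely $S$-based refinement would fail to terminate; the resolution combines the finitely many large c\`adl\`ag jumps with a vanishing-mesh deterministic refinement, on which the remaining continuous oscillation can be forced to zero.
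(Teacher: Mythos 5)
Your overall strategy — build $\varphi^n$ as a piecewise-constant approximation that jumps only when $\varphi$ has moved by $\geq 1/n$, then compare costs — is indeed the same high-level approach as the paper. However, there are two serious gaps in your execution.

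\textbf{The stopping times may not increase.} You define $\rho^n_{k+1}:=\inf\{t>\rho^n_k:|\varphi_t-\varphi_{\rho^n_k}|\geq 1/n\}\wedge\tau$. If $\varphi$ has a right jump at $\rho^n_k$ of magnitude $\geq 1/n$, i.e.\ $|\varphi_{\rho^n_k+}-\varphi_{\rho^n_k}|\geq 1/n$, then the infimum over $\{t>\rho^n_k\}$ equals $\rho^n_k$ itself, so $\rho^n_{k+1}=\rho^n_k$ and the recursion stalls. The paper avoids exactly this by anchoring the threshold at the right limit: $T^n_k:=\inf\{t\in(T^n_{k-1},\tau]:|\varphi_t-\varphi_{T^n_{k-1}+}|\geq 1/n\}$, and then uses $\varphi_{T^n_k+}$ on the subsequent (left-open) interval. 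Related to this, your $\varphi^n$ as written is constant equal to $\varphi_{\rho^n_k}$ on $\llbracket\rho^n_k,\rho^n_{k+1}\llbracket$, and the graph $\llbracket\rho^n_k\rrbracket$ need not be a predictable set unless $\rho^n_k$ is a predictable stopping time; the paper introduces the auxiliary predictable stopping times $\pi^n_k$ precisely to secure predictability of the resulting almost simple strategy.

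\textbf{Pointwise convergence does not yield uniform convergence here.} The heart of \eqref{eq:ApproximationBV} is the supremum over $t$. Both $t\mapsto C(\varphi^n,[\sigma\wedge t,\tau\wedge t])$ and the target are increasing and typically discontinuous, and pointwise convergence of increasing functions to a discontinuous limit is \emph{not} uniform in general (e.g.\ $\mathbbm{1}_{[1/2+1/n,1]}\to\mathbbm{1}_{[1/2,1]}$ pointwise with $\sup$-distance $1$). Your proposed ``Dini-type argument for monotone functions with countably many discontinuities'' does not close this gap: it would require showing that the jump locations and jump sizes of $C(\varphi^n,\cdot)$ align with those of $C(\varphi,\cdot)$ up to a vanishing error, which is exactly the delicate part of the proof. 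The paper handles this by the two interval-by-interval estimates~\eqref{eq:estimatestrong} (a strong bound valid where the spread's oscillation is below a fixed $\delta$, yielding an error proportional to $\delta\cdot\mathrm{Var}$ plus the variation mismatch) and~\eqref{eq:estimateweak} (a weaker $O(1/n)$ bound used only at the finitely many places where the spread oscillates by more than $\delta$), summed and combined with the uniform variation convergence of Step~2. Invoking Proposition~\ref{prop:approximationcost} is also not quite right: the sum $C(\varphi^n,\cdot)$ produced by Proposition~\ref{prop:SimpleStrat} weights increments of $\varphi$ by left limits $\overline{S}_{\rho^n_k-}-S_{\rho^n_k-}$, which are not of the form $\overline{S}_{s}-S_{s}$ for an actual point $s\in[\rho^n_{k-1},\rho^n_k)$, so it is not literally a modified Riemann--Stieltjes sum of $\varphi$ on $\{\rho^n_k\}$, and Proposition~\ref{prop:approximationcost} in any case delivers only fixed-$t$ convergence. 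You would need to reproduce the quantitative two-regime estimate to finish.
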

The proof is postponed to Appendix~\ref{app:TechnicalCosts}.
\subsection{Definition and characterization}
For the remainder of the paper, we make the following assumption on the bid-ask spread.

\begin{assumption}\label{8.12.2019.3}
	For every $(\omega,t)\in \Omega\times [0,T)$ with $\ov{S}_t(\omega)=\un{S}_t(\omega)$
	there exists an $\eps>0$ s.t.
	$\ov{S}_s(\omega)=\un{S}_s(\omega)$ for all $s\in(t,(t+\eps)\wedge T)$ or 
	$\ov{S}_s(\omega)>\un{S}_s(\omega)$ for all $s\in(t,(t+\eps)\wedge T)$.
\end{assumption}
This means that each zero of the path $t\mapsto \overline{S}_t(\omega)-\underline{S}_t(\omega)$ is either an inner point from the right 
of the zero set or a starting point of an excursion away from zero. This excludes, e.g., Brownian behavior of the spread, which is exploited in 
Example~\ref{16.12.2019.1}, where we show what can go wrong without this assumption.\\

For the rest of the paper, we work with the predictable versions of the cost processes (cf. Proposition~\ref{prop:costprocessproperties}), and identify processes that coincide up to evanescence.
Given a semimartingale~$S$, we define the operator $\Pi$ that maps a bounded, predictable strategy~$\vp$ starting at zero, i.e., $\vp\in\bP$, to the associated 
$[-\infty,\infty)$-valued risk-less position (also starting at zero) by 
\beam\label{6.1.2020.01}
\Pi_t(\vp) := \vp\mal S_t - \vp_t S_t - C_t(\vp),\quad t\in[0,T],
\eeam
which coincides with $\vp\mal S_{t-} - \vp_t S_{t-} - C_t(\vp)$.
Throughout the paper, $\vp\mal S$ denotes the \emph{standard} stochastic integral as defined by 
Definition~III.6.17 in \cite{jacod.shiryaev}. If stock positions are evaluated by the semimartingale~$S$,
the wealth process is given by $V_t(\vp):= \vp\mal S_t -C_t(\vp) =  \Pi_t(\vp) + \vp_t S_t$. If there is ambiguity about the semimartingale $S$ used in the construction, we write $C^S(\vp),\Pi^S(\vp),V^S(\vp)$ instead of $C(\vp),\Pi(\vp),V(\vp)$.

We still have to introduce a measure that gives some information about the convergence of integrals w.r.t. $S$. There exists a probability measure~$Q\sim \PM$ s.t. the semimartingale~$S$ possesses a decomposition $S=M+A$, where $M$ is a $Q$-square integrable martingale and $A$ is a process of $Q$-integrable variation
(Theorem~58 in Chapter~VII of Dellacherie and Meyer~\cite{dellacherie.meyer.1982}). We introduce the finite measure
\begin{align}\label{eq:Q.muS}
	\mu^S(B):= \E_{Q}\left(\mathbbm{1}_B\mal \langle M,M\rangle_T \right)
	+ \E_Q\left( \mathbbm{1}_B\mal {\rm Var}_T(A) \right),\quad B\in\mathcal{P},
\end{align}
where $\langle M,M\rangle$ denotes the predictable quadratic variation of $M$ (see, e.g., \cite[Chapter 1, Theorem 4.2]{jacod.shiryaev}).

The following theorem characterizes the process~$V(\vp)$ as the limit of
wealth processes associated with suitable almost simple strategies. 
Note that for almost simple strategies, $V$ coincides with the intuitive wealth process that can be written down without any limiting procedure. 
\begin{theorem}\label{8.12.2019.1}
	Let  $\vp\in \bP$ and let $\mu$ be a $\sigma$-finite measure on the predictable $\sigma$-algebra with $\mu^S\ll \mu$. 
	\begin{enumerate}[(i)]
		\item For all \{0,1\}-valued decreasing predictable processes $A$ and all uniformly bounded sequences of predictable processes~$(\vp^n)_{n\in\bbn}$, the following implication holds:
		\beao
		& & \vp^n\to \vp\ \mbox{pointwise on\ }\{\ov{S}_->\un{S}_-\}\cap \{A=1\},\ \mu^S\mbox{-a.e. on\ }\{\ov{S}_-=\un{S}_-\}\cap\{A=1\}\\
		& & \implies \liminf_{n\to\infty} V(\vp^n)\le V(\vp)\quad
		\mbox{on}\ \{A=1\}\ \mbox{up to evanescence.}
		\eeao 
		\item There exists a uniformly bounded sequence of almost simple 
		strategies~$(\vp^n)_{n\in\bbn}$ s.t.
		\beao
		& & \vp^n\to \vp\ \mbox{pointwise on\ }\{\ov{S}_->\un{S}_-\}\cap \{C(\vp)<\infty\},\  \mu\mbox{-a.e. on\ }\{\ov{S}_-=\un{S}_-\}{\cap \{C(\vp)<\infty\}},\\
		& & \mbox{and}\  \sup_{t\in[0,T]}\vert V_t(\vp^n)-V_t(\vp)\vert\mathbbm{1}_{\{C_t(\vp)\leq K\}}\to 0\ \mbox{in probability for}\ n\to\infty\ \mbox{and\ all}\ K\in\bbn.
		\eeao 
	\end{enumerate}
\end{theorem}
\begin{remark}
	In the special case $C(\vp)<\infty$, which is equivalent to $V(\vp)>-\infty$,  setting $A=1$ yields the following 
	characterization of the wealth process of a bounded strategy:
	(i) The wealth of the strategy exceeds the limiting wealth of (almost) pointwise converging simple strategies and (ii) there exists a special approximating  sequence s.t. the wealth processes converge. 
	
	On the set $\{V(\vp)=-\infty\}=\{C(\vp)=\infty\}$, one cannot expect the existence of a sequence of simple strategies that converge pointwise to $\vp$ on $\{\ov{S}_->\un{S}_-\}$.
	Nevertheless, Theorem~\ref{8.12.2019.1}(i) provides a motivation for $V(\vp)=-\infty$.
	
	For the proof of Corollary~\ref{cor:Independence}, we need the theorem in this general form, covering the case of infinite costs, since 
	{\em a priori} it is not  clear that the latter property does not depend on the choice of $S$.
\end{remark}
\begin{remark}
	In Theorem~\ref{8.12.2019.1}(i), one cannot expect convergence ``uniformly in probability'' as in the frictionless case. Indeed, consider $S=1$, $\ov{S}=2$, and $\vp^n = \mathbbm{1}_{\zu 1/n,1\zu}$ which  converges pointwise to $\vp = \mathbbm{1}_{\zu 0,1\zu}$ but $V(\vp^n)-V(\vp)=\mathbbm{1}_{\zu 0,1/n\zu}$.
\end{remark}
\begin{corollary} \label{cor:Independence} Let $\vp\in \bP$. The self-financing condition, i.e., the risk-less position~$\Pi(\vp)$, does not depend on the choice 
	of the semimartingale price system up to evanescence.  
\end{corollary}
\begin{proof}
	Let $\vp\in \bP$ and $S,\wt{S}$ be semimartingale price systems. Of course, the measure~$Q$ in \eqref{eq:Q.muS} can be chosen jointly for $S$ and $\widetilde{S}$ and w.l.o.g. $Q=\mathbb{P}$. We set $\mu := \mu^S + \mu^{\wt{S}}$. Let us fix $K\in\bbn$ and show that
	\beam\label{22.8.2020.1}
	\Pi^{\wt{S}}(\vp)\ge \Pi^S(\vp)\quad \mbox{on}\ \{C^S(\vp)\le K\}\ \mbox{up to evanescence.}
	\eeam
	Observe that (\ref{22.8.2020.1}) for all $K\in\bbn$ implies that $\Pi^{\wt{S}}(\vp)\ge \Pi^S(\vp)$ up to evanescence since $\Pi^S(\vp)=-\infty$ on 
	$\{C^S(\vp)=\infty\}=(\Omega\times[0,T])\setminus \cup_{K\in\bbn}\{C^S(\vp)\le K\}$.
	Then, the assertion of the corollary follows by symmetry. Thus, it is sufficient to show (\ref{22.8.2020.1}).
	
	For this, let $(\vp^n)_{n\in\bbn}$ be a sequence of almost simple strategies satisfying the properties in 
	Theorem~\ref{8.12.2019.1}(ii) for the semimartingale $S$ and $\mu$ given above. 
	According to Theorem~\ref{8.12.2019.1}(ii), we may suppose that  
	\begin{align}\label{eq:notinftyV}
		\sup_{t\in[0,T]}\vert V^S_t(\vp^n)-V^S_t(\vp)\vert\mathbbm{1}_{\{C^S_t(\vp)\leq K\}}\to 0 \quad \PM\text{-a.s.}
	\end{align}
	by passing to a subsequence. On the other hand, by applying 
	Theorem~\ref{8.12.2019.1}(i) with regard to the semimartingale~$\widetilde{S}$ and $A:=\mathbbm{1}_{\{C^S(\vp)\leq K\}}$, we get 
	\begin{align}\label{eq:-inftywtV}
		\liminf_{n\to\infty}V^{\wt{S}}(\vp^n)\le V^{\wt{S}}(\vp)\quad \mbox{on}\ \{C^S(\vp)\leq K\}\ \mbox{up to evanescence.}
	\end{align}
	In addition, Proposition~\ref{prop:SimpleStrat}
	and elementary calculations yield the assertion of the corollary for almost simple strategies, i.e.,
	\beam\label{22.8.2020.2}
	V^{\wt{S}}(\vp^n)=V^S(\vp^n)+\vp^n(\wt{S}-S),\quad n\in\bbn.
	\eeam
	It remains to analyze $(\vp^n-\vp)(\wt{S}-S)$, especially on $\{\ov{S}_-=\un{S}_-\}\cap\{\ov{S}>\un{S}\}$.
	If a sequence of c\`adl\`ag processes converges to zero uniformly in probability,
	the same holds for the associated jump processes. Thus,
	the choice of $\mu$ and the same arguments as in the proof of Theorem~\ref
	{8.12.2019.1}(i) yield 
	\begin{align}\label{eq:JumpConvergence}\begin{aligned}
			&\sup_{t\in[0,T]}\vert\vp^n_t\Delta S_t-\vp_t\Delta S_t\vert \mathbbm{1}_{\{\overline{S}_{t-}-\underline{S}_{t-}=0,\ C^S_t(\vp)<\infty\}}\to 0 \ \text{in probability for}\ n\to\infty\\ &\sup_{t\in[0,T]}\vert\vp^n_t\Delta \widetilde{S}_t-\vp_t\Delta \widetilde{S}_t\vert
			\mathbbm{1}_{\{\overline{S}_{t-}-\underline{S}_{t-}=0,\ C^S_t(\vp)<\infty\}}\to 0\ \text{in probability for}\ n\to\infty.
		\end{aligned}
	\end{align} 
	
	By passing to a further subsequence (again denoted by $(\vp^n)_{n\in\bbn}$), we can and do assume that the convergence in \eqref{eq:JumpConvergence} holds for $\mathbb{P}$-a.e. $\omega\in\Omega$. Thus, on $\{\ov{S}_-=\un{S}_-, C^S(\vp)<\infty\}$ we have $\vp^n(\widetilde{S}-S)=\vp^n(\wt{S}_{-}-S_{-})+\vp^n(\Delta\wt{S}-\Delta S)=\vp^n(\Delta\wt{S}-\Delta S)\to \vp(\Delta\wt{S}-\Delta S)=\vp(\widetilde{S}-S)$ up to evanescence. In addition, Theorem~\ref{8.12.2019.1}(ii) yields $\vp^n(\wt{S}-S)\to\vp(\wt{S}-S)$ on $\{\ov{S}_->\un{S}_-, C^S(\vp)<\infty\}$, i.e., we have $\vp^n(\wt{S}-S)\to\vp(\wt{S}-S)$ on $\{C^S(\vp)<\infty\}$ up to evanescence. Combining this 
	with \eqref{eq:notinftyV}, (\ref{eq:-inftywtV}), and (\ref{22.8.2020.2}) yields 
	\begin{align*}
		\Pi^{\wt{S}}(\vp)-\Pi^S(\vp)&= V^{\wt{S}}(\vp)-\vp \wt{S}-\left(V^{{S}}(\vp)-\vp{S}\right)
		\\&\geq\liminf_{n\to\infty}\left(V^{\wt{S}}(\vp^n)-V^{{S}}(\vp^n)-\vp^n\left( \wt{S}-{S}\right)\right)\\
		&=0\quad\mbox{on}\quad \{C^S(\vp)\leq K\}\ \mbox{up to an evanescence},
	\end{align*}
	and we are done. We note that the differences above are well-defined since
	$\Pi^S(\vp)$ and $V^{{S}}(\vp)$ are finite on $\{C^S(\vp)\leq K\}$. 
\end{proof}

The following example shows that our approach does not work without Assumption~\ref{8.12.2019.3}.
\begin{example}\label{16.12.2019.1}
	Let $\un{S}=-|B|+L^B$ and $\ov{S}=|B|+L^B$, where $B$ is a standard Brownian motion and $L^B$ its local time at zero in the
	sense of \cite[page 212]{protter2005stochastic}. Consider the strategy~$\vp:=\mathbbm{1}_{\{\un{S}=\ov{S}\}\cap(\Omega\times(0,T])}
	=\mathbbm{1}_{\{B=0\}\cap(\Omega\times(0,T])}$ and different semimartingale price systems~$S=\alpha|B|+L^B$ for $\alpha\in [-1,1]$. 
	By Definition~\ref{def:PathwiseCost}, we get $C(\vp)=0$. By \cite[Theorem~IV.69 and Corollary~3 of Theorem~IV.70]{protter2005stochastic},
	we have $\vp\mal S = (\alpha+1)L^B$. 
	Together this implies $\Pi(\vp) = (\alpha+1)L^B - \mathbbm{1}_{\{B=0\}} L^B$. 
	Since $L^B$ does not vanish, the self-financing condition would depend on the choice of $\alpha$.
\end{example}
\begin{corollary}\label{cor:ExtensionPrep}
	Let $\varphi\in \bP$ and $(\vp^n)_{n\in\bbn}$ be uniformly bounded. If $\vp^n\to \vp$ pointwise on $\{\overline{S}_->\underline{S}_-\}$ 
	and $\mu^S\text{-a.s.}$ on $\{\overline{S}_-=\underline{S}_-\}$, then there exists a deterministic subsequence~$(n_k)_{k\in\bbn}$ s.t. 
	\begin{align*}
		\lim\limits_{k\to\infty}(V(\vp^{n_k})-V(\vp))^+=0 \ \text{up to evanescence.}
	\end{align*} 
\end{corollary}
\begin{proof}
	The proof of Theorem~3.19~(i) shows that we have $\vp^n\bigcdot S\to \vp\bigcdot S$ uniformly in probability. 
	Hence, we can choose a subsequence $(n_k)_{k\in\mathbb{N}}$ s.t. $\vp^{n_k}\bigcdot S\to \vp\bigcdot S$ up to evanescence. 
	Finally, together with $\liminf_{k\to\infty} C(\vp^{n_k})\geq \liminf_{n\to\infty} C(\vp^n)\geq C(\vp)$ the assertion follows. 
\end{proof}
\section{Extension to unbounded strategies}\label{15.7.2021.2}

Let $(\bP)^\Pi:=\{\vp\in\bP\ :\ \Pi(\varphi)>-\infty\ \mbox{up to evanescence}\}$. Note that by the preceding corollary this set does not depend on the semimartingale price system. In this section, we want to extend the self-financing condition, i.e., the operator $\Pi$ from $(\bP)^\Pi$ to an as large as possible set of predictable strategies. Therefore, recall that the space of adapted l\`{a}dl\`{a}g processes $\mathcal{L}$ endowed with the topology of uniform convergence in probability, which is defined by the 
quasinorm~$\Vert X\Vert_{up}=\mathbb{E}\left[\sup_{t\in[0,T]}\vert X_t\vert \wedge 1\right],\ X\in\mathcal{L}$, is a complete metric space with metric $d_{up}(X,Y):=\Vert X-Y\Vert_{up}$ for $X,Y\in\mathcal{L}$. Indeed, this is a consequence of the completeness of the space of l\`{a}dl\`{a}g functions (also called regulated functions) equipped with the supremum norm (see, e.g., Fraňková~\cite[Point 1.8]{Frankova}). In addition, if $(X^n)_{n\in\bbn}\subseteq \mathcal{L}$ converges to $X\in\mathcal{L}$ with regard to $d_{up}$, we write $\uplim_{n\to\infty}X^n=X$. At this step, the restriction from $\bP$ to $(\bP)^\Pi$
is not critical since the latter is sufficiently large to approximate finite portfolio
processes, in which we are finally interested, in a reasonable way.

\begin{definition}\label{def:DefinitionL}
	Let $L$ denote the subset of real-valued, predictable strategies $\varphi$ s.t. there exists a sequence $(\varphi^n)_{n\in\bbn}\subset(\bP)^\Pi$ with 
	\begin{enumerate}[(i)]
		\item $\vp^n\to\vp$ pointwise on $\Omega\times [0,T]$ and $(\vp^n)^+\leq \vp^+$, $(\vp^n)^-\leq \vp^-$ for all $n\in\bbn$,
		\item there exists a semimartingale $S$ with $\underline{S}\leq S\leq \overline{S}$ s.t.
		\begin{align*}
			(V^S(\vp^n))_{n\in\bbn}=(\vp^n\bigcdot S- C^S(\vp^n))_{n\in\bbn}
		\end{align*}
		is Cauchy in $(\mathcal{L}, d_{up})$ and s.t. for all sequences $(\widetilde{\vp}^n)_{n\in\bbn}\subseteq (\bP)^\Pi$ satisfying (i), 
		there exists a deterministic subsequence $(n_k)_{k\in\mathbb{N}}$ s.t.
		\begin{align}\label{23.8.2020.1}
			\left(V^S(\widetilde{\vp}^{n_k})-V^S(\vp^{n_k})\right)^+ \to 0,\quad k\to\infty,\ \mbox{up to evanescence.} 
		\end{align}
	\end{enumerate}
\end{definition}

The requirement (ii) means that in the limit, the approximation with $(\vp^n)_{n\in\bbn}$ is better than all other pointwise 
approximations $(\widetilde{\vp}^n)_{n\in\bbn}$ if the stock position is evaluated by the same semimartingale. In (\ref{23.8.2020.1}), we cannot expect
uniform convergence in time, but exceptional $\PM$-null sets can be chosen independently of time.
By Corollary~\ref{cor:ExtensionPrep}, we have $(\bP)^\Pi\subseteq L$.

\begin{proposition}\label{prop:Welldefined}
	Let $\vp\in L$. If $(\vp^n)_{n\in\bbn}\subseteq (\bP)^\Pi$ is a sequence satisfying the assertions of Definition \ref{def:DefinitionL} for $\vp$ with regard to a semimartingales $S$ and $(\widetilde{\vp}^n)_{n\in\bbn}\subseteq (\bP)^\Pi$ is another sequence satisfying the same assertions for $\vp$ with regard to a semimartingale $\widetilde{S}$, then we have
	\begin{align*}
		\uplim_{n\to\infty} V^S(\vp^n)-\vp S=\uplim_{n\to\infty} V^{\wt{S}}(\widetilde{\vp}^n)-\vp \widetilde{S}
	\end{align*}
	up to evanescence.
\end{proposition}
We now can extend the operator $\Pi$ to $L$ by setting
\begin{align*}
	\Pi(\vp):=\uplim_{n\to\infty}V^S(\vp^n)-\vp S,\quad \vp\in L,
\end{align*}
where $(\vp^n)_{n\in\mathbb{N}}$ is a sequence satisfying the assertions of Definition~\ref{def:DefinitionL} with regard to the semimartingale $S$. By Proposition~\ref{prop:Welldefined}, $\Pi$ is well-defined on $L$, i.e., it does not depend on the choice of the approximating sequence and the semimartingale. 
\begin{proof}[Proof of Proposition \ref{prop:Welldefined}]
	Let $(\vp^n)_{n\in\bbn}$ and $(\widetilde{\vp}^n)_{n\in\bbn}$ be sequences that satisfy the assumptions of the proposition.
	Corollary \ref{cor:Independence} states that the process $\Pi(\wt{\vp}^n)$ does not depend on the semimartingale, i.e., we have
	\begin{align}\label{19.7.2021.1}
		V^S(\wt{\vp}^n)-\wt{\vp}^nS=V^{\wt{S}}(\wt{\vp}^n)-\wt{\vp}^n\wt{S} \quad \text{up to evanescence for all}\ n\in\bbn,
	\end{align}
	and thus
	\begin{align} \nonumber
		&\left(V^{\wt{S}}(\wt{\vp}^n)-\wt{\vp}^n\wt{S}-\left(V^S({\vp}^n)-{\vp}^nS\right)\right)^+ =\left(V^S(\wt{\vp}^n)-V^S(\vp^n)+(\vp^n-\wt{\vp}^n)S\right)^+\\ \label{eq:Welldefined1} &\leq \left(V^S(\wt{\vp}^n)-V^S(\vp^n)\right)^++\left((\vp^n-\wt{\vp}^n)S\right)^+
	\end{align}
	up to evanescence for all $n\in\bbn$. We have that $\vp^n\to\vp$ and $\wt{\vp}^n\to
	\vp$ pointwise as $n\to \infty$. We may pass to a subsequence s.t. $((V^S(\wt{\vp}^n)-V^S(\vp^n))^+)_{n\in\bbn}$ converges to zero pointwise up to evanescence by (\ref{23.8.2020.1}). In addition, we may further pass to subsequences, s.t. $(V^{\wt{S}}(\wt{\vp}^n))_{n\in\bbn}$, $(V^S(\vp^n))_{n\in\bbn}$ converge pointwise up to evanescence. Thus, by symmetry, (\ref{eq:Welldefined1}) yields the assertion.
\end{proof}

\subsection{Frictionless markets}

We now turn towards the frictionless case, i.e., $\overline{S}=\underline{S}=S$,
and show that $L$ equals the set $L(S)$ of $S$-integrable processes:
\begin{proposition} \label{prop:FrictionlessL}
	Let $\underline{S}=\overline{S}=S$ be a semimartingale. Then, we have $L=L(S)$ and $\Pi(\vp)=\vp\bigcdot S -\vp S$ for all $\vp\in L$.  
\end{proposition}
The set $L(S)$ was introduced as given in Definition~III.6.17 of \cite{jacod.shiryaev} by Jacob~\cite{jacob}, but there are  equivalent definitions that may look a bit smarter and that are based on $\bP\subseteq L(S)$.
For this, recall that the space of semimartingales~$\mathbb{S}$ endowed with the semimartingale topology defined by the metric
\begin{align}\label{eq:dSemimartingaleTop}
	d_{\mathbb{S}}(X,Y):=\sup_{H\in \bP,\ \Vert H\Vert_{\infty}\leq 1}\Vert H\mal (X-Y)\Vert_{up},\quad X,Y\in \mathbb{S}
\end{align}
is a complete metric space by Émery~\cite[Theorem 1]{Emery}. The following characterization of $S$-integrability is effectively due to Chou et al.~\cite{Chou}.
\begin{note}\label{note:ChouCharacterization}
	Let $S$ be a semimartingale and $\vp$ be a predictable process. The following 
	assertions are equivalent
	\begin{enumerate}[(i)]
		\item $\vp\in L(S)$.
		\item There exists a sequence $(\vp^n)_{n\in\bbn}\subseteq \bP$ s.t. $\vp^n\to\vp$ pointwise, 
		$(\vp^n)^+ \leq \vp^+$, $(\vp^n)^- \le \vp^-$ for all $n\in\bbn$, and $(\vp^n\mal S)_{n\in\bbn}$ is Cauchy in $(\mathbb{S}, d_{\mathbb{S}})$.
		\item For all sequences $(\vp^n)_{n\in\bbn}\subseteq \bP$ with $\vp^n\to\vp$ pointwise and $\vert \vp^n\vert \leq \vert \vp\vert$ for all $n\in\bbn$, the sequence $(\vp^n\mal S)_{n\in\bbn}$ is Cauchy in $(\mathbb{S}, d_{\mathbb{S}})$.
	\end{enumerate}
	In this case, the integral~$\vp\mal S$ is given by the $d_{\mathbb{S}}$-limit of 
	any such sequence~$(\vp^n\mal S)_{n\in\bbn}$.
\end{note}
\begin{proof}[Proof of Note~\ref{note:ChouCharacterization}]
	In the definition on page 130, Chou et al.~\cite{Chou} (see also \cite[Chapter VIII, 75]{dellacherie.meyer.1982}) introduce the special approximating
	sequence $\vp^n:= \vp\mathbbm{1}_{\{|\vp|\le n\}}$ for some predictable process $\vp$. Later on, the only
	properties of $(\vp^n)_{n\in\bbn}$ they use is that $\vp^n\in\bP$ for $n\in\bbn$, $|\vp^n|\le |\vp|$ for $n\in\bbn$, and $\vp^n\to \vp$ pointwise. Thus, the note is just a reformulation of their results \cite[Properties b), c), d) on page 130 and Theoreme 1]{Chou} (see also \cite[Chapter VIII, 74-77]{dellacherie.meyer.1982})
\end{proof}

A similar characterization is provided in Eberlein and Kallsen~\cite{eberlein.kallsen}, page~193 by 
\beao
L(S)=\{ \vp\ {\rm predictable}: \exists\ {\rm semimartingale\ } Z\ {\rm s.t.\ }
(\vp \mathbbm{1}_{\{|\vp|\le n\}})\mal S = \mathbbm{1}_{\{|\vp|\le n\}}\mal Z,\ n\in\bbn\}.
\eeao
It emphasizes the maximality of $L(S)$ if one {\em requires} that the integral~$\vp\mal S:=Z$ itself is a semimartingale.
By contrast, in our characterization from Definition~\ref{def:DefinitionL}, the semimartingale property can be seen more as a result since it is stated with the up-metric and not with the semimartingale metric. 
\begin{proof}[Proof of Proposition~\ref{prop:FrictionlessL}] 
	\emph{Ad $L(S)\subseteq L$:} This follows from (i)$\Rightarrow$ (ii)$\Rightarrow$(iii) in Note~\ref{note:ChouCharacterization}.
	
	\emph{Ad $L\subseteq L(S)$:} Let $\vp\in L$.  Thus, there exists $(\vp^n)_{n\in\bbn}\subseteq\bP$ satisfying Definition~\ref{def:DefinitionL}(i) and (ii). In particular, the sequence $(V^S(\vp^n))_{n\in\bbn}=(\vp^n\bigcdot S)_{n\in\bbn}$ is Cauchy with regard to $d_{up}$. Let us demonstrate that the sequence is also Cauchy in $(\mathbb{S},d_{\mathbb{S}})$ by contradiction, i.e., we assume that there exists $\varepsilon>0$, a sequence $(H^n)_{n\in\bbn}$ of predictable processes with $0\leq H^n\leq 1$ for all $n\in\bbn$ and a subsequence $(m_n)_{n\in\bbn}$ with $m_n\geq n$ s.t.
	\begin{align}\label{eq:frictionlessL1}
		\PM\left(\left(\left(H^n\left(\vp^n-\vp^{m_n}\right)\bigcdot S\right)\right)^*_T>\varepsilon\right)>\varepsilon, \quad \forall n\in\bbn
	\end{align} 
	We note that in \eqref{eq:frictionlessL1}, it can be assumed that $H^n$ is $[0,1]$-valued and not only $[-1,1]$-valued, since otherwise it can be decomposed into its positive and its negative part.
	Next, we define the strategies $\psi^n:=H^n\vp^n+(1-H^n)\vp^{m_n}\in\bP$  and $\theta^n:=(1-H^n)\vp^n+H^n\vp^{m_n}\in\bP$ for $n\in\bbn$. The strategies satisfy $\psi^n\to \vp$, $\theta^n\to \vp$ pointwise and $(\psi^n)^+ \vee (\theta^n)^+ \leq \vp^+,(\psi^n)^- \vee (\theta^n)^- \leq \vp^- $, i.e., they satisfy Definition \ref{def:DefinitionL} (i). 
	
	Let $\sigma^n:=\inf\{t\geq 0: \psi^n\bigcdot S_t-\vp^n\bigcdot S_t>\varepsilon/2 \}$ and $\tau^n:=\inf\{t\geq 0: \theta^n\bigcdot S_t-\vp^n\bigcdot S_t>\varepsilon/2 \}$.  As $(\vp^n-\vp^{m_n})\bigcdot S\to 0$ uniformly in probability by Definition \ref{def:DefinitionL} (ii), there is $N\in\mathbb{N}$ s.t.  $\PM(((\vp^n-\vp^{m_n})\bigcdot S)^*_T>\varepsilon/2)<\varepsilon/2$ for all $n\geq N$. Thus, we have
	\begin{align*}
		\PM\left(\sigma^n\wedge \tau ^n\leq T\right)&\geq \PM\left(\left(\left(H^n\left(\vp^n-\vp^{m_n}\right)\bigcdot S\right)\right)^*_T>\varepsilon\right)-\PM\left(\left(\left(\vp^n-\vp^{m_n}\right)\bigcdot S\right)^*_T>\varepsilon/2\right)\\&>\varepsilon/2\quad \forall n\geq N.
	\end{align*}
	Next, we define the strategies $\widetilde{\psi}^n:=\psi^n\mathbbm{1}_{\llbracket 0, \sigma^n\rrbracket}+\vp^n\mathbbm{1}_{\rrbracket \sigma^n, T\rrbracket}$ and $\widetilde{\theta}^n:=\theta^n\mathbbm{1}_{\llbracket 0, \tau^n\rrbracket}+\vp^n\mathbbm{1}_{\rrbracket \tau^n, T\rrbracket}$ that still satisfy Definition \ref{def:DefinitionL} (i).  Thus, together with
	\begin{align*}
		\PM\left(\left\{ \wt{\psi}^n\bigcdot S_T-\vp^n\bigcdot S_T>\varepsilon/2\right\}\cup \left\{ \wt{\theta}^n\bigcdot S_T-\vp^n\bigcdot S_T>\varepsilon/2\right\}\right)\geq \PM(\sigma^n\wedge \tau^n\leq T)>\varepsilon/2
	\end{align*}
	for all $n\geq N$, we have arrived at a contradiction to \eqref{23.8.2020.1}.  Thus $(\vp^n\bigcdot S)_{n\in\bbn}$ is Cauchy in $(\mathbb{S}, d_\mathbb{S})$ and the assertion follows by $(ii)\Rightarrow (i)$ in Note~\ref{note:ChouCharacterization}.
\end{proof}

One of the referees raised the following interesting question that can be considered as a generalization of Proposition~\ref{prop:FrictionlessL} to markets with friction.
Does $\vp\in L$ imply that there exists a semimartingale price system~$S$ s.t. $\vp\in L(S)$? This would mean, if stock positions are evaluated by $S$, the trading gains and the 
cost term of the approximating bounded strategies converge separately (and not only the sum).  

Under additional assumptions, the following theorem gives a positive answer to this question. Especially, the considered model is deterministic, see Remark~\ref{22.7.2021.1} below for 
a discussion. 
\begin{theorem}\label{15.7.2021.1}
	Let $\Omega=\{\omega\}$ and $\ov{S},\un{S}$ be continuous. If $\vp\in L$, $\vp>0$ on $(0,T]$, and $\vp$ is lower semi-continuous at all $t\in[0,T]$ with $\ov{S}_t>\un{S}_t$,
	then there exists a semimartingale price system~$S$ with $\vp\in L(S)$.
\end{theorem}
\begin{proof}
	We fix a semimartingale price system~$\wt{S}$ (whose existence is assumed in this section).
	
	{\em Step 1:} Let us show that
	\beam\label{22.7.2021.2}
	\sup_{\psi\ \mbox{\small bounded},\ 0\le \psi \le \vp} V^{\wt{S}}_T(\psi)<\infty.
	\eeam
	Assume by contradiction that there exist bounded strategies $\psi^n$, $n\in\bbn$ s.t.
	$0\le \psi^n\le \vp$ and $V^{\wt{S}}_T(\psi^n)\to\infty$. On the other hand, since $\vp\in L$ and by \eqref{19.7.2021.1},
	there exist bounded $\vp^n$,\ $n\in\bbn$ with $0\le \vp^n\le \vp$, $\vp^n\to \vp$, and
	$V^{\wt{S}}_T(\vp^n)\to V^{\wt{S}}_T(\vp)\in\bbr$. Thus, there is a null
	sequence~$(\eps_n)_{n\in\bbn}\subset (0,1)$ s.t.
	\beao
	V^{\wt{S}}_T(\eps_n \psi^n +(1-\eps_n)\vp^n) \ge  \eps_n V^{\wt{S}}_T(\psi^n)
	+ (1-\eps_n) V^{\wt{S}}_T(\vp^n)\to \infty,
	\eeao
	which is a contradiction to $\vp\in L$.
	
	{\em Step 2:} Next, we show that for each nonnegative bounded function $\wt{\psi}$, 
	\beam\label{16.5.2021.2}
	\sup_{0\le \psi \le \wt{\psi}} V^{\wt{S}}_T(\psi)
	\eeam 
	is attained by a maximizer~$\psi^*$. To see this, let $({\psi}^n)_{n\in\bbn}$ be a maximizing sequence, i.e.,  
	$V^{\wt{S}}_T({\psi}^n)\to \sup_{0\le \psi \le \wt{\psi}} V^{\wt{S}}_T({\psi})$. Since ${\psi}^n\mal \wt{S}_T\le \sup_{t\in[0,T]}\wt{\psi}_t \cdot{\rm Var}_T(\wt{S})$ 
	for all $n\in\bbn$, the sequence of cost terms~$(C^{\wt{S}}_T({\psi}^n))_{n\in\bbn}$ is bounded. In addition, the set $\{\ov{S}>\un{S}\}$ can be written as a countable union of closed intervals
	on which either
	$\wt{S}\ge \un{S} + 1/3(\ov{S}-\un{S})$ or  $\wt{S}\le \un{S} + 2/3(\ov{S}-\un{S})$. In the first case,
	sells lead to essential costs on such an interval~$[a,b]$. Consequently, one must have
	$\sup_{n\in\bbn}{\rm Var}_a^b({\psi}^n)<\infty$. Then, by the same arguments as in Campi and
	Schachermayer~\cite{campi.schachermayer}, proof of Proposition~3.4, after passing to convex combinations,
	we obtain a pointwise limit~$\lim_{n\to\infty}{\psi}^n=:{\psi}^*$
	everywhere on $\{\ov{S}>\un{S}\}$ and ${\rm
		Var}(\wt{S})$-a.e. on $\{\ov{S}=\un{S}\}$, which has to be a maximizer by Theorem~\ref{8.12.2019.1}(i).

	{\em Step 3:}
	We now construct a sequence $(\wh{\vp}^n)_{n\in\bbn}$ s.t. $\wh{\vp}^n$ is a solution of \eqref{16.5.2021.2} with $\wt{\psi}=\vp\wedge n$ for all $n\in\bbn$ and for $n<m$ 
	the strategy $\wh{\vp}^{m}$ has to ``buy/sell'' if $\wh{\vp}^{n}$ ``buys/sells''. 
	
	Starting with solutions $\wh{\eta}^k$ of (\ref{16.5.2021.2}) with $\wt{\psi}=(\vp-(k-1))^+\wedge 1$ for each $k\in\bbn$, we define the strategies 
	$\eta^{n,k}:=\left(\sum_{l=1}^{n}\wh{\eta}^l-(k-1)\right)^+\wedge 1$ for $n\in\bbn$ and $k\leq n$. We have 
	\begin{align*}
		\sum_{k=1}^nV^{\wt{S}}_T(\eta^{n,k})=V^{\wt{S}}_T\left(\sum_{k=1}^n \eta^{n,k}\right)=V^{\wt{S}}_T\left(\sum_{k=1}^n \wh{\eta}^k\right)\geq \sum_{k=1}^nV^{\wt{S}}_T(\wh{\eta}^k).
	\end{align*}
	Indeed, $V_T^{\wt{S}}(\cdot)$ is superadditiv and additive for $\eta^{n,k}$, $k=1,\dots,n$. The later can be seen by the additivity of the cost term for approximating simple 
	strategies. Together with $V_T^{\wt{S}}(\wh{\eta}^{k})\geq V^{\wt{S}}_T(\eta^{n,k})$ for all $k\leq n$, this implies $V^{\wt{S}}_T(\wh{\eta}^{k})=V^{\wt{S}}_T(\eta^{n,k})$ for 
	all $n\in\bbn$ and $k\leq n$. Defining $\eta^k:=\lim_{n\to\infty} \eta^{n,k}=(\sum_{l=1}^{\infty}\wh{\eta}^l-(k-1))^+\wedge 1,\ k\in\bbn,$
	we observe $\eta^k=0$ on $\{\eta^{k-1}<1\}$ and $\eta^k\leq(\vp - (k-1))\wedge 1$. In addition, we have 
	$V^{\wt{S}}_T(\eta^k)\geq\lim_{n\to\infty}V^{\wt{S}}_T(\eta^{n,k})= V^{\wt{S}}_T(\wh{\eta}^k)$ by Theorem~\ref{8.12.2019.1}~(i)
	and, thus, $\eta^k$ solves (\ref{16.5.2021.2}) with $\wt{\psi}=(\vp - (k-1))\wedge 1$. 
	Finally, we set	$\wh{\vp}^n:=\sum_{k=1}^n\eta^k$, $n\in\bbn$.
	Then, for an arbitrary strategy~${\psi}$ with $\psi\le \vp\wedge n$, the optimality of 
	$\eta^k$ yields $V^{\wt{S}}_T({\psi}) = \sum_{k=1}^n V^{\wt{S}}_T(({\psi}-(k-1))^+\wedge 1)\leq \sum_{k=1}^nV^{\wt{S}}_T(\eta^k)=V^{\wt{S}}_T(\wh{\vp}^n)$, 
	i.e., $\wh{\vp}^n$ solves \eqref{16.5.2021.2} with $\wt{\psi}=\vp\wedge n$.
	
	{\em Step 4:} Let $(\wh{\vp}^n)_{n\in\bbn}$ be the sequence of maximizers from the previous step. Since short positions are forbidden, we can replace $\un{S}_T$
	by $\wt{S}_T$ and assume that positions are sold at $T$.
	The aim is to construct a 
	finite variation process~$S$ s.t. $V_T^{\wt{S}}(\wh{\vp}^n)=\wh{\vp}^n\mal S_T$ and ${\psi}\mal S_T\leq \wh{\vp}^n\mal S_T$ for all strategies $0\leq {\psi}\leq \vp\wedge n$, i.e., 
	$S$ is a shadow price simultaneously for all problems \eqref{16.5.2021.2} with $\wt{\psi}=\vp\wedge n$, $n\in\bbn$. 
	Under Assumption~\ref{8.12.2019.3} and by an exhaustion argument, it
	is possible to construct $S$ in the following way. On the frictionless intervals, cf. Lemma~\ref{25.11.2019.2},
	$S$ is defined as 
	$S=\ov{S}=\un{S}$.
	Now, let $a$ be a ``buying time'' with $\ov{S}_a>\un{S}_a$, i.e., there exists $n\in\bbn$ s.t.
	in any neighborhood of $a$ there are $t_1<t_2$ with $\wh{\vp}^n_{t_2}>\wh{\vp}^n_{t_1}$. 
	Let $b$ be the next selling time (defined as infimum over $n\in\bbn$), and $d$ the next 
	buying time after $b$. In addition, $c$ is the last selling time before $d$. 
	We have that $a<b\le c\le d$. The strict inequality is crucial for the exhaustion argument. It holds since, by $\ov{S}_a>\un{S}_a$ and the continuity of the bid-ask prices, 
	any investment needs some time to amortize, and by Step~3, for any pair of buying and selling time, there is a joint strategy~$\wh{\vp}^n$ that realizes this investment.
	Summing up, all $\wh{\vp}^n$,\ $n\in\bbn$, are nondecreasing on $(a,b)$, nonincreasing on $(b,c)$, and constant on $(c,d)$.
	
	For $t\in[a,b)$, we define
	\beam\label{16.5.2021.5}
	\tau_t:=\inf\{s\in [a,t] : \exists\eps>0\ \inf_{u\in(s,t+\eps)}\vp_u > \inf_{u\in(t,b)}\vp_u\}\wedge t
	\eeam
	and
	\beao
	S_t:=\inf_{u\in[\tau_t,b)}\ov{S}_u \wedge \un{S}_b.
	\eeao
	Roughly speaking, $S$ can only increase at a ``bottleneck'' on the way to $b$, where the constraint is binding. 
	For $t\in[b,c)$, we define
	\beao
	\sigma_t:=\sup\{s\in [t,c) : \forall \eps>0\ \inf_{u\in(t+\eps,s)}\vp_u > \inf_{u\in[b,t]}\vp_u\}\vee t
	\eeao
	and
	\beam\label{25.7.2021.1}
	S_t:=\sup_{u\in[b,\sigma_t]}\un{S}_u.
	\eeam
	For $[c,d)$, $c<d$, we make a case differentiation. For $\wh{\vp}^1=0$ on $(c,d)$, 
	we define $S$ on $[c,d)$ as the Snell envelope of the process~$L_t:=\un{S}_t \mathbbm{1}_{\{t<d\}} + \ov{S}_{d}\mathbbm{1}_{\{t=d\}}$,\ $t\in[c,d]$,
	i.e., $S_t:=\sup_{u\in[t,d]}L_u$,\ $t\in[c,d)]$.
	Otherwise, we define 
	$S_t:= \un{S}_c \mathbbm{1}_{\{t<\wt{\tau}_d\}} + \ov{S}_d \mathbbm{1}_{\{t\ge \wt{\tau}_d\}}$, where 
	$\wt{\tau}_d:=\inf\{s\in [c,d] : \inf_{u\in(s,d)}\vp_u > \inf_{u\in(d,\wt{b})}\vp_u\}\wedge d$ with $\wt{b}$ being the next selling time after $d$. 
	By using the maximality and the monotonicity of all $\wh{\vp}^n$,\ $n\in\bbn$, it is easy to check that $S$ has to lie in the bid-ask spread.
	
	Now, any excursion of the spread away from zero, cf. Lemma~\ref{25.11.2019.1}, can be exhausted by intervals of the form $[a,b)$, $[b,c)$, and $[c,d)$. 
	In the special case that there is no further buying time, (\ref{25.7.2021.1}) is applied to the closed interval from $b$ to the end of 
	the excursion of the spread away from zero or to $T$.
	The resulting process~$S$ is c\`adl\`ag 
	and does not depend on the choice of the intervals. Note that $\ov{S}_a>\un{S}_a$ is only needed to guarantee that $b>a$.   
	
	\emph{Step 5:} Let us show that $S$ is of finite variation and 
	\begin{align}\label{eq:Shadowprice_prop1}
		\wh{\vp}^n\mal S_T = V^S_T(\wh{\vp}^n) = V^{\wt{S}}_T(\wh{\vp}^n),\quad n\in\bbn.
	\end{align}
	Let $a$ be a buying time and $\wt{a}$ be the time~$\inf\{t>a\ :\ \wh{\vp}^1_t=0\}$ truncated at the end of the excursion. 
	We have that $S_a=\ov{S}_a\ge \wt{S}_a$ and $S_{\wt{a}}=\un{S}_{\wt{a}}\le \wt{S}_{\wt{a}}$, and $S$ is nondecreasing on $[a,\wt{a}]$.
	From $\wt{a}$ up to (and including) the next buying time, $S$ is nonincreasing. This yields ${\rm Var}_T(S)\le {\rm Var}_T(
	\wt{S})<\infty$. Finally, by construction of $S$, the cost terms~$C^S(\wh{\vp}^n)$ vanish for all $n\in\bbn$ and thus \eqref{eq:Shadowprice_prop1} holds. 
	E.g., on $[a,b)$, the process~$\wh{\vp}^n$ is nondecreasing and has to be constant on $\{S<\ov{S}\}$ by optimality.
	
	\emph{Step 6:} Next, we show that 
	\beam\label{31.7.2021.1}
	{\psi}\mal S_T\le \wh{\vp}^n\mal S_T\ \mbox{for all $n\in\bbn$ and all strategies~${\psi}$ with $0\le \psi\le \vp\wedge n$}.
	\eeam
	Of course, it is sufficient to show this assertion for excursions of the spread away from zero (cf., again, Lemma~\ref{25.11.2019.1}).
	
	From now on, we need the assumed lower semi-continuity, i.e.,
	\beam\label{26.7.2021.1}
	\vp_t=\lim_{\varepsilon\to 0}\inf_{u\in[t-\varepsilon,t+\varepsilon]}\vp_u\quad\mbox{for all $t\in(0,T)$ with $\ov{S}_t>\un{S}_t$.}
	\eeam
	We start with the buying period, i.e., the interval $[a,b)$ (cf. Step~4). Setting $\xi_t:=\inf_{u\in[t,b)} \varphi_u$, we claim that
	\begin{align}\label{eq:Step6claim}
		\int_{[a,b)}{\psi}_t\,dS_t\leq \int_{[a,b)}(\vp_t\wedge n)\,dS_t\leq \int_{[a,b)}(\xi_t\wedge n)\,dS_t\leq\int_{[a,b)}\wh{\vp}^n_t\,dS_t
	\end{align}
	for every strategy ${\psi}$ with ${\psi}\leq \vp\wedge n$.
	
	The first inequality is obvious as $S$ is nondecreasing on $[a,b)$. We start by showing the second inequality in \eqref{eq:Step6claim}. 
	It follows from (\ref{26.7.2021.1}) that $(\xi_t)_{t\in[a,b)}$ is left-continuous and the set $\{t\in [a,b): \xi_t<\vp_t\}$ is open.
	Hence, we find a sequence of open intervals $(u^k_1,u^k_2)$, $u^k_1\leq u^k_2$, $k\in\bbn$ s.t.
	\begin{align}\label{eq:opencover}
		\{t\in [a,b): \xi_t<\vp_t\}=\bigcup_{k\in\bbn}(u^1_k,u^2_k).
	\end{align}
	For all $t_1,t_2$ with $u^k_1<t_1<t_2<u^k_2$, we have that $\inf_{t\in[t_1,t_2]}(\vp_t-\xi_t)>0$ and, thus, $S_{t_2}=S_{t_1}$. This yields
	$S_{u^k_2-}=S_{u_1^k}$ if $u_1^k<u_2^k$ and, hence, $\int_{[a,b)}(\vp_t\wedge n)\,dS_t=\int_{[a,b)}(\xi_t\wedge n)\,dS_t$ due to \eqref{eq:opencover}.
	
	Moving towards the last inequality in \eqref{eq:Step6claim}, we exclude the trivial case that $\ov{S}_a=\un{S}_b$. For a given $\varepsilon>0$, there is a partition
	$a=t_0< t_1<\dots< t_m=b$ s.t.
	\begin{align}\label{eq:buyingphase_step2_1}
		\int_{[a,b)}(\xi_t\wedge n)\,dS_t\leq \sum_{i=1}^{m-1}(\xi_{t_{i-1}}\wedge n)(S_{t_i}-S_{t_{i-1}})+ (\xi_{t_{m-1}}\wedge n)(S_{b-}-S_{t_{m-1}}) +\varepsilon
	\end{align}
	by \cite[Theorem~II.21]{protter2005stochastic} and the left-continuity of $\xi$. Let $s:=\sup\{u>a: \overline{S}_u<\underline{S}_b\}\le b$.
	Next, we define a perturbation $\wh{\vp}^{n,p}$ of the optimal strategy $\wh{\vp}^n$ in the bid-ask model, which approximately realizes the gains on the RHS of 
	\eqref{eq:buyingphase_step2_1} on $[a,b)$. 
	We set $\wh{\vp}^{n,p}=\wh{\vp}^n$ on $[0,a)\cup [s,T]$ and construct $\wh{\vp}^{n,p}$ on $[a,s)$ by iteratively specifying possible purchases. 
	At time~$t_0=a$, we buy until we reach $\wh{\vp}^{n,p}_a:=\xi_{t_0}\wedge n \ge \wh{\vp}^n_a$, paying price~$\ov{S}_a=S_a$ per share (time $t_0$ has the special property that 
	it is a ``buying time'' in the sense of Step~4). We proceed as follows:  if $S_{t_1}<S_{t_2}$ (which is equivalent to $\inf_{u\in[\tau_{t_{1}},\tau_{t_{2}})}\overline{S}_u<S_{t_2}$ 
	and, in this case, $S_{t_1}=\inf_{u\in[\tau_{t_1},\tau_{t_2})}\overline{S}_u$), we buy until we reach $\xi_{t_{1}}\wedge n$ shares at 
	time~$t_1^*:=\arg\min_{u\in [\tau_{t_{1}}, \tau_{t_2})}\overline{S}_u$. Hereby, 
	we have $\overline{S}_{t_1^*}<S_{t_2}\leq \underline{S}_b$, i.e., $t_1^*< s$, and, since $t_1^*\ge \tau_{t_1}$, the constraint $\vp\wedge n$ is also satisfied. 
	This is repeated for 
	the intervals $[\tau_{t_{i-1}},\tau_{t_i})$ for $i=3,\dots, m$. Since purchasing prices are strictly below $\un{S}_b$, in the bid-ask market, purchases take place on $[a,s)$. 
	For $s<b$, we have $\widehat{\vp}^{n,p}_{s-}\le \xi_s\wedge n = \widehat{\vp}^n_s$, where the equality follows from the optimality of $\wh{\vp}^n$ and (\ref{26.7.2021.1}). 
	Finally, the missing position $\widehat{\vp}^n_s - \widehat{\vp}^{n,p}_{s-}\ge 0$ is purchased at price $\ov{S}_s=\un{S}_b$ if $s<b$.
	In the case $s=b$, we must have $\un{S}_b=\ov{S}_b$ and need not care about the sign of the missing position.  
	Hence, the optimality of $\wh{\vp}^n$, together with 
	$V_T^{\wt{S}}(\wh{\vp}^n)- V_T^{{\wt{S}}}(\wh{\vp}^{n,p})=V_T^{S}(\wh{\vp}^n)- V_T^{S}(\wh{\vp}^{n,p})$, yields
	\begin{align}
		\label{eq:buyingphase_step2_2}
		0\leq V_T^{S}(\wh{\vp}^n)- V_T^{S}(\wh{\vp}^{n,p})\leq \int_{[a,b)}\wh{\vp}^n_tdS_t-\int_{[a,b)}(\xi_t\wedge n)dS_t+\varepsilon,
	\end{align}
	where for the second inequality we use \eqref{eq:buyingphase_step2_1} and the fact that $\wh{\vp}^{n,p}$ does not produce any costs w.r.t. $S$. 
	\eqref{eq:buyingphase_step2_2} implies the last inequality in \eqref{eq:Step6claim} as the $\varepsilon>0$ is arbitrary. 
	
	It  remains to show $\psi_t\,dS_t \le \wh{\vp}^n_t\,dS_t$ on sets other than $[a,b)$.
	After a time reversal, the proof for a selling interval~$[b,c)$ is the same as for a buying interval~$[a,b)$. Namely, w.l.o.g. we assume that $\un{S}_c>\un{S}_b$
	and consider an approximation similar to (\ref{eq:buyingphase_step2_1}) ``backward in time'' (the last point is $b-$ with $S_{b-}=\un{S}_b$). 
	Time~$s$ from above is replaced by $\wt{s}:=\inf\{u>b\ :\ \un{S}_u>\un{S}_b\}\le c$. 
	From the optimality of $\wh{\vp}^n$, the assumption that $b$ is a selling time in the sense of Step~4, and (\ref{26.7.2021.1}), it follows that 
	$\wh{\vp}^n_{b-}\ge \inf_{u\in[b,\wt{s}]}\vp_u\wedge n$.
	We leave the details as an exercise for the reader. On intervals with $\wh{\vp}^1=0$, we use that the Snell envelope is nonincreasing.
	
	{\em Step 7:} By $\vp\in L$ and \eqref{19.7.2021.1}, we can find a sequence of strategies~$(\vp^n)_{n\in\bbn}$ with $\vp^n\to \vp$ and $0\le \vp^n \le \vp\wedge n$ s.t. for all 
	other strategies~$(\wt{\vp}^n)_{n\in\bbn}$ with $\wt{\vp}^n\to \vp$ and
	$0\le \wt{\vp}^n\le \vp\wedge n$, one has $(V^S_T(\wt{\vp}^n) - V^S_T(\vp^n))^+ \to 0$. Let us show that 
	$(\vp^n\mal S)_{n\in\bbn}$ has to be Cauchy in $(\mathbb{S}, d_{\mathbb{S}})$. We first show that
	\beam\label{16.5.2021.3}
	\forall\eps>0\ \exists K\in\bbr_+\ \forall n\in\bbn, B\in\mathcal{B}([0,T])\qquad 
	(\mathbbm{1}_{\{\vp>K\}\cap B} \vp^n)\mal S_T\le \eps.
	\eeam
	Indeed, since $S$ is a shadow price, see (\ref{31.7.2021.1}), and by (\ref{22.7.2021.2}), we have
	\beam\label{20.7.2021.1}
	(\mathbbm{1}_{\{\vp>K\}\cap B} \vp^n)\mal S_T & \le & (\mathbbm{1}_{\{\vp>K\}} \wh{\vp}^n)\mal S_T
	\le \sum_{k=1}^\infty ((\mathbbm{1}_{\{\vp>K\}}\eta^k)\mal S_T)<\infty
	\eeam
	for all $K\in\bbr_+$ and $B\in\mathcal{B}([0,T])$.
	By (\ref{20.7.2021.1}), $(\mathbbm{1}_{\{\vp>K\}} \eta^k)\mal S_T \le \eta^k\mal S_T$ (which follows from (\ref{31.7.2021.1})), 
	and dominated convergence, we obtain (\ref{16.5.2021.3}). Let us show that
	\beam\label{16.5.2021.4}
	\forall\eps>0\ \exists K\in\bbr_+, N\in\bbn\ \forall n\geq N, B\in\mathcal{B}([0,T])\  
	(\mathbbm{1}_{\{\vp>K\}\cap B} \vp^n)\mal S_T\ge -\eps.
	\eeam
	Assume by contradiction that there exists $\eps>0$, a subsequence $(n_k)_{k\in\bbn}$, and a 
	sequence~$(B_k)_{k\in\bbn}\subset\mathcal{B}([0,T])$ s.t. 
	$(\mathbbm{1}_{\{\vp>k\}\cap B_k} \vp^{n_k})\mal S_T< -\eps$ for all $k\in\bbn$. On the other hand, since $d_{\mathbb{S}}(\mathbbm{1}_{\{\vp>k\}}\mal S, 0)\to 0$ for $k\to\infty$,  
	there must 
	exist a sequence $(\lambda_k)_{k\in\bbn} \subset \bbr_+$ with $\lambda_k\to\infty$ slowly enough s.t. $\mathbbm{1}_{\{\vp>k\}\cap B_k}(\vp^{n_k}\wedge \lambda_k)\mal S_T\to 0$
	for $k\to\infty$. Thus, we have $(\mathbbm{1}_{\{\vp>k\}\cap B_k} (\vp^{n_k}-\lambda_k)^+)\mal S_T< -\eps/2$ for $k$ large enough. 
	As in (\ref{20.7.2021.1}), we can estimate $(\mathbbm{1}_{[0,T]\setminus(\{\vp>k\}\cap B_k)}(\vp^{n_k}-\lambda_k)^+)\mal S_T = 
	(\mathbbm{1}_{\{\vp>\lambda_k\}\setminus(\{\vp>k\}\cap B_k)}(\vp^{n_k}-\lambda_k)^+)\mal S_T \le \sum_{l=1}^\infty ((\mathbbm{1}_{\{\vp>\lambda_k\}}\eta^l)\mal S_T)$,
	which converge to $0$ as $\lambda_k\to \infty$ for $k\to\infty$. This yields that
	$((\vp^{n_k}-\lambda_k)^+)\mal S_T< -\eps/4$ for $k$ large enough. Since the cost term of 
	$\vp^{n_k}$ 
	exceeds those of $\vp^{n_k}\wedge \lambda_k$, we arrive at 
	$V^S_T(\vp^{n_k})< V^S_T(\vp^{n_k}\wedge\lambda_k)-\eps/4$ for $k$ 
	large enough. This is a contradiction to the maximality of $(\vp^n)_{n\in\bbn}$ stated at the beginning of this step. Thus, (\ref{16.5.2021.4}) holds.
	
	Putting (\ref{16.5.2021.3}), (\ref{16.5.2021.4}), and $\vp^n\to\vp$ with $\vp^n\le \vp$ for all $n\in\bbn$ 
	together, we obtain that $(\vp^n\mal S)_{n\in\bbn}$ is Cauchy in $(\mathbb{S}, d_{\mathbb{S}})$. This implies that $\vp\in L(S)$ (cf. Note~\ref{note:ChouCharacterization}).
\end{proof}

\begin{remark}\label{22.7.2021.1}
	The proof demonstrates how the maximality condition in the definition of $L$ works. For $\vp\in L$, problem~(\ref{22.7.2021.2}) has to be finite, but its 
	maximizer~$\wh{\vp}:=\lim_{n\to \infty}\wh{\vp}^n$ can be different from $\vp=\lim_{n\to\infty}\vp^n$. Also, in the frictionless shadow price market, $\wh{\vp}^n$ dominates 
	all other strategies that are bounded by $\vp\wedge n$. This upper bound is key to show that $\vp^n\mal S$ is Cauchy w.r.t. the semimartingale topology.
	
	It is an open (but possibly insolvable) problem whether the theorem also holds in the general stochastic case. The construction of the shadow price~$S$ is essentially 
	based on the assumptions that the 
	model is deterministic and $\vp$ is lower semi-continuous. The latter is needed since on the intervals with friction,
	$S$ has its upward movements at the ``bottlenecks'' of the constraint~$\vp\wedge n$.
	
	Nevertheless, we think that the proof already provides the basic intuition for the relation between $L$ and $L(S)$ in the general stochastic case.
	In addition, the sequence of strategies constructed in Step~3 and the ideas from Step~7 should also be of general use to solve related problems in the stochastic model.
	By contrast, the other assumptions are less essential. They are made to focus on the main ideas and to avoid further case differentiations and technicalities.  
\end{remark}
\section{Proof of Theorem~\ref{8.12.2019.1}}\label{4.1.2020.1} 

We start with two lemmas that prepare the proof of Theorem~\ref{8.12.2019.1}.
In the following, we set $X:=\ov{S}-\un{S}$ with the convention that $X_{0-}:=0$. Let $M$ be the set of starting points of excursions of the spread away from zero, i.e.,
\begin{align*}
	M:=(\{X=0\}\cup\{X_-=0\})\cap\{(\omega,t)\in \Omega\times [0,T) : \exists \eps>0\ \forall s\in(t,(t+\eps)\wedge T)\ X_s(\omega)>0 \}.
\end{align*}
Here, we follow the convention that an excursion also ends (and thus a new excursion can start) if only the left limit of the spread process is zero.
Under the usual conditions and Assumption~\ref{8.12.2019.3}, the 
process~$Y:=\mathbbm{1}_{\{(\omega,t)\in\Omega\times[0,T) : \exists\eps>0\ \forall s\in(t,(t+\eps)\wedge T)\ X_s(\omega)>0\}}$ is right-continuous on $\Omega\times [0,T)$ and adapted (for the latter one uses that for all $t\in[0,T)$ and $\wt{\eps}\in(0,T-t)$, one has
$\{\omega\in\Omega : \exists\eps>0\  \forall s\in(t,(t+\eps)\wedge T)\ X_s(\omega)>0\} = 
\Omega\setminus \{\omega\in\Omega : \exists\eps\in(0,\wt{\eps})\cap\bbq\ \forall s\in(t,t+\eps)\cap 
\bbq\ X_s(\omega)=0\}$). Thus, $Y$ is a progressive process (see, e.g., Theorem~3.11 in \cite{he.wang.yan.1992}), which implies that $M$ is a progressive set.  Consequently, $\{\omega\in\Omega\ :\ \tau(\omega)<\infty,\ (\omega,\tau(\omega))\not\in M\}\in\mathcal{F}$ if $\tau$ is a stopping time. 

For a stopping time $\tau$, we define the associated stopping time $\Gamma_2(\tau)$ by 
\begin{align*}
	\Gamma_2(\tau):=\inf\{t>\tau\ :\ X_t =0\ \mbox{or}\ X_{t-}=0\}.
\end{align*}

\begin{lemma}\label{25.11.2019.1}
	There exists a sequence of stopping times $(\tau_1^n)_{n\in\bbn}$ with 
	$\PM(\{\omega\in\Omega : \tau_1^n(\omega)<\infty,\ (\omega,\tau_1^n(\omega))\not\in M\})=0$  for all $n\in\bbn$, $\PM(\tau^{n_1}_1=\tau^{n_2}_1<\infty)=0$ for all $n_1\not=n_2$, and 
	\beam\label{14.11.2019.1}
	\{X_->0\}\subset \cup_{n\in\bbn} \zu \tau_1^n,\Gamma_2(\tau_1^n)\zu\quad\mbox{up to evanescence}.
	\eeam
\end{lemma}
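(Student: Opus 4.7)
The plan is to identify the random set $M$ as a thin optional subset of $\Omega\times[0,T]$ (optional with at most countable $\omega$-sections), enumerate it measurably by a standard exhaustion theorem for thin optional sets, and then verify the covering relation (\ref{14.11.2019.1}) path-by-path using Assumption~\ref{8.12.2019.3}.

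For the countability of $M(\omega)$, I would split each $s\in M(\omega)$ into three mutually exclusive types: (b) $X_{s-}(\omega)=0<X_s(\omega)$ (jump-up from zero), (c) $X_{s-}(\omega)>0=X_s(\omega)$ (jump-down to zero), and (a) $X_{s-}(\omega)=X_s(\omega)=0$ (continuous start). Types (b) and (c) sit inside the at most countable jump set of the c\`adl\`ag path $X(\omega)$. For type (a), the starting-point clause yields a maximal $\eps_s>0$ with $X(\omega)>0$ on $(s,s+\eps_s)$; any two distinct type-(a) starts $s<s'$ must satisfy $s'\geq s+\eps_s$, since otherwise $X_{s'}(\omega)=0$ would occur inside an interval on which $X>0$. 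The intervals $(s,s+\eps_s)$ are therefore pairwise disjoint, so type (a) is countable by rational enumeration.

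To show that $M$ is optional, I would note that $\{X=0\}$ and $\{X_-=0\}$ are optional ($X$ is adapted c\`adl\`ag, $X_-$ adapted left-continuous), and write the right-interior factor as $B=\bigcup_{m\in\bbn}B_m$ with $B_m=\{(\omega,t):X_s(\omega)>0\ \forall s\in(t,(t+1/m)\wedge T)\}$; each $B_m$ is optional by a rational-sampling argument exploiting c\`adl\`ag regularity of $X$ together with the countability of its jumps, so $M=(\{X=0\}\cup\{X_-=0\})\cap B$ is optional. Combined with the thinness above, the classical exhaustion theorem for thin optional random sets (see, e.g., \cite[Ch.~IV]{dellacherie.meyer.1982}) yields a sequence $(\tau_1^n)_{n\in\bbn}$ of stopping times with pairwise disjoint graphs in $M$ exhausting $M$ up to evanescence. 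This immediately delivers the first two conditions ($\tau_1^n\in M$ and $\PM$-a.s.\ distinctness).

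For the coverage (\ref{14.11.2019.1}), I would fix $(\omega,t)$ with $X_{t-}(\omega)>0$ and set $a(\omega):=\sup(\{0\}\cup\{u<t:X_u(\omega)=0\text{ or }X_{u-}(\omega)=0\})$. Since $X_{t-}>0$, the supremum is strictly less than $t$, and by definition of $a$ one has $X>0$ on $(a,t)$; in particular $a\in M(\omega)$ (the condition $X_a=0$ or $X_{a-}=0$ follows from $a$ being the sup, and the starting-point property holds with $\eps=t-a$). Moreover, $X>0$ and $X_->0$ on $(a,t)$ force $\Gamma_2(a)(\omega)\geq t$, and picking $n$ with $\tau_1^n(\omega)=a$ (possible outside an evanescent null set) yields $(\omega,t)\in\zu\tau_1^n,\Gamma_2(\tau_1^n)\zu$. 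The main obstacle is the rigorous verification of the optionality of the right-interior set $B$, which requires reducing the uncountable condition defining $B_m$ to a countable one using the c\`adl\`ag structure of $X$; the remaining parts are direct structural consequences of Assumption~\ref{8.12.2019.3} together with the classical thin-optional-set exhaustion theorem.
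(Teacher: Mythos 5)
Your approach is genuinely different from the paper's, and while the countability and coverage arguments are fine, there is a real gap in the measurability step that you yourself flag as ``the main obstacle'' but do not close.

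You want to invoke the exhaustion theorem for thin optional sets. For that you need $M$ to be optional. Your sketch for the optionality of $B_m=\{(\omega,t):X_s(\omega)>0\ \forall s\in(t,(t+1/m)\wedge T)\}$ via rational sampling does not work: since $X$ is only right-continuous, $X_q>0$ for all rational $q\in(t,u)$ does \emph{not} imply $X_s>0$ for all $s\in(t,u)$ (take $X$ continuous with $X_s=0$ at an irrational $s$ and $X>0$ on $(s,u)$). More seriously, $B_m$ is an event about the \emph{future} of $X$ after $t$; showing it is even adapted already requires the right-continuity of the filtration ($B_m$ is $\F_{t+\delta}$-measurable for every $\delta>0$, hence $\F_{t+}$-measurable), and adapted plus jointly measurable gives progressive, not optional. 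So ``$M$ is a thin optional set'' is not established, and without that the exhaustion theorem you cite does not apply. In fact, the natural way to describe $B$ is through the debut $D_t:=\inf\{s>t:X_s=0\}$, which is for each fixed $t$ a stopping time but not $\F_t$-measurable as a random variable, and one would then have to argue joint measurability of $(\omega,t)\mapsto D_t(\omega)$ and its interplay with the optional $\sigma$-field.

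The paper sidesteps this entirely. Instead of trying to exhaust $M$ by stopping-time graphs, it forms a finite measure $\mu$ concentrated on rational time slices, considers the family $\mathcal M$ of finite unions of predictable intervals $\zu\tau,\Gamma_2(\tau)\zu$ over stopping times $\tau$ whose graph lies a.s.\ in $M$, takes the $\mu$-essential supremum of $\mathcal M$ (which is automatically a countable union $\cup_n\zu\tau_1^n,\tau_2^n\zu$ of such intervals), and then verifies (\ref{14.11.2019.1}) by showing that the debut $\sigma$ of the uncovered part of $\{X_->0\}$ must start a new excursion, so that $\zu\sigma,\Gamma_2(\sigma)\zu$ would enlarge the essential supremum unless $P(\sigma<\infty)=0$. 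This requires only that \emph{some} stopping times with graph in $M$ exist (which is what makes the essential supremum nontrivial), not that $M$ itself be optional or that all of $M$ be exhaustible. If you want to keep your route, you would have to rigorously establish progressiveness of $M$ (via the debut process) and then supply a section/exhaustion argument valid at that level of generality; as written, the key measurability step is missing.
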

\begin{proof}
	We define a finite measure~$\mu$ on the predictable $\sigma$-algebra by $$\mu(A):= \sum_{k=1}^\infty 2^{-k}\PM(\{\omega\in\Omega : (\omega,q_k)\in  A\}),\quad A\in\mathcal{P},$$ where $(q_k)_{k\in\bbn}$ is a counting of the rational numbers.
	Let $\mathcal{M}$ be the set of predictable processes of the form $\mathbbm{1}_{\zu \tau,\Gamma_2(\tau)\zu}$, where $\tau$ runs through all stopping times satisfying $\PM(\{\omega\in\Omega : \tau(\omega)<\infty,\ (\omega,\tau(\omega))\not\in M\})=0$.
	The essential supremum of $\mathcal{M}$ w.r.t. $\mu$ can be written as
	\beao
	{\rm esssup}\ \mathcal{M} =  \sup_{n\in\bbn} \mathbbm{1}_{\zu \tau_1^n,\tau_2^n\zu}
	= \mathbbm{1}_{\cup_{n\in\bbn} \zu \tau_1^n,\tau_2^n\zu}\quad \mu\mbox{-a.e.},
	\eeao
	where $\tau^n_2=\Gamma_2(\tau^n_1)$. Obviously, the sequence $(\tau_1^n)_{n\in\bbn}$ can be chosen s.t. $\PM(\tau^{n_1}_1=\tau^{n_2}_1<\infty)=0$ holds for all $n_1\not=n_2$. Then, by the definition of $M$ and $\Gamma_2$, one has that
	$\zu \tau^{n_1}_1,\tau^{n_1}_2\zu\cap\zu\tau^{n_2}_1,\tau^{n_2}_2\zu=\emptyset$ 
	up to evanescence for all $n_1\not=n_2$.
	
	Now consider the random time~$\sigma:=\inf\{t\in( 0,T] : X_{t-}>0\ \mbox{and}\ t\not\in\cup_{n\in\bbn}(\tau_1^n,\tau_2^n]\}$. 
	Since $\sigma$ can be written as the debut $\inf\{t\in(0,T] : Z_t>0\}$, where 
	$Z:=X_-(1-\sum_{n=1}^\infty\mathbbm{1}_{\zu \tau_1^n,\tau_2^n\zu})$ is a finite predictable process, it is a stopping time (see Theorem~7.3.4 in \cite{cohen.elliott}). By the definition of the infimum and $\Gamma_2$, we must have that 
	$X_\sigma=0$ or $X_{\sigma-}=0$ on the set~$\{\sigma<\infty\}$. 
	Together with Assumption~\ref{8.12.2019.3}, 
	this means that in $\sigma$ there starts an excursion, and it is not yet overlapped. By the definition of the essential supremum, one has $\mu(\zu \sigma,\Gamma_2(\sigma)\zu)=0$. Since $\Gamma_2(\sigma)>\sigma$ on $\{\sigma<\infty\}$, this is only possible if $\PM(\sigma<\infty)=0$
	and thus $\PM(\{\omega\in\Omega : \exists t\in(0,T]\ X_{t-}(\omega)>0\ \mbox{and}\  t\not\in\cup_{n\in\bbn}(\tau_1^n(\omega),\tau_2^n(\omega)]\})=0$.
\end{proof}
Next, we analyze the time the spread spends at zero. Define
\begin{align*}
	&M_1:=\{(\omega,t)\in\Omega\times[0,T] : t=0\ \mbox{or}\ \forall\eps>0\ \exists s\in((t-\eps)\vee 0,t)\ X_s(\omega)>0\}\cap \{X_-=0\}\ \\ 
	&\text{and}\ M_2:=\{X_->0\}\cap\{X=0\}.
\end{align*}
The optional set $M_1\cup M_2$ consists of the ending points of an excursion and of
their accumulation points. For a stopping time~$\tau$, we define the starting point of the next excursion after $\tau$ by $(\Gamma_1(\tau))(\omega):=\inf\{t\ge \tau(\omega): (\omega,t)\in M\}$ for $\omega\in\Omega$,
which is the debut of a progressive set and thus a  stopping time by \cite[Theorem~7.3.4]{cohen.elliott}.
\begin{lemma}\label{25.11.2019.2}
	There exists a sequence of stopping times $(\sigma_1^n)_{n\in\bbn}$ with 
	$\PM(\{\omega\in\Omega: \sigma_1^n(\omega)<\infty,\ (\omega,\sigma_1^n(\omega))\not\in M_1\cup M_2\})=0$ s.t. $(\sigma^n_1)_{\{X_{\sigma^n_1-}=0\}}$ are predictable 
	stopping times for all $n\in\bbn$, $\PM(\sigma^{n_1}_1=\sigma^{n_2}_1<\infty)=0$ for all $n_1\not=n_2$, and 
	\beam\label{25.11.2019.3}
	\{X_-=0\} \subset \cup_{n\in\bbn} \left( 
	\auf(\sigma^n_1)_{\{X_{\sigma^n_1-}=0\}}\zu
	\cup
	\zu \sigma_1^n,\Gamma_1(\sigma_1^n)\zu\right)\quad\mbox{up to evanescence}.
	\eeam
	for $\Gamma_1$ from above.
\end{lemma}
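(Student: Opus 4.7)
The plan is to adapt the essential-supremum construction from the proof of Lemma~\ref{25.11.2019.1}; the new feature is that we additionally demand that candidate stopping times have predictable restrictions to $\{X_{\tau-}=0\}$. Let $\mu$ be the finite measure on $\mathcal{P}$ introduced in the proof of Lemma~\ref{25.11.2019.1}. Let $\mathcal{N}_0$ be the family of all stopping times $\tau$ such that $\PM(\{\tau<\infty,\,\tau\notin M_1\cup M_2\})=0$ and such that the restriction $\tau_{\{X_{\tau-}=0\}}$ is a predictable stopping time; note that $\Gamma_1(\tau)$ is itself a stopping time thanks to Assumption~\ref{8.12.2019.3}. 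Let $\mathcal{N}$ be the class of finite unions of the predictable sets $\auf \tau_{\{X_{\tau-}=0\}}\zu \cup \zu \tau,\Gamma_1(\tau)\zu$ for $\tau\in\mathcal{N}_0$. Writing the essential supremum of $\mathcal{N}$ with respect to $\mu$ as $\cup_{n\in\bbn}(\auf (\sigma^n_1)_{\{X_{\sigma^n_1-}=0\}}\zu\cup\zu\sigma^n_1,\Gamma_1(\sigma^n_1)\zu)$ and disjointifying the $\sigma^n_1$ in the standard way yields candidates satisfying every property in the statement except the covering assertion~\eqref{25.11.2019.3}.

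To verify the covering, I would introduce the stopping time
\[
\sigma:=\inf\left\{t\in(0,T]\ :\ X_{t-}=0\ \mbox{and}\ (\omega,t)\notin\cup_{n\in\bbn}\left(\auf (\sigma^n_1)_{\{X_{\sigma^n_1-}=0\}}\zu\cup\zu\sigma^n_1,\Gamma_1(\sigma^n_1)\zu\right)\right\},
\]
which is a stopping time because the excluded set is predictable, and show that $\PM(\sigma<\infty)=0$. On $\{\sigma<\infty\}$ one necessarily has $X_{\sigma-}=0$. If on some event of positive probability $\sigma\in M_1$, I claim that $\sigma_{\{X_{\sigma-}=0\}}$ is predictable; then $\sigma\in\mathcal{N}_0$ and its graph contributes positive $\mu$-mass disjoint from the current union, contradicting the essential-supremum property. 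Otherwise, on an event of positive probability $\sigma\notin M_1$ (and $\sigma\notin M_2$, since $X_{\sigma-}=0$), whence zeros of $X$ accumulate from the left at $\sigma$; combining Assumption~\ref{8.12.2019.3} applied at those zeros with the definition of $\Gamma_1$ places $\sigma$ inside one of the intervals $\zu\sigma^n_1,\Gamma_1(\sigma^n_1)\zu$ already chosen, contradicting the definition of $\sigma$.

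The main obstacle is establishing predictability of $\sigma_{\{X_{\sigma-}=0\}}$ in the first sub-case. Here the definition of $M_1$ is precisely tailored: on $\{\sigma\in M_1\}$ there is a random $\varepsilon(\omega)>0$ with $X_s>0$ for all $s\in(\sigma-\varepsilon,\sigma)$, together with $X_s\to 0$ as $s\uparrow\sigma$ by $X_{\sigma-}=0$. The natural way to produce an announcing sequence is to enumerate the rationals $(q_k)_{k\in\bbn}$ and, for each $k$ and $m\in\bbn$, use the hitting times $\rho_{k,m}:=\inf\{s>q_k\ :\ X_s\leq 1/m\}$; on the $\F_{q_k}$-measurable event that $q_k$ lies in the excursion which ends continuously at $\sigma$, the $\rho_{k,m}$ strictly increase to $\sigma$ as $m\to\infty$. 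A suitable diagonal combination over such $(k,m)$ provides the required announcing sequence and closes the argument.
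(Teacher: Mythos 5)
Your strategy parallels the paper's — an essential supremum over a class of predictable sets followed by a minimality argument through a stopping time $\sigma$ — but your choice of dominating measure $\mu$ creates a genuine gap. You reuse the rational-point measure $\mu(A)=\sum_k 2^{-k}\PM(\{\omega\,:\,(\omega,q_k)\in A\})$ from the proof of Lemma~\ref{25.11.2019.1}; the paper instead builds a \emph{new} measure from the excursion-start times $\tau_1^n$ just constructed, namely $\mu(A)=\sum_{n}2^{-n}\PM(\{\omega:(\omega,\tau_1^n(\omega))\in A\})+\PM(\{\omega:(\omega,T)\in A\})$, and this choice is essential, not cosmetic.

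The difficulty surfaces in your Case~1. When $\sigma\in M_1\cap M$ — the spread touches zero continuously from the left at $\sigma$ while a new excursion starts immediately at $\sigma$ — one has $\Gamma_1(\sigma)=\sigma$, so $\zu\sigma,\Gamma_1(\sigma)\zu$ is empty and the candidate predictable set reduces to the single graph $\auf\sigma_{\{X_{\sigma-}=0\}}\zu$. Under your rational-point measure this graph has mass $\sum_k 2^{-k}\PM(\sigma=q_k)$, which vanishes as soon as $\sigma$ has a continuous law; the assertion that ``its graph contributes positive $\mu$-mass'' is therefore unjustified and the contradiction with essential-supremum maximality disappears in exactly this scenario. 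The paper's measure is engineered precisely to avoid this: when the candidate set is a bare graph, $\sigma\in M$ forces $\sigma$ to coincide up to a null set with some $\tau_1^n$ (the $\tau_1^n$ exhaust the excursion-start times up to evanescence), so the graph carries mass at least $2^{-n}\PM(\sigma=\tau_1^n)>0$; and when the interval is nonempty, it contains $\Gamma_1(\sigma)$, which is again some $\tau_1^n$ or $T$. Thus every nonempty candidate has positive $\mu$-mass, which is exactly the paper's one-line justification ``since the intervals overlap $T$ or some $\tau^n_1(\omega)$ if they are nonempty.'' Your Case~2 has the same soft spot: maximality of the essential supremum gives equality only $\mu$-a.e., so to conclude that $\sigma$ lies in an already-chosen interval you again need the excluded piece to carry positive $\mu$-mass, not merely to exist pointwise. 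Replacing your $\mu$ by the paper's excursion-start measure resolves both issues at once and makes the case split unnecessary.
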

(\ref{25.11.2019.3}) can be interpreted as follows. If the spread approaches zero continuously at some time~$t$, the investment between $t-$ and $t$ already falls into the ``frictionless regime''. On the other hand, if the spread jumps to zero at time $t$, the frictionless regime only starts immediately after $t$ (if at all). 
\begin{proof}[Proof of Lemma~\ref{25.11.2019.2}]
	We take the starting points~$\tau^n_1$ of the excursions from 
	Lemma~\ref{25.11.2019.1} and define the measure~$\mu(A):=\sum_{n=1}^\infty 2^{-n}\PM(\{\omega\in\Omega : (\omega,\tau^n_1(\omega))\in A\}) + \PM(\{\omega\in\Omega : (\omega,T)\in A\})$ for all $A\in\mathcal{P}$. Consider the essential supremum w.r.t. $\mu$ of the set of predictable processes~$\mathbbm{1}_{\auf\sigma_{\{X_{\sigma-}=0\}}\zu
		\cup
		\zu \sigma,\Gamma_1(\sigma)\zu}$, where $\sigma$ runs through the set of stopping times satisfying $\PM(\{\omega\in\Omega : \sigma(\omega)<\infty,\ (\omega,\sigma(\omega))\not\in M_1\cup M_2\})=0$ with the further constraint that $\sigma_{\{X_{\sigma-}=0\}}$ is a predictable stopping time.
	Again, the supremum can be written as
	\beao
	\mathbbm{1}_{\cup_{n\in\bbn} \left( 
		\auf(\sigma^n_1)_{\{X_{\sigma^n_1-}=0\}}\zu
		\cup
		\zu \sigma_1^n,\Gamma_1(\sigma_1^n)\zu\right)}\quad \mu\mbox{-a.e.}
	\eeao
	Consider the random time
	\beam\label{25.1.2020.1}
	\sigma:=\inf\{t\ge 0 : X_{t-}=0\ \mbox{and}\ t\not\in\cup_{n\in\bbn}\left(
	[(\sigma^n_1)_{\{X_{\sigma_1^n-}=0\}}]
	\cup (\sigma_1^n,\sigma_2^n]\right)\},
	\eeam
	where $\sigma_2^n:=\Gamma_1(\sigma_1^n)$. Since $\sigma=\inf\{t\ge 0 : Z_t=0\}$, where $$Z:=X_- + \sum_{n=1}^\infty \mathbbm{1}_{
		\auf(\sigma^n_1)_{\{X_{\sigma^n_1-}=0\}}\zu\cup
		\zu \sigma_1^n,\sigma_2^n\zu}$$ is predictable, $\sigma$ is a stopping time (see Theorem~7.3.4 in \cite{cohen.elliott}). In addition, one has
	\begin{align*}
		\auf\sigma_{\{X_{\sigma-}=0\}}\zu &= \auf\sigma\zu\cap\{X_-=0\}\\
		&= \left(\auf 0, \sigma\zu \setminus \cup_{n\in\bbn} \auf(\sigma^n_1)_{\{X_{\sigma^n_1-}=0\}}\zu\cup
		\zu \sigma_1^n,\sigma_2^n\zu\right)\cap\{X_-=0\}\in\mathcal{P},
	\end{align*}
	where we use that the infimum in (\ref{25.1.2020.1}) must be attained if $X_{\sigma-}=0$. Thus, $\sigma_{\{X_{\sigma-}=0\}}$ is a predictable stopping time. 
	Finally, we have that $\PM(\{\omega\in\Omega : \sigma(\omega)<\infty,\ (\omega,\sigma(\omega))\not\in M_1\cup M_2)=0$. By the maximality of the supremum, one has $$\mu(\auf\sigma_{\{X_{\sigma-}=0\}}\zu\cup\zu \sigma,\Gamma_1(\sigma)\zu)=0.$$ Since the intervals overlap $T$ or some $\tau^n_1(\omega)$ if they are nonempty, we arrive at $\PM(\sigma<\infty)=0$, and thus (\ref{25.11.2019.3}) holds.
\end{proof}

\begin{note}\label{3.2.2020.1}
	For any $\vp\in\bP$ and any $\sigma$-finite measure $\mu$ on $\mathcal{P}$  with $\mu^S\ll \mu$, there exists a uniformly bounded sequence of simple strategies~$(\vp^n)_{n\in\bbn}$ with $\vp^n\to\vp$,\ $\mu$-a.e., and for any such sequence $(\vp^n)_{n\in\bbn}$ one has $\vp^n\mal S\to\vp\mal S$ uniformly in probability.
\end{note}
\begin{proof}
	The existence of such a sequence with $\vp^n\to\vp$,\ $\mu$-a.e. follows from the approximation theorem for measures (see, e.g., Theorem~1.65(ii) in \cite{klenke}). 
	Then, the convergence of the integrals follows for the martingale parts by (3) on page~49 of \cite{jacod.shiryaev} and  for the finite variation parts by dominated convergence.
\end{proof}

\begin{proof}[Proof of Theorem~\ref{8.12.2019.1}]
	Obviously, it is sufficient to show the theorem under an equivalent measure $Q\sim \mathbb{P}$. Hence, we assume w.l.o.g. that $\mathbb{P}=Q$, where $Q$ is the measure introduced above \eqref{eq:Q.muS}.
	
	Ad (i): Let $(\vp^n)_{n\in\bbn}\subset \bP$ satisfy $\vp^n\to \vp$ pointwise on $\{\ov{S}_->\un{S}_-,\ A=1\}$.
	For any $J\in \mathcal{I}$ from (\ref{21.2.2020.1}), Proposition~\ref{prop:liminf} yields that
	$\liminf_{n\to\infty} C(\vp^n,J\cap[0,t])(\omega)\ge  C(\vp,J\cap[0,t])(\omega)$ for all $(\omega,t)\in \{A=1\}$. 
	It follows that $\liminf_{n\to\infty}C_t(\vp^n)(\omega) 
	\ge \sup_{J\in\mathcal{I}} C(\vp,J\cap[0,t])(\omega)= C_t(\vp)(\omega)$ for all $(\omega,t)\in \{A=1\}$.
	If in addition $(\vp^n)_{n\in\bbn}$ is uniformly bounded and $\vp^n\to \vp\ \mu^S\mbox{-a.e. on\ }\{\ov{S}_-=\un{S}_-,\ A=1\}$,  we have that
	\beam\label{25.8.2020.1}				
	(\vp^n\mathbbm{1}_{\{A=1\}})\mal S\to (\vp\mathbbm{1}_{\{A=1\}})\mal  S\quad\mbox{uniformly in probability}
	\eeam
	(see Note~\ref{3.2.2020.1}). Since $\{A=1\}$ is a predictable set of interval type, there is an increasing sequence of stopping times $(T^m)_{m\in\bbn}$ s.t. $\{A=1\}\cup(\Omega\times\{0\})=\cup_{m\in\bbn}\auf0,T^m\zu$ (see, e.g., \cite[Theorem 8.18]{he.wang.yan.1992}). For each $m\in\bbn$, we obviously have \begin{align*}
		\left(	\left(\mathbbm{1}_{\auf 0, T^m\zu}\vp\right)\bigcdot S\right)\mathbbm{1}_{\auf 0, T^m\zu}=\left(\vp\bigcdot S\right)^{T^m}\mathbbm{1}_{\auf 0, T^m\zu}=\left(\vp\bigcdot S\right)\mathbbm{1}_{\auf 0, T^m\zu}.
	\end{align*}
	Letting $m\to\infty$ this yields
	\begin{align}\label{eq:A=1}
		(\vp\mathbbm{1}_{\{A=1\}}\bigcdot S)\mathbbm{1}_{\{A=1\}}=(\vp\bigcdot S)\mathbbm{1}_{\{A=1\}} 
	\end{align} up to evanescence by Note~\ref{3.2.2020.1} and, analogously, $(\vp^n\mathbbm{1}_{\{A=1\}}\bigcdot S)\mathbbm{1}_{\{A=1\}}=(\vp^n\bigcdot S)\mathbbm{1}_{\{A=1\}}$ up to evanescence for $n\in\bbn$. Thus, 
	together with (\ref{25.8.2020.1}), we have 
	\begin{align*}
		\liminf_{n\to\infty}\left(\vp^n\mal S
		-\vp\mal S\right)^+\mathbbm{1}_{\{A=1\}} = 0\quad\mbox{up to evanescence.}
	\end{align*} 
	Putting the cost terms and the trading gains w.r.t. $S$ together, we arrive at (i).\\
	
	Ad (ii): The following analysis is based on the stopping times $(\tau^n_1)_{n\in\bbn}$ and $(\sigma^n_1)_{n\in\bbn}$ from Lemma~\ref{25.11.2019.1} and 
	Lemma~\ref{25.11.2019.2}, respectively. We can and do choose $(\sigma^n_1)_{n\in\bbn}$ s.t.
	\beam\label{19.2.2020.2}
	\PM(\sigma^n_1=\tau^m_1<\infty,\ X_{\sigma^1_n-}>0)=0,\quad\forall n,m\in\bbn.
	\eeam
	This means that if the spread~$X$ only touches zero at a single point and its left limit is non-zero, there directly starts the next excursion without a one point frictionless regime in between.
	
	For the rest of the proof, we write $\{X_{\tau-}\in B\}$ for the set 
	$\{\omega\in\Omega : \exists t\in[0,T]\ \tau(\omega)=t,\ X_{t-}(\omega)\in B\}$, where $\tau$ is a $[0,T]\cup\{\infty\}$-valued stopping time and $B\subset\bbr$.
	Let \begin{align}\nonumber
		A^n&:= \zu (\tau^n_1)_{\{X_{\tau^n_1-}>0\}},\Gamma_2(\tau_1^n)\auf\cup\auf(\Gamma_2(\tau_1^n))_{ 
			\{X_{\tau^n_1-}>0\}\cap\{X_{\Gamma_2(\tau_1^n)-}>0\}}
		\zu\in\mathcal{P},\ n\in\bbn, \\\nonumber
		B^n&:= \auf (\sigma^n_1)_{\{X_{\sigma^n_1-}=0\}}\zu \cup \zu \sigma_1^n,\Gamma_1(\sigma_1^n)\zu\in\mathcal{P},\ n\in\bbn, \\ \nonumber
		\wt{B}^n&:= \zu\Gamma_1(\sigma_1^n),\Gamma_2(\Gamma_1(\sigma_1^n))\auf\cup\auf (\Gamma_2(\Gamma_1(\sigma_1^n)))_{X_{\{\Gamma_2(\Gamma_1(\sigma_1^n)))-}>0\}}\zu\in\mathcal{P},\ n\in\bbn,
	\end{align}
	and 
	\begin{align}
		\label{3.12.2019.3}
		\vp^{N}:=\vp \mathbbm{1}_{\cup_{n=1,\ldots,N}(A^n\cup B^n\cup \wt{B}^n)},\quad N\in\bbn.
	\end{align}
	
	Excursions away from zero are either included by $A^n$ or by
	$\wt{B}^n$ with the frictionless forerunner~$B^n$. 
	In the first case, the spread cannot jump away from zero since $X_{\tau^n_1}=0$ on $\{X_{\tau^n_1-}>0\}$. 
	In the latter case, the frictionless forerunner avoids that $\vp^N$ produces costs when the spread jumps away from zero, which do not occur with the strategy~$\vp$. Namely, at a time the spread jumps away from zero, $\vp^N$ either remains zero or it already coincides with $\vp$. Note that the frictionless forerunner may consist of a single point only. For example, this is the case if the jump time is an accumulation point of starting/ending points of excursions shortly before.   
	
	First, we approximate $\vp$ by the strategies~$\vp^N$.
	
	{\em Step 1:} Let $E\in\mathcal{F}_T$ be a set with $\PM(E)=1$ s.t. the properties from Lemma~\ref{25.11.2019.1} and Lemma~\ref{25.11.2019.2} hold for all $\omega\in E$. Let us show that $\vp^{N}_t(\omega)\to \vp_t(\omega)$ for all $t\in[0,T]$ and $\omega\in E$. By construction of $\vp^N$, we only have to show that for each $n\in\bbn$, the excursion starting in $\tau^n_1(\omega)$ is overlapped by $A^n_\omega:=\{t\in[0,T] : (\omega,t)\in A^n\}$, the $\omega$-intersection of $A^n$, or by some $\wt{B}^m_\omega$, $m\in\bbn$. 
	In the case that $X_{\tau^n_1(\omega)-}(\omega)>0$, the excursion is overlapped by $A^n_\omega$. In the case that $X_{\tau^n_1(\omega)-}(\omega)=0$, we have by Lemma~\ref{25.11.2019.2} that
	$\tau^n_1(\omega)\in[\sigma^m_1(\omega),\Gamma_1(\sigma^m_1(\omega))]$ for some $m\in\bbn$ and thus the excursion starting in $\tau^1_n(\omega)$ is overlapped by $\wt{B}^m_\omega$. By Note~\ref{3.2.2020.1}, it follows that $\vp^{N}\mal S$ to $\vp\mal S$ uniformly in probability for $N\to\infty$.
	
	{\em Step 2:} W.l.o.g we assume that the bounded process~$\vp$ takes values in $[-1/2,1/2]$ to get rid of a further constant. Let us show that
	\beam\label{3.12.2019.1}
	\sup_{t\in[0,T]}|C_t(\vp^{N}) - C_t(\vp)|\mathbbm{1}_{\{C_t(\vp)\leq K\}}\to 0,\ N\to\infty,\ \mbox{pointwise on\ } E\ \forall K\in\bbn.
	\eeam 
	From $X_{\tau^n_1}=0$ on $\{X_{\tau^n_1-}>0\}$ and $X_{\sigma^n_1}=0$ on $\{X_{\sigma^n_1-}>0\}$, we conclude: for fixed $\omega\in E$ and $0\le a\le b\le T$ with $\inf_{u\in[a,b)}X_u(\omega)>0$, we either have that $\vp^N_u(\omega)=\vp_u(\omega)$ for all $u\in[a,b]$ or  $\vp^N_u(\omega)=0$ for all $u\in[a,b]$.  
	By the definition of the cost term in (\ref{31.12.2019.1}), this yields
	$C(\vp^{N},I\cap[0,t])\le C(\vp,I\cap[0,t])$ for all $I\in\mathcal{I}$, $(\omega,t)\in E\times[0,T]$ and thus $C_t(\vp^{N})\le C_t(\vp)$ for all $(\omega,t)\in E\times[0,T]$. We define \begin{align}\label{eq:DefTheta}
		\theta^m:=\inf\{t\geq 0\ :\ C_t(\vp)>m\}\wedge T\quad \text{for}\ m\in\bbn.
	\end{align} By $\Delta^-C_{\theta^m}(\vp)\le \sup_{u\in[0,T]}X_u$, the paths of the stopped process~$C^{\theta^m}(\vp)$ are bounded.
	Fix $\omega\in E$ and $\eps>0$. For $K\in\bbn$ we set $u:=\theta^K$. Proposition~\ref{prop:SubIntervallProperty} yields that $C(\vp,I\cap[0,u]) =  C(\vp,I\cap [0,t]) +  C(\vp,I\cap [t,u])$ for all $I\in\mathcal{I}$ and $t\leq u$.
	Therefore, together with Proposition~\ref{prop:PathwiseProperties}(i), there exists $I\in\mathcal{I}$ s.t. $$\sup_{t\in[0,T]}\left(C_t(\vp)- C(\vp,I\cap[0,t])\right)\mathbbm{1}_{\{C_t(\vp)\leq K\}}\le \eps.$$
	The set $I$ is  overlapped by finitely many $\omega$-intersections of $A^n$ and $B^n\cup\wt{B}^n$, i.e., for $N$ large enough, one has $I\subset 
	\cup_{n\le N}(A^n\cup B^n\cup \wt{B}^n)_\omega$, i.e., $C(\vp^N, I\cap [0,t])=C(\vp, I\cap [0,t])$ and, consequently, 
	$(C_t(\vp)-C_t(\vp^N))\mathbbm{1}_{\{C_t(\vp)\leq K\}}\leq (C(\vp, I\cap[0,t])-C(\vp^N, I\cap [0,t]))\mathbbm{1}_{\{C_t(\vp)\leq K\}}+\varepsilon= \varepsilon$ for all $t\in[0,T]$. This implies (\ref{3.12.2019.1}). Together with Step~1, we have that
	\begin{align}\label{2.2.2020.1}
		&\vp^N\to \vp\ \mbox{pointwise up to evanescence}\\ \label{2.2.2020.1.2}
		&\text{and}\ \sup_{t\in[0,T]}\vert V_t(\vp^{N})- V_t(\vp)\vert\mathbbm{1}_{\{C_t(\vp)\leq K\}}\to 0\ \mbox{in probability}
	\end{align}
	for $N\to\infty$ and each $K\in\bbn$.

	{\em Step 3:} It remains to approximate the strategies~$\vp^{N}$, $N\in\bbn$, by almost simple strategies. Since the pointwise convergence that we need on $\{X_->0\}\cap \{C(\vp)<\infty\}$ is not metrizable, it is not sufficient to approximate each $\vp^N$ separately by 
	a sequence of almost simple strategies. Let $\mu$ be a $\sigma$-finite measure on $\mathcal{P}$ with $\mu^S\ll \mu$. We fix some $N\in\bbn$ and let $\eps:=2^{-N}$. 
	In the following, we construct an almost simple strategy step by step on disjoint stochastic intervals. The main idea is to approximate the cost term on subintervals of excursions where the spread is bounded away from zero while controlling the error at the beginning and the end of the excursions. 
	We start with the construction of an almost simple strategy on $A^n$ with $n\le N$. We recall that $\tau^n_2:=\Gamma_2(\tau^n_1)$. There exists a stopping time~$\tau^{n,N}_1$ with $\theta^N\wedge\tau^n_2\geq\tau^{n,N}_1> \tau^n_1$ on 
	$\{\tau^n_1<\theta^{N}\}\cap\{X_{\tau^n_1-}>0\}$, $\tau^{n,N}_1=\theta^N$ on $\{\theta^N\le\tau^n_1\}$ and, for notational convenience, $\tau^{n,N}_1 = \tau^n_1$ elsewhere s.t. 
	\beam\label{5.2.2020.01}
	\PM(\tau^n_1\wedge \theta^N \le \tau^{n,N}_1\le  \tau^n_1+\eps)= 1,
	\eeam
	$\PM((\vp^N \mathbbm{1}_{\zu\tau^n_1,\tau^{n,N}_1\zu}\mal S)^\star>\eps)\le \eps$, 
	$\PM(\tau^n_1<\infty,\ |X_{\tau^{n,N}_1} -X_{\tau^n_1\wedge \theta^N}| >\eps)\le \eps$,
	and $\PM(\tau^n_1<\infty,\ C_{\tau^{n,N}_1}(\vp^N)- C_{\tau^n_1\wedge \theta^N}(\vp^N)> \eps)\le \eps$, where we
	use the notation $Y^\star:=\sup_{t\in[0,T]}|Y_t|$ and $\theta^N$ was defined in \eqref{eq:DefTheta}.
	This follows from the right-continuity of the processes $\vp^N \mathbbm{1}_{\zu\tau^n_1,T\zu}\mal S$, $X$ and from the definition of the cost process together with $X_{\tau^n_1}=0$ on $\{X_{\tau^n_1-}>0\}$.
	In addition, since $\auf (\tau^n_2)_{\{X_{\tau^n_2-=0}\}}\zu = \zu \tau^n_1,\tau^n_2\zu\cap\{X_-=0\}\in\mathcal{P}$, the stopping time $(\tau^n_2)_{\{X_{\tau^n_2-=0}\}}$ is predictable.
	Thus, by the existence of an announcing sequence (see, e.g., \cite[Theorem~4.34]{he.wang.yan.1992}),				
	there is a stopping time~$\tau^{n,N}_2$ with $\tau^{n,N}_1\leq\tau^{n,N}_2\le \tau^n_2\wedge \theta^N$ and $\tau^{n,N}_2<\tau^n_2$ on $\{X_{\tau^n_2-}=0, \tau^{n,N}_1<\tau^n_2\}$ s.t.  
	\begin{align}\label{20.02.2020.01}
		&\PM(\tau^{n,N}_2 < \tau^n_2\wedge \theta^N -\eps)\le\eps, \qquad \PM(X_{\tau^n_2-}>0,\ \tau^{n,N}_2 < \tau^n_2\wedge\theta^N)\le\eps,\\
		\nonumber &\PM((\vp^N \mathbbm{1}_{\zu\tau^{n,N}_2,\tau^n_2\wedge\theta^N\auf \cup \auf(\tau^n_2)_{\{X_{\tau^n_2-}>0,\ \tau^{n,N}_2<\tau^n_2\wedge\theta^N\}}\zu}\mal S)^\star>\eps)\le \eps,\\
		\nonumber &\PM(X_{\tau^{n,N}_2}> \eps, \tau^{n,N}_2<\tau^n_2\wedge \theta^N)\le \eps,
		\ \text{and} \ \PM(\tau^n_2<\infty,\ C_{\tau^n_2 \wedge\theta^{N}}(\vp^N)-C_{\tau^{n,N}_2}(\vp^N) > \eps)\le \eps.
	\end{align}
	By Proposition~\ref{1.1.2020.2} applied to the stopping times $\tau^{n,N}_1\leq \tau^{n,N}_2$, there exists an almost simple strategy~$\wt{\psi}^N$ with $\wt{\psi}^N_{\tau^{n,N}_1}=\vp^N_{\tau^{n,N}_1}$,  
	\beam\label{5.2.2020.02}
	\sup_{t\in[\tau^{n,N}_1,\tau^{n,N}_2]}|\wt{\psi}^N_t-\vp^N_t|\le \eps,
	\eeam
	\beao
	\PM\left(\sup_{t\in[\tau^{n,N}_1,\tau^{n,N}_2]}|C_t(\wt{\psi}^N)-C_{\tau^{n,N}_1}(\wt{\psi}^N)- (C_t(\vp^N)-C_{\tau^{n,N}_1}(\vp^N))| >\eps\right)\le \eps,
	\eeao 
	and
	$\PM(((\wt{\psi}^N-\vp^N) \mathbbm{1}_{\zu \tau^{n,N}_1,\tau^{n,N}_2\zu}\mal S)^\star>\eps)\le \eps$ (the later also uses Note~\ref{3.2.2020.1}).   
	We define the almost simple strategy by 
	\beam\label{4.12.2019.1}
	\psi^N_t := \wt{\psi}^N_t \mathbbm{1}_{(\tau^{n,N}_1< t\le \tau^{n,N}_2)}\quad\mbox{on}\quad A^n.
	\eeam
	Since $\psi^N$ can be updated for free at the left endpoint of $A^n$, for the increments of the process $V(\psi^N)-V(\vp^N) = (\psi^N-\vp^N)\mal S - (C(\psi^N) - C(\vp^N))$ we get the estimate
	\beam\label{19.2.2020.1}
	& & \PM\left( \sup_{t\in(\tau^n_1,\tau^n_2)\cup[(\tau^n_2)_{\{X_{\tau^n_2-}>0\}}]}
	|V_t(\psi^N)-V_{\tau^n_1}(\psi^N) - (  V_t(\vp^N)-V_{\tau^n_1}(\ph^N))|\mathbbm{1}_{\{C_t(\vp)\leq K\}}>8\eps
	\right.\nonumber\\
	& & \left. \qquad \tau^n_1<\infty,\ X_{\tau^n_1-}>0, 
	\right)\le 8\eps\qquad\ \mbox{for all}\ n=1,\ldots,N,\ K\le N,
	\eeam
	regardless of how $\psi^N$ is defined outside $A^n$, especially at time $\tau^n_1$. Indeed, in the worst case, there are $2$ error terms on $(\tau^n_1,\tau^{n,N}_1]$, $3$ error terms on 
	$(\tau^{n,N}_1,\tau^{n,N}_2]$, and $3$ error terms between $(\tau^{n,N}_2,\tau^n_2)\cup[(\tau^n_2)_{\{X_{\tau^n_2-}>0\}}]$. 
	
	We proceed with the construction of the almost simple strategy on $B^n\cup \wt{B}^n$ with $n\leq N$. A strategy with support~$B^n$ has zero costs, and by Note~\ref{3.2.2020.1}, we find an (almost) simple strategy~$\wh{\psi}^N$ with 
	\beam\label{3.2.2020.3}
	\mu(|\wh{\psi}^N-\vp^N|\mathbbm{1}_{B^n}>\eps)\le \eps,
	\eeam
	\beam\label{18.2.2020.01}
	\PM(\Gamma_1(\sigma_1^n)<\infty,\ |\wh{\psi}^N_{\Gamma_1(\sigma_1^n)} - \vp^N_{\Gamma_1(\sigma_1^n)}|X_{\Gamma_1(\sigma_1^n)}>\eps)\le \eps,
	\eeam
	and $\PM(((\wh{\psi}^N-\vp^N) \mathbbm{1}_{B^n}\mal S)^\star>\eps)\le \eps$. 
	After $\Gamma_1(\sigma_1^n)$, we proceed similar to (\ref{4.12.2019.1}). Setting $\wt{\tau}^n_2:=\Gamma_2(\Gamma_1(\sigma_1^n))$, there exists a stopping time $\wt{\tau}^{n,N}_1$ with $\wt{\tau}^{n,N}_1=\theta^N$ on $\{\theta^N\le\Gamma_1(\sigma^n_1)\}$, $\wt{\tau}^{n,N}_1 = \Gamma_1(\sigma_1^n)$ on $\{\Gamma_1(\sigma_1^n)<\theta^N,\ X_{\Gamma_1(\sigma_1^n)}>0\}$ and
	$\theta^N\wedge \wt{\tau}^n_2\geq \wt{\tau}^{n,N}_1> \Gamma_1(\sigma_1^n)$ on 
	$\{\Gamma_1(\sigma_1^n)<\theta^{N},\ X_{\Gamma_1(\sigma_1^n)}=0\}$ s.t. $\PM(\Gamma_1(\sigma_1^n)\wedge \theta^N\le \wt{\tau}^{n,N}_1\le \Gamma_1(\sigma_1^n)+\eps)= 1$,
	$\PM(((\vp^N-\vp^N_{\Gamma_1(\sigma_1^n)}) \mathbbm{1}_{\zu\Gamma_1(\sigma_1^n),\wt{\tau}^{n,N}_1\zu}\mal S)^\star>\eps)\le \eps$, 
	$\PM(\Gamma_1(\sigma_1^n)<\infty,\ |X_{\wt{\tau}^{n,N}_1} -X_{\Gamma_1(\sigma_1^n)\wedge\theta^N}|>\eps)\le \eps$,
	and $\PM(\Gamma_1(\sigma_1^n)<\infty,\ C_{\wt{\tau}^{n,N}_1}(\vp^N)-C_{\Gamma_1(\sigma_1^n)\wedge\theta^N}(\vp^N)> \eps)\le \eps$. $\wt{\tau}^{n,N}_2$ is defined completely analogous to $\tau^{n,N}_2$ from above. We set
	\beam\label{4.12.2019.2}
	\psi^N_t := \wh{\psi}^N_t \mathbbm{1}_{(t\le \Gamma_1(\sigma_1^n)\wedge\theta^N)}
	+ \ov{\psi}^N_t \mathbbm{1}_{(\wt{\tau}^{n,N}_1< t\le  
		\wt{\tau}^{n,N}_2)}\ \mbox{on}\ B^n\cup \wt{B}^n
	\eeam
	for some almost simple strategy~$\ov{\psi}^N$ with $\ov{\psi}^N_{\wt{\tau}^{n,N}_1}=\vp^N_{\wt{\tau}^{n,N}_1}$ and $\sup_{t\in[\wt{\tau}^{n,N}_1,\wt{\tau}^{n,N}_2]}|\ov{\psi}^N_t-\vp^N_t|\le \eps$. 
	As in (\ref{19.2.2020.1}), but with the additional error terms on $B^n$ and (\ref{18.2.2020.01}) for the case that the spread jumps away from zero, 
	we get that
	\begin{align}
		& \nonumber \PM\left(\sup_{\substack{t\in[(\sigma^n_1)_{\{X_{\sigma^n_1-}=0\}}\cup(\sigma^n_1,\Gamma_2(\Gamma_1(\sigma^n_1)))\\ \cup[(\Gamma_2(\Gamma_1(\sigma^n_1)))_{\{X_{\Gamma_2(\Gamma_1(\sigma^n_1))-}>0\}}]}}
		|V_t(\psi^N)-V^1 - (  V_t(\vp^N)-V^2)|\mathbbm{1}_{\{C_t(\vp)\leq K\}}\ >10\eps 
		\right)\\ \label{19.2.2020.3} & \qquad\le 10\eps\qquad \mbox{for all}\ n=1,\dots,N,\ K\le N,
	\end{align}
	where $V^1:=V_{\sigma^n_1-}(\psi^N)$, $V^2:=V_{\sigma^n_1-}(\vp^N)$
	on $\{X_{\sigma^n_1-}=0\}$ and $V^1:=V_{\sigma^n_1}(\psi^N)$, $V^2:=V_{\sigma^n_1}(\vp^N)$ on $\{X_{\sigma^n_1-}>0\}$.
	By (\ref{19.2.2020.2}), $A^n$ and $B^m\cup \wt{B}^m$ are disjoint. Thus, (\ref{4.12.2019.1})
	and (\ref{4.12.2019.2}) can be used to define an almost simple
	strategy on $\Omega\times[0,T]$: for $n\le N$, define $\psi^N$ on $\cup_{n\le N}(A^n\cup B^n\cup \wt{B}^n)$ as above and set $\psi^N:=0$ on $(\Omega\times[0,T])\setminus\cup_{n\le N}(A^n\cup B^n\cup \wt{B}^n)$.
	By $V_0(\psi^N)=V_0(\vp^N)=0$ and the construction of $A^n$ and $B^n\cup\wt{B}^n$,  for each $(\omega,t)$, $(V_t(\psi^N_t)(\omega)-V_t(\vp^N)(\omega))\mathbbm{1}_{\{C_t(\vp)\leq K\}}(\omega)$ can be written as a finite sum of increments from (\ref{19.2.2020.1}) and (\ref{19.2.2020.3}).
	For this, we again use that at the right endpoint of $A^n$ and $\wt{B}^n$, the position can be liquidated without any costs.
	Summing up the error terms and recalling that $\varepsilon=2^{-N}$, this yields $\PM(\sup_{t\in[0,T]}|V_t(\psi^N)-V_t(\vp^N)|\mathbbm{1}_{\{C_t(\vp)\leq K\}} > 18 N 2^{-N})\le 18 N 2^{-N}$ for all $N\geq K$. Together with \eqref{2.2.2020.1.2}, we obtain $\sup_{t\in[0,T]}\vert V_t(\psi^N)-V_t(\vp)\vert\mathbbm{1}_{\{C_t(\vp)\leq K\}}\to 0$ in probability for $N\to\infty$ and all $K\in\bbn$.\\

	By \eqref{3.2.2020.3} and \eqref{4.12.2019.2}, we have that $(\psi^N)_{N\in\bbn}$ converges to $\vp$ $\mu$-a.e. on $\{X_-=0\}{\cap\{C(\vp)<\infty\}}$. It remains to show that $(\psi^N)_{N\in\bbn}$ converges pointwise up to evanescence to $\vp$ on the set~$\{X_->0\}\cap\{C(\vp)<\infty\}$.
	Let $(\omega,t)\in \Omega\times[0,T]$ with $X_{t-}(\omega)>0$ and $C_t(\vp)(\omega)<\infty$. By the arguments in Step~1, there exists an $n\in\bbn$ with $(\omega,t)\in A^n\cup \wt{B}^n$. W.l.o.g. $(\omega,t)\in A^n$.
	By (\ref{5.2.2020.01}), one has $\tau^{n,N}_1(\omega)\le \tau^n_1(\omega)+2^{-N}<t$  and, as the costs at $t$ are finite, $\theta^{N}(\omega)\geq t$ for $N$ large enough. 
	
	{\em Case 1:}   $t<\tau^n_2(\omega)$. By (\ref{20.02.2020.01}) and the lemma of Borel-Cantelli, we have that 
	$\PM(E^n)=0$, where $E^n:=\cap_{\wt{N}\in\bbn}\cup_{N\ge \wt{N}}
	\{\tau^{n,N}_2 < \tau^n_2 - 2^{-N}\}$. If $\omega\not\in E^n$, this implies that $t<   \tau^n_2(\omega) - 2^{-N}\le \tau^{n,N}_2(\omega)$ for $N$ large enough and thus by (\ref{5.2.2020.02}), $|\psi^N_t(\omega)-\vp^N_t(\omega)|\le 2^{-N}$ for $N$ large enough.

	{\em Case 2:} $t=\tau^n_2(\omega)$ and thus $X_{\tau^n_2(\omega)-}>0$. By (\ref{20.02.2020.01}) and the lemma of Borel-Cantelli, we have that 
	$\PM(\wt{E}^n)=0$, where $\wt{E}^n:=\cap_{\wt{N}\in\bbn}\cup_{N\ge \wt{N}}
	\{X_{\tau^n_2-}>0,\ \tau^{n,N}_2 < \tau^n_2\}$. If $\omega\not\in \wt{E}^n$, this implies that $t=\tau^{n,N}_2(\omega)$ for $N$ large enough and thus by (\ref{5.2.2020.02}), $|\psi^N_t(\omega)-\vp^N_t(\omega)|\le 2^{-N}$ for $N$ large enough.

	Since $\vp^N_t(\omega)=\vp_t(\omega)$ for all $N\ge n$, we conclude that 
	the sequence~$(\psi^N)_{N\in\bbn}$ converges pointwise up to evanescence to $\vp$ on the set~$\{X_->0\}\cap\{C(\vp)<\infty\}$.
\end{proof}
\appendix\normalsize
\section{Technical results: Construction of the cost term} \label{app:TechnicalCosts}

\begin{proof}[Proof of Proposition~\ref{prop:ExistenceCostTerm} and Proposition~\ref{prop:approximationcost}] As the two propositions are interrelated, we give their proofs together. 
	Recall that the arguments below are path-by-path, i.e., $\omega\in\Omega$ is fixed.
	
	\emph{Step 1:} We begin by establishing the uniqueness of the cost term. Therefore, assume that there are exist $C_1,C_2\in [0,\infty]$ satisfying the condition in Definition~\ref{def:CostTermIntervall}.  
	This means that for each $i\in\{1,2\}$, $\varepsilon>0$, we find  a partition $P^i_\varepsilon$ of $I=[a,b]$ s.t. for every refinement $P$ of $P^i_\varepsilon$ and every modified intermediate subdivision $\lambda$ of $P$, we have $d(C_i,R(\varphi,P,\lambda))<\varepsilon$,
	where $d(x,y):=\vert \arctan(x)-\arctan(y)\vert$ with
	$\arctan(\infty):=\pi/2$, which defines a metric on $[0,\infty]$.	
	But, letting $\lambda$ denote an arbitrary modified intermediate subdivision of $P^1_\eps\cup P^2_\eps$, this means
	\begin{align*}
		d(C_1,C_2)\leq d(C_1,R(\varphi,P^1_\eps\cup P^2_\eps,\lambda))+d(C_2,R(\varphi,P^1_\eps\cup P^2_\eps,\lambda))<2\varepsilon,
	\end{align*}
	which means $C_1=C_2$ as the above holds for all $\varepsilon>0$.
	
	\emph{Step 2:} We now turn towards existence. Let $(\delta_n)_{n\in\bbn},(\eta_n)_{n\in\bbn}\subseteq(0,\infty) $ be sequences with $\delta_n\downarrow 0$ and $\eta_n\downarrow0$. It follows from a minor adjustment of \cite[Lemma 2.1]{mikosch2000stochastic} that for each $n\in\bbn$
	there is a partition $P_n=\{t^n_0,\dots,t^n_{k_n}\}$ of $I$ s.t. \begin{align}\label{eq:ExistenceCostTerm1}
		\begin{aligned}
			\mathrm{osc}(\overline{S}-S, [t^n_{i-1},t^n_i))<\delta_n\quad \text{and}\quad \mathrm{osc}(S-\underline{S}, [t^n_{i-1},t^n_i))<\delta_n\
	\end{aligned}\end{align}
	for $i=1,\dots,k_n$. By the definition of the oscillation of a function, \eqref{eq:ExistenceCostTerm1} also holds for every refinement of $P_n$. Hence, $P_n$ can be chosen s.t. we also have  
	\begin{align}\label{eq:ExistenceCostTerm2New}
		\begin{cases}
			\sum_{i=1}^{k_n}\vert \varphi_{t^n_i}-\varphi_{t^n_{i-1}}\vert +\eta_n \geq \mathrm{Var}_{a}^b(\varphi),& \text{if}\ \mathrm{Var}_a^b(\varphi)<\infty\\
			\sum_{i=1}^{k_n}\vert \varphi_{t^n_i}-\varphi_{t^n_{i-1}}\vert>1/\eta_n, & \text{if}\ \mathrm{Var}_a^b(\varphi)=\infty
		\end{cases} \quad\text{for all}\ n\in\bbn.
	\end{align}
	In addition, we can obviously choose the sequence $(P_n)_{n\in\bbn}$ s.t. it is refining. This shows that there exists a refining sequence of partitions satisfying assertions~(i) and (ii) of Proposition~\ref{prop:approximationcost}.

	\emph{Step 3:} Next, let $(P_n)_{n\in\bbn}$ be a refining sequence of partitions from step 2, i.e., $P_n=\{t^n_0,\dots,t^n_{k_n}\}$ satisfies \eqref{eq:ExistenceCostTerm1} and \eqref{eq:ExistenceCostTerm2New}.
	
	\textit{Case 1:} Let us first assume $\mathrm{Var}_a^b(\varphi)<\infty$. Let $M:=\sup_{t\in I}(\overline{S}_t-\underline{S}_t)$. We claim that for all subdivisions 
	$\lambda=\{s_1,\dots,s_{k_n}\}$ of $P_n$, all refinements 
	$P'=\{t'_0,\dots,t'_m\}$ of $P_n$, and all subdivisions 
	$\lambda'=\{s'_1,\dots, s'_m\}$ of $P'$, we have
	\begin{align}\label{eq:ExistenceCostTerm2new}
		\vert R(\varphi,P_n,\lambda)-R(\varphi,P',\lambda')\vert \leq \eta_n M+ \delta_n \mathrm{Var}_a^b(\varphi).
	\end{align} 
	The key estimate to derive (\ref{eq:ExistenceCostTerm2new}) is
	\begin{align*}
		&\left\vert (\overline{S}_{s_i}-S_{s_i})\left(\varphi_{t^n_i}-\varphi_{t^n_{i-1}}\right)^+-\sum_{k=1}^{n_i} \left(\overline{S}_{s'_{i_k}}-S_{s'_{i_k}}\right)\left(\varphi_{t'_{i_k}}-\varphi_{t'_{i_{k-1}}}\right)^+\right\vert\\
		&\leq \left\vert (\overline{S}_{s_i}-S_{s_i})\left(\left(\varphi_{t^n_i}-\varphi_{t^n_{i-1}}\right)^+-\sum_{k=1}^{n_i}\left(\varphi_{t'_{i_k}}-\varphi_{t'_{i_{k-1}}}\right)^+\right)\right\vert\\ &\ +\left\vert\sum_{k=1}^{n_i} \left(\left(\overline{S}_{s'_{i_k}}-S_{s'_{i_k}}\right)-\left(\overline{S}_{s_i}-S_{s_i}\right)\right)\left(\varphi_{t'_{i_k}}-\varphi_{t'_{i_{k-1}}}\right)^+\right\vert\\
		&\leq M  \left(\sum_{k=1}^{n_i} \left(\varphi_{t'_{i_k}}-\varphi_{t'_{i_{k-1}}}\right)^+
		-\left(\varphi_{t^n_i}-\varphi_{t^n_{i-1}}\right)^+\right)+\delta_n\sum_{k=1}^{n_i} \left(\varphi_{t'_{i_k}}-\varphi_{t'_{i_{k-1}}}\right)^+,
	\end{align*}
	where $i\in\{1,\dots, k_n\}$ and $t'_{i_1},\dots, t'_{i_{n_i}}$ denote the elements of $P'$ with $t^n_{i-1}=t'_{i_1}<\dots< t'_{i_{n_i}}=t^n_i$.
	
	Now, let $(\lambda_n)_{n\in\bbn}$ be arbitrary modified intermediate subdivisions of $(P_n)_{n\in\bbn}$. Then, as the sequence $(P_n)_{n\in\bbn}$ is refining, \eqref{eq:ExistenceCostTerm2new} yields
	\begin{align*}
		\sup_{m\geq n}\vert R(\varphi, P_m,\lambda_m)-R(\varphi,P_n,\lambda_n)\vert\leq  \eta_n M+\delta_n \mathrm{Var}_a^b(\varphi).
	\end{align*} 
	Thus, the sequence $(R(\varphi, P_n,\lambda_n))_{n\in\bbn}$ is Cauchy in $\mathbb{R}_+$ and $C:=\lim\limits_{n\to\infty} R(\varphi, P_n,\lambda_n)\in\mathbb{R}_+$ exists.  It remains to show that $C$ satisfies Definition~\ref{def:CostTermIntervall}(i). Therefore, let $\varepsilon>0$ and choose $n\in\mathbb{N}$ s.t. $\eta_n M+\delta_n \mathrm{Var}_a^b(\varphi)<\varepsilon/2$ and  $\vert C-R(\varphi, P_n,\lambda_n)\vert <\varepsilon/2$. Together with \eqref{eq:ExistenceCostTerm2new}, this implies that for all refinements $P'$ of $P_n$ and subdivisions~$\lambda'$ of $P'$, we have 
	\begin{align*}
		\vert C-R(\varphi, P',\lambda')\vert &\leq \vert C-R(\varphi, P_n,\lambda_n)\vert+\vert R(\varphi, P_n,\lambda_n)- R(\varphi, P',\lambda')\vert<\varepsilon.
	\end{align*} 
	Thus, $C$ satisfies Definition~\ref{def:CostTermIntervall}(i).

	\emph{Case 2:} We now treat the case $\mathrm{Var}_a^b(\varphi)=\infty$. In this case, we will show that the cost term exists and $C(\varphi,I)=\infty$. Recall that we assumed $\delta:=\inf_{t\in[a,b)}(\overline{S}_t-\underline{S}_t)>0$.  We define a sequence $(\sigma_k)_{k\geq 0}$ by $\sigma_0=a$ and 
	\begin{align*}
		\sigma_k:=
		\begin{cases}
			\inf\{t\geq \sigma_{k-1}: S_t\leq \underline{S}_t+\delta/3\}\wedge b, & k \ \text{odd}\\
			\inf\{t\geq \sigma_{k-1}: S_t\leq \overline{S}_t-\delta/3\}\wedge b, & k\ \text{even}.
		\end{cases}
	\end{align*}
	As $\underline{S},S$, and $\overline{S}$ are c\`adl\`ag, we have $\sigma_k=b$ for $k$ large enough. Hence, let $K\in\mathbb{N}$ denote the smallest number s.t. $\sigma_K=b$. In addition, note that we also have $\sigma_0\leq \sigma_1<\sigma_2<\dots<\sigma_K=b$ and, per construction,
	\begin{align}\label{eq:ExistenceCostTerm5}
		\inf_{t\in[\sigma_{2k},\sigma_{2k+1})} S_t-\underline{S}_t>\delta/3,\quad \text{and}\quad \inf_{t\in[\sigma_{2k+1},\sigma_{2(k+1)})}\overline{S}_t-S_t>\delta/3.
	\end{align}  
	Recall that $\mathrm{Var}_a^b(\varphi)=\infty$ implies that $\sum_{i=1}^{k_n}\vert \varphi_{t_i^n}-\varphi_{t_{i-1}^n}\vert\to\infty$ as $n\to\infty$ by \eqref{eq:ExistenceCostTerm2New}. 
	Since $K<\infty$ and $\vp$ is bounded, this implies that for at least one $k\in\{0,1,\ldots,K-1\}$, we have \[\sum_{\substack{t^n_i,t^n_{i-1}\in P_n\\ t_i^n,t^n_{i-1}\in[\sigma_k,\sigma_{k+1}]}}|\varphi_{t_i^n}-\varphi_{t^n_{i-1}}| \to\infty,\quad n\to\infty,\] which, again by the boundedness of $\varphi$, implies that 
	\begin{align}
		\label{1.1.2020.1}
		\begin{aligned}
			&\sum_{\substack{t^n_i,t^n_{i-1}\in P_n\\ t^n_i,t_{i-1}^n\in[\sigma_k,\sigma_{k+1}]}}(\varphi_{t_i^n}-\varphi_{t^n_{i-1}})^+ \to\infty,\ n\to\infty\\ \text{and}\quad & \sum_{\substack{t^n_i,t^n_{i-1}\in P_n\\ t^n_i,t_{i-1}^n\in[\sigma_k,\sigma_{k+1}]}}(\varphi_{t_i^n}-\varphi_{t^n_{i-1}})^- \to\infty,\ n\to\infty.	
		\end{aligned}
	\end{align}
	By \eqref{eq:ExistenceCostTerm5}, this implies that $R(\vp,P_n,\lambda_n)\to \infty$ as $n\to\infty$ for arbitrary subdivisions $\lambda_n$ of $P_n$. Since the sums in (\ref{1.1.2020.1}) get even bigger if $P_n$ are replaced by refining partitions~$G_n$,
	the cost term~$C(\varphi,I)$ exists and is $\infty$. 	
	
	This finishes the proof of Propositions~\ref{prop:ExistenceCostTerm} and \ref{prop:approximationcost}. Indeed, in step~2 above, we showed that there exists a sequence of partitions satisfying the assumptions (i) and (ii) of Proposition~\ref{prop:approximationcost}. Subsequently, in step~3 we showed that for every refining sequence of partitions with these properties the corresponding Riemann-Stieltjes sums converge and their limits satisfy Definition~\ref{def:CostTermIntervall}. Thus, by the uniqueness shown in step~1, their limits coincide and we are done. 
\end{proof} 
We now turn to the proof of Lemma~\ref{lemma:PredictableIntervall}. This will rely on the following concept and result of Doob~\cite{doob1975stochastic}. 
\begin{definition}
	Let $\varphi$ be a stochastic process. A sequence $(T_n)_{n\in\mathbb{N}}$ of predictable stopping times is called a \emph{predictable separability set for $\varphi$} if for each $\omega\in\Omega$ the set $\{T_n(\omega): n\in\mathbb{N}\}$ contains $0$ and is dense in $[0,T]$ and 
	\beam\label{3.1.2020.1}
	\{(t,\varphi_t(\omega)):t\in[0,T]\}=\overline{\{(T_n(\omega),\varphi_{T_n(\omega)}(\omega)): n\in\mathbb{N}\}},
	\eeam
	i.e., the graph of the sample function $t\mapsto\varphi_t(\omega)$ is the closure of the graph restricted to the set $\{T_n(\omega): n\in\mathbb{N}\}$. A stochastic process $\varphi$ having a predictable separability set is called \emph{predictably separable}.
\end{definition} 
\begin{theorem}[Doob \cite{doob1975stochastic}, Theorem 5.2]\label{theo:DoobSeparable}
	A predictable process coincides with some predictably separable predictable process up to evanescence. 
\end{theorem}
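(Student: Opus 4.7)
The plan is to adapt Doob's classical separability argument to the predictable setting, replacing the role of deterministic rationals by the predictable section theorem. Fix a countable base~$\mathcal{O}$ of the open subsets of $[0,T]$ consisting of intervals $[0,b)$, $(a,b)$, and $(a,T]$ with $a,b\in\bbq$, and a countable base~$\mathcal{K}$ of the closed subsets of $\overline{\bbr}$ (for instance complements of open intervals with rational endpoints). For each pair $(O,F)\in\mathcal{O}\times\mathcal{K}$, the set
\[
A_{O,F}:=\{(\omega,t)\in\Omega\times[0,T]\ :\ t\in O,\ \varphi_t(\omega)\in F\}
\]
is predictable because $\vp$ is predictable and $O$, $F$ are Borel. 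By iterated application of the predictable section theorem (e.g., Dellacherie-Meyer IV.85 or He-Wang-Yan Theorem 3.19), there is a countable family $(S_k^{O,F})_{k\in\bbn}$ of predictable stopping times with $\llbracket S_k^{O,F}\rrbracket\subseteq A_{O,F}$ for all $k$ and with $P\bigl(\pi(A_{O,F})\setminus\pi\bigl(\bigcup_k\llbracket S_k^{O,F}\rrbracket\bigr)\bigr)=0$, where $\pi$ denotes the projection onto $\Omega$.

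Next, I would enumerate the doubly-indexed family $\{S_k^{O,F}\}_{(O,F),k}$ together with the deterministic times $\bbq\cap[0,T]$ into a single sequence $(T_n)_{n\in\bbn}$ of predictable stopping times. Let $N\in\F$ be the union over all $(O,F)\in\mathcal{O}\times\mathcal{K}$ of the $P$-null exceptional sets from the exhaustion; since $\mathcal{O}\times\mathcal{K}$ is countable, $P(N)=0$. Define $\wt\vp_t(\omega):=\vp_t(\omega)$ for $\omega\in N^c$ and $\wt\vp_t(\omega):=0$ otherwise. Then $\wt\vp$ is predictable, and coincides with~$\vp$ outside the evanescent set $N\times[0,T]$.

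It remains to verify that $(T_n)_{n\in\bbn}$ is a predictable separability set for $\wt\vp$. For every $\omega$, the rational subsequence makes $\{T_n(\omega):n\in\bbn\}$ dense in $[0,T]$ and contain~$0$. The graph-closure condition~(\ref{3.1.2020.1}) is trivial on $N$ since $\wt\vp$ vanishes identically there. For $\omega\in N^c$ and $t\in[0,T]$, fix any neighborhood $O\times U$ of $(t,\wt\vp_t(\omega))$ in $[0,T]\times\overline{\bbr}$, and choose $O'\in\mathcal{O}$ and $F\in\mathcal{K}$ with $t\in O'\subseteq O$ and $\wt\vp_t(\omega)\in F\subseteq U$. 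Then $(\omega,t)\in A_{O',F}$, so by the choice of $N$ there exists some $k$ with $S_k^{O',F}(\omega)\in O'$ and $\wt\vp_{S_k^{O',F}(\omega)}(\omega)\in F$; this provides the required approximating pair in the restricted graph, yielding the closure identity in~(\ref{3.1.2020.1}).

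The main obstacle is the careful bookkeeping of exceptional sets: each application of the predictable section theorem yields an exhaustion only up to a $P$-null set that a priori depends on $(O,F)$, and these must be pooled into a single $P$-null set~$N$ valid simultaneously for all pairs; this is what forces $\mathcal{O}\times\mathcal{K}$ to be countable. A secondary subtlety is that $\vp$ need not admit one-sided limits, so $\mathcal{O}$ must be rich enough to approximate each $t$ from the left, from the right, and two-sidedly; including both open and half-open rational intervals in $\mathcal{O}$ addresses this. Once these points are handled, the graph-closure identity drops out directly from the exhaustion and the construction of $N$.
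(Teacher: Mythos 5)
The paper itself offers no proof of this statement: it is imported verbatim from Doob's 1975 article, so there is no in-paper argument to compare against. Your strategy --- predictability of the sets $A_{O,F}$, exhaustion of their projections by graphs of predictable stopping times via the predictable section theorem, pooling of the countably many exceptional null sets into one $N$, modification of $\vp$ on $N$, and verification of the graph-closure property through the countable bases --- is precisely the classical route to this result, and the skeleton is sound: the exhaustion form of the section theorem you invoke is standard, $N$ is well defined because $\mathcal{O}\times\mathcal{K}$ is countable, $N^c\times[0,T]$ is predictable under the usual conditions so $\wt{\vp}$ is again predictable, and including the deterministic rational times handles both the density requirement and the degenerate case $\omega\in N$.

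One step fails as written, though. In the verification you need, for every $x=\wt{\vp}_t(\omega)$ and every open $U\ni x$, some $F\in\mathcal{K}$ with $x\in F\subseteq U$; that is, $\mathcal{K}$ must be a countable family of \emph{small} closed sets forming a neighborhood subbase of $\ov{\bbr}$. Your proposed $\mathcal{K}$ --- complements of open intervals with rational endpoints --- consists of sets of the form $[-\infty,a]\cup[b,+\infty]$, none of which is contained in a bounded open $U$, so the required $F$ simply does not exist. (That family is a ``base of closed sets'' only in the dual sense that every closed set is an intersection of its members; that is what the contrapositive version of Doob's separability argument uses, not the direct approximation you run.) The fix is immediate: take $\mathcal{K}$ to be the closed intervals with endpoints in $\bbq\cup\{\pm\infty\}$, after which your argument goes through unchanged. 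A last remark: you only establish that the sample graph is contained in the closure of the restricted graph. Literal equality as in (\ref{3.1.2020.1}) cannot hold unless the sample graph is closed (consider $\vp=\mathbbm{1}_{(1/2,1]}$), so the inclusion you prove is the substantive content; the equality is a defect of the paper's restatement of Doob's definition (which closes both sides), not of your proof.
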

\begin{proof}[Proof of Lemma~\ref{lemma:PredictableIntervall}]
	By Theorem~\ref{theo:DoobSeparable}, we have to show that for a predictably separable predictable process~$\vp$, the process $C(\vp,[\sigma\wedge\cdot,\tau\wedge\cdot])$
	is predictable. 
	
	Let $\{T_n: n\in\mathbb{N}\}$ denote the predictable separability set for $\vp$. 
	By (\ref{3.1.2020.1}), we can find a sequence of finite sequences of 
	(not necessarily predictable) stopping times $\sigma=T^n_0\leq T^n_1\leq\dots \leq T^n_{m_n}=\tau$ s.t. 
	\begin{align*}
		\mathrm{Var}_{\sigma\wedge t}^{\tau\wedge t}(\varphi)=\lim\limits_{n\to\infty}\sum_{i=1}^{m_n}\vert \varphi_{T^n_i\wedge t}-\varphi_{T^n_{i-1}\wedge t}\vert,\quad\mbox{pointwise},\quad t\in[0,T].
	\end{align*}
	Next, we define for each $n\in\mathbb{N}$ and $i\in\{1,\dots, m_n\}$ a sequence $(V^{n,i}_l)_{l\in\bbn}$ of stopping times by $V_0^{n,i}=T_{i-1}^n$ and recursively
	\begin{align*}
		V_l^{n,i}:=\inf\{t> V_{l-1}^{n,i}: &\vert \overline{S}_t-S_t-(\overline{S}_{V^{n,i}_{l-1}}-S_{V^{n,i}_{l-1}})\vert >\frac{1}{2n}\\ & \text{or}\ \vert S_t-\underline{S}_t-(S_{V^{n,i}_{l-1}}-\underline{S}_{V^{n,i}_{l-1}})\vert >\frac{1}{2n}\}\wedge T^{n}_{i}.
	\end{align*}
	This leads to the sequence of random partitions $\ov{P}_n:=\bigcup_{k\le n}\bigcup_{i=1,\dots, m_k}\bigcup_{l\in\bbn_0} \{V^{i,k}_l\}$, $n\in\bbn$, which is for each $\omega$ refining. Note that for $\omega$ and $n$ fixed, $P_n$ is finite.
	Rearranging the resulting stopping times in increasing order yields 
	a refining sequence of increasing sequences of stopping times  $(\nu^n_k)_{k\in\mathbb{N}}$, $n\in\bbn$,
	s.t. 
	$\#\{k: \nu^n_k(\omega)<\infty\}<\infty$ for all $n\in\mathbb{N}$, $\mathrm{Var}_{\sigma\wedge t}^{\tau\wedge t}(\varphi)=\lim\limits_{n\to\infty}\sum_{k=0}^\infty\vert \varphi_{\nu^n_k\wedge t}-\varphi_{\nu^n_{k-1}\wedge t}\vert$ for all $t\in[0,T]$, and $\max(\mathrm{osc}(\overline{S}-S, [\nu^n_k,\nu^n_{k+1})),\mathrm{osc}(S-\underline{S}, [\nu^n_k,\nu^n_{k+1})))\leq 1/n$ for all $k\in\mathbb{N}_0$ and $n\in\mathbb{N}$.
	In particular, this means that for each $\omega\in\{\sigma<\tau\}$ and $t\in[0,T]$ the sequence of partitions $(P_n(\omega))_{n\in\mathbb{N}}$ defined by $P_n(\omega):=\{\nu^n_k(\omega)\wedge t:k\in\mathbb{N}\}$ satisfies the assumptions of Proposition~\ref{prop:approximationcost}. Hence, Proposition~\ref{prop:approximationcost} together with $C(\varphi,[\sigma\wedge\cdot,\tau\wedge \cdot])=0$ on $\{\sigma=\tau\}$ implies that the sequence of predictable processes
	\begin{align*}
		\sum_{k=1}^{\infty}(\overline{S}_{\nu^n_{k-1}}-S_{\nu^n_{k-1}})(\varphi_{\nu^n_{k}\wedge \cdot}-\varphi_{\nu^n_{k-1}\wedge \cdot})^++\sum_{k=1}^{\infty}(S_{\nu^n_{k-1}}-\underline{S}_{\nu^n_{k-1}})(\varphi_{\nu^n_{k}\wedge \cdot}-\varphi_{\nu^n_{k-1}\wedge \cdot})^-,\quad n\in\mathbb{N}
	\end{align*}
	converges pointwise to
	$C(\vp,[\sigma\wedge \cdot,\tau\wedge \cdot])$, which yields the assertion.
\end{proof} 

\begin{proof}[Proof of Proposition~\ref{1.1.2020.2}]
	In the following, we can and do assume with no loss of generality that $\sigma$ and $\tau$ are $[0,T]$-valued stopping times. In addition, by Proposition~\ref{prop:ExistenceCostTerm}, we have $\mathrm{Var}_\sigma^\tau(\varphi)<\infty$ a.s. and thus w.l.o.g. also for all paths.
	This implies that the paths of $\varphi$ are l\`agl\`ad on $\auf\sigma,\tau\zu$.
	
	\emph{Step 1.} We start by constructing the sequence $(\varphi^n)_{n\in\mathbb{N}}$.
	Therefore, we define
	\beam\label{003.1.2020.1}
	T_0^n:=\sigma,\quad T^n_k:=\inf\{t\in(T^n_{k-1},\tau]:  \vert\varphi_t-\varphi_{T^n_{k-1}+}\vert\geq 1/n\},\quad k\in\mathbb{N},
	\eeam
	which are obviously stopping times.
	In addition, we have $T^n_{k-1}<T^n_k$ on $\{T^n_{k-1}<\infty\}$ and $\#\{k:T^n_k(\omega)\leq\tau\}<\infty$ for all 
	$\omega\in\Omega$ as $\mathrm{Var}_\sigma^\tau(\varphi)<\infty$. We have to distinguish between a portfolio adjustment at $T^n_k$ and at $T^n_k+$. For this, we define further 
	stopping times:
	\begin{align*}\pi^n_0:=\sigma, \quad 
		\pi^n_k:=(T^n_k)_{\{\vert\varphi_{T^n_k}-\varphi_{T^n_{k-1}+}\vert\geq 1/n\}},\quad k\in\mathbb{N}
	\end{align*}
	and note that $\pi^n_k$ is a predictable stopping time for all $k\in\bbn$. Indeed, for $k\geq 1$ we have \begin{align*}
		\llbracket \pi^n_k\rrbracket =\llbracket 0, T^n_k\rrbracket \cap\{(\omega,t): Y_t(\omega)\geq 1/n\}\in\mathcal{P}
	\end{align*}
	since the process $Y_t:=\vert\varphi_t-\varphi_{T^n_{k-1}+}\vert\mathbbm{1}_{\rrbracket T^n_{k-1},\tau\rrbracket}$ is a predictable.
	Hence, we may define $(\varphi^n)_{n\in\mathbb{N}}$ by
	\begin{align*}
		\varphi^n:=\sum_{k=0}^{\infty}\left(\varphi_{\pi^n_k}\mathbbm{1}_{\llbracket \pi^n_k\rrbracket}+\varphi_{T^n_k+}\mathbbm{1}_{\rrbracket T^n_k, T^n_{k+1}\rrbracket\setminus \llbracket \pi^n_{k+1}\rrbracket}\right)
	\end{align*}
	which satisfies $\varphi^n_\sigma=\varphi_\sigma$ and $\varphi^n\mathbbm{1}_{\rrbracket \sigma,\tau\rrbracket}$ is predictable and, consequently, almost simple. In addition, the definition ensures $\vert\varphi-\varphi^n\vert\leq 1/n$ on $\llbracket \sigma,\tau\rrbracket$.

	\emph{Step 2:} Let us show that $\sup_{t\in[\sigma,\tau]}\vert\mathrm{Var}_\sigma^t(\varphi)-\mathrm{Var}_\sigma^t(\varphi^n)\vert\to 0$ pointwise. Let $\omega\in\Omega$ and $\varepsilon>0$ be fixed. We take a partition $P=\{t_0,\dots,t_m\}$ s.t. $\mathrm{Var}_\sigma^\tau(\varphi(\omega))\leq \sum_{i=1}^{m}\vert \varphi_{t_i}(\omega)-\varphi_{t_{i-1}}(\omega)\vert+\varepsilon.$ This yields 
	\begin{align}\label{eq:ApproximationProof1}
		\mathrm{Var}_\sigma^{t}(\varphi(\omega))\leq\sum_{i=1}^{m}\vert \varphi_{t_i\wedge t}(\omega)-\varphi_{t_{i-1}\wedge t}(\omega)\vert+\varepsilon, \quad \forall t\in[\sigma(\omega),\tau(\omega)].
	\end{align} 
	
	Now, recall from Step 1 that $\varphi^n(\omega)\to\varphi(\omega)$ uniformly on $[\sigma(\omega),\tau(\omega)]$. Thus, we may choose $N\in\bbn$ large enough s.t. for all $n\geq N$ we have $\vert \varphi_t(\omega)-\varphi^n_t(\omega)\vert\leq \varepsilon/(2m)$ for all $t\in [\sigma(\omega),\tau(\omega)]$. Therefore, we get 
	\begin{align*}
		\mathrm{Var}_\sigma^t(\varphi(\omega))-\mathrm{Var}_\sigma^t(\varphi^n(\omega))&\leq \sum_{i=1}^{m}\vert \varphi_{t_i\wedge t}(\omega)-\varphi_{t_{i-1}\wedge t}(\omega)\vert+\varepsilon-\mathrm{Var}_\sigma^t(\varphi^n(\omega))\\
		&\leq \sum_{i=1}^{m}\vert \varphi^n_{t_i\wedge t}(\omega)-\varphi^n_{t_{i-1}\wedge t}(\omega)\vert+2\varepsilon-\mathrm{Var}_\sigma^t(\varphi^n(\omega))
		\leq 2\varepsilon
	\end{align*}
	for all $t\in[\sigma(\omega),\tau(\omega)]$.
	Hence, we have proven the claim as we have $\mathrm{Var}_\sigma^t(\varphi(\omega))\geq \mathrm{Var}_\sigma^t(\varphi^n(\omega))$ by construction.
	
	\emph{Step 3:} We now show that \eqref{eq:ApproximationBV} holds. We again argue path-by-path, i.e., $\omega\in\Omega$ is fixed without explicitly mentioning it.  Therefore, note that the jumps of the cost term on $[\sigma,\tau]$ are given by
	\begin{align*}
		\Delta C_{t}(\varphi)&=\lim\limits_{s\uparrow t}C(\varphi,[s,t])=(\overline{S}_{t-}-S_{t-})(\Delta\varphi_t)^++(S_{t-}-\underline{S}_{t-})(\Delta\varphi_t)^-,\quad t\in (\sigma,\tau],\\
		\Delta^+ C_t(\varphi)&=\lim\limits_{s\downarrow t}C(\varphi,[t,s])=(\overline{S}_{t}-S_{t})(\Delta^+\varphi_t)^++(S_{t}-\underline{S}_{t})(\Delta^+\varphi_t)^-,\quad t\in [\sigma,\tau). 
	\end{align*}
	In the following, given $k\in\bbn$, we use the notation $C(\varphi,(T^n_{k-1}, T^n_k]):= C(\varphi, [T^n_{k-1},T^n_k])- \Delta^+C_{T^n_{k-1}}(\varphi)$ and $ C(\varphi,(T^n_{k-1}, T^n_k)):= C(\varphi,(T^n_{k-1}, T^n_k])- \Delta C_{T^n_{k}}(\varphi)$, where it is tacitly assumed that $T^n_k\leq \tau$.  In particular, this means that for $\varphi^n$, we have
	$C(\varphi^n,(T^n_{k-1}, T^n_{k}])=(\overline{S}_{T^n_{k}-}-S_{T^n_{k}-})(\varphi^n_{T^n_{k}}-\varphi^n_{T^n_{k}-})^++(S_{T^n_{k}-}-\underline{S}_{T^n_{k}-})(\varphi^n_{T^n_{k}}-\varphi^n_{T^n_{k}-})^-$
	as $C(\varphi^n, (T^n_{k-1}, T^n_k))=0$ according to Proposition~\ref{prop:SimpleStrat}. We now want to get an estimate on \begin{align}\label{eq:estimate}
		\vert C(\varphi,(T^n_{k-1}, T^n_{k}])+\Delta^+ C_{T^n_k}(\varphi)-(C(\varphi^n,(T^n_{k-1}, T^n_{k}])+\Delta^+ C_{T^n_k}(\varphi^n))\vert
	\end{align}
	(this means that we move forward from $T_{k-1}^n+$ to $T_k^n+$ and tacitly assume $T_k^n<\tau$).
	
	\emph{Step 3.1:}
	We start by establishing a strong bound on the difference~\eqref{eq:estimate}, which only holds if the prices do not vary too much between $T^n_{k-1}$ and $T^n_k$. To formalize this, we take $\delta>0$, which will be specified later, and define $(\rho_m)_{m\geq0}$ by $\rho_0: =\sigma$ and 
	\[\rho_{m}:=\inf\{t\in(\rho_{m-1}, \tau]: \vert \overline{S}_t-S_t-(\overline{S}_{\rho_{m-1}}-S_{\rho_{m-1}})\vert>\delta \ \text{or}\ \vert S_t-\underline{S}_t-(S_{\rho_{m-1}}-\underline{S}_{\rho_{m-1}})\vert>\delta \}.\]
	We now claim that on $\{\rho_{m-1}\leq T^n_{k-1}<T^n_k< \rho_m\}$ for some $m\geq 1$, we have
	\begin{align}\label{eq:estimatestrong}
		\begin{aligned}
			&\vert C(\varphi,(T^n_{k-1}, T^n_{k}])+\Delta^+ C_{T^n_k}(\varphi)-(C(\varphi^n,(T^n_{k-1}, T^n_{k}])+\Delta^+ C_{T^n_k}(\varphi^n))\vert\\&\leq \delta \mathrm{Var}^{T^n_k+}_{T^n_{k-1}+}(\varphi) +\sup_{t\in [0,T]}(\overline{S}_t-\underline{S}_t)\left(\mathrm{Var}^{T^n_k+}_{T^n_{k-1}+}(\varphi)-\mathrm{Var}^{T^n_k+}_{T^n_{k-1}+}(\varphi^n)\right), \quad k\geq 1.
		\end{aligned}
	\end{align} 
	In order to prove this, we distinguish between two cases.
	
	\emph{Case 1:}  We start by considering the event $\{T^n_k=\pi^n_k\}$, i.e., 
	the infimum in (\ref{003.1.2020.1}) is attained and
	$\Delta^+C_{T^n_k}(\varphi)=\Delta^+C_{T^n_k}(\varphi^n)$. First, we assume that $\varphi_{T^n_k}-\varphi_{T^n_{k-1}+}\geq 0$, i.e., the strategy $\varphi$ buys (after netting buying and selling) $$a:=\varphi_{T^n_k}-\varphi_{T^n_{k-1}+}=\mathrm{Var}^{T^n_k}_{T^n_{k-1}+}(\varphi^n)\geq 0$$ stocks on $(T^n_{k-1},T^n_k]$. Now observe that $\varphi^n$ buys $a$ stocks at a cost of $\overline{S}_{T^n_k-}-S_{T^n_k-}$ and $\varphi$ buys at least $a$ stocks at different cost, which differs from $\overline{S}_{T^n_k-}-S_{T^n_k-}$ by at most $\delta$. In addition, the continuous strategy purchases $\varphi^\uparrow_{T^n_k}-\varphi^\uparrow_{T^n_{k-1}+}-a$ additional stocks and sells  $\varphi^\downarrow_{T^n_k}-\varphi^\downarrow_{T^n_{k-1}+}$ stocks on the same interval. But the cost of those trades can be estimated above by $\sup_{t\in [0,T]}(\overline{S}_t-\underline{S}_t)$. Putting these arguments together, we get  
	\begin{align*}
		&\vert C(\varphi,(T^n_{k-1}, T^n_{k}])+\Delta^+ C_{T^n_k}(\varphi)-(C(\varphi^n,(T^n_{k-1}, T^n_{k}])+\Delta^+ C_{T^n_k}(\varphi^n))\vert\\
		&\leq \delta a+ \sup_{t\in [0,T]}(\overline{S}_t-\underline{S}_t)(\varphi^\uparrow_{T^n_k}-\varphi^\uparrow_{T^n_{k-1}+}-a+\varphi^\downarrow_{T^n_k}-\varphi^\downarrow_{T^n_{k-1}+})\\
		&=\delta\mathrm{Var}^{T^n_k}_{T^n_{k-1}+}(\varphi^n)+\sup_{t\in [0,T]}(\overline{S}_t-\underline{S}_t)\left(\mathrm{Var}^{T^n_k}_{T^n_{k-1}+}(\varphi)-\mathrm{Var}^{T^n_k}_{T^n_{k-1}+}(\varphi^n)\right)\\
		&\leq \delta \mathrm{Var}^{T^n_k+}_{T^n_{k-1}+}(\varphi)+\sup_{t\in [0,T]}(\overline{S}_t-\underline{S}_t)\left(\mathrm{Var}^{T^n_k+}_{T^n_{k-1}+}(\varphi)-\mathrm{Var}^{T^n_k+}_{T^n_{k-1}+}(\varphi^n)\right), 
	\end{align*}
	where we used $\mathrm{Var}^{T^n_k}_{T^n_{k-1}+}(\varphi^n)\leq \mathrm{Var}^{T^n_k}_{T^n_{k-1}+}(\varphi)\leq \mathrm{Var}^{T^n_k+}_{T^n_{k-1}+}(\varphi)$ and $\Delta^+C_{T^n_k}(\varphi)=\Delta^+C_{T^n_k}(\varphi^n)$ on $\{T^n_k=\pi^n_k\}$. For $\varphi_{T^n_k}-\varphi_{T^n_{k-1}+}< 0$, the argument is analogue. 
	
	\emph{Case 2:} We still have to prove the claim on $\{T^n_k\neq \pi^n_k\}$. Here, we have $\Delta C_{T^n_k}(\varphi^n)=0$ and, therefore, the argument is similar to the previous case but this time with $a:=\varphi_{T^n_k+}-\varphi_{T^n_{k-1}+}$. Thus, we skip the details. 
	
	\emph{Step 3.2:} We still need a bound on \eqref{eq:estimate} if the costs vary by more than $\delta$ between $T^n_{k-1}$ and $T^n_k$. Fortunately, a weaker bound will be sufficient here.
	We now claim that, in general, we have 
	\begin{align}\nonumber
		& \vert C(\varphi,(T^n_{k-1}, T^n_{k}])+\Delta^+ C_{T^n_k}(\varphi)-(C(\varphi^n,(T^n_{k-1}, T^n_{k}])+\Delta^+ C_{T^n_k}(\varphi^n))\vert\\ \label{eq:estimateweak}
		&\leq \sup_{t\in[0,T]}(\overline{S}_t-\underline{S}_t)\left[\frac{2}{n}+(\mathrm{Var}_{T^n_{k-1+}}^{T^n_k+}(\varphi)-\mathrm{Var}_{T^n_{k-1}+}^{T^n_k+}(\varphi^n))\right].
	\end{align}
	We distinguish between the same cases as above.

	\emph{Case 1:} We first consider the event $\{T^n_k=\pi^n_k\}$. Recall that in this case we have $\Delta^+C_{T^n_k}(\varphi)=\Delta^+C_{T^n_k}(\varphi^n)$. In addition, let us assume that $\varphi_{T^n_k}-\varphi_{T^n_{k-1}+}\geq 0$. In this case, we have $\varphi_{T^n_k}-\varphi_{T^n_{k-1}+}\geq 1/n$ and  $\varphi_{T^n_k-}-\varphi_{T^n_{k-1}+}\leq 1/n$ by the Definition of $T^n_k$. This implies
	\begin{align*}
		\varphi_{T^n_{k}}-\varphi_{T^n_{k}-}=\varphi_{T^n_{k}}-\varphi_{T^n_{k-1}+}-\left( \varphi_{T^n_k-}-\varphi_{T^n_{k-1}+}\right)\geq 0,
	\end{align*} 
	i.e., both strategies buy at $T^n_k$, but possibly different amounts. Thus, we have $\Delta C_{T^n_k}(\varphi)=(\overline{S}_{T^n_k-}-S_{T^n_k-})(\varphi_{T^n_k}-\varphi_{T^n_{k}-})$ and can write
	\begin{align}\nonumber
		& \vert C(\varphi,(T^n_{k-1}, T^n_{k}])+\Delta^+ C_{T^n_k}(\varphi)-(C(\varphi^n,(T^n_{k-1}, T^n_{k}])+\Delta^+ C_{T^n_k}(\varphi^n))\vert\\ \nonumber
		&=\vert  C(\varphi, (T^n_{k-1},T^n_k)) + (\overline{S}_{T^n_k-}-S_{T^n_k-})(\varphi_{T^n_k}-\varphi_{T^n_{k}-})-(\overline{S}_{T^n_k-}-S_{T^n_k-})(\varphi_{T^n_k}-\varphi_{T^n_{k-1+}})\vert
		\\ \nonumber
		&= \vert C(\varphi, (T^n_{k-1},T^n_k))-(\overline{S}_{T^n_k-}-S_{T^n_k-})(\varphi_{T^n_k-}-\varphi_{T^n_{k-1}+})\vert.
	\end{align} 
	
	Since the costs per share are bounded by $\sup_{t\in [0,T]}(\overline{S}_t-\underline{S}_t)$, this yields
	\begin{align}\nonumber
		&\vert C(\varphi, (T^n_{k-1},T^n_k))-(\overline{S}_{T^n_k-}-S_{T^n_k-})(\varphi_{T^n_k-}-\varphi_{T^n_{k-1}+})\vert\\ \nonumber &\leq \sup_{t\in[0,T]}(\overline{S}_t-\underline{S}_t)\left[\mathrm{Var}_{T^n_{k-1+}}^{T^n_k-}(\varphi)
		+|\varphi_{T^n_k-}-\varphi_{T^n_{k-1}+}|\right]\\ \nonumber
		&\leq\sup_{t\in[0,T]}(\overline{S}_t-\underline{S}_t)\left[\frac{2}{n}+\mathrm{Var}_{T^n_{k-1+}}^{T^n_k+}(\varphi)-\mathrm{Var}_{T^n_{k-1}+}^{T^n_k+}(\varphi^n)\right],
	\end{align}
	where we use $|\varphi_{T^n_k-}-\varphi_{T^n_{k-1}+}|\leq 1/n$ per construction of $T^n_k$, 
	$\mathrm{Var}_{T^n_{k-1+}}^{T^n_k-}(\varphi)-
	|\varphi_{T^n_k-}-\varphi_{T^n_{k-1}+}|\le \mathrm{Var}_{T^n_{k-1+}}^{T^n_k}(\varphi)-\mathrm{Var}_{T^n_{k-1+}}^{T^n_k}(\varphi^n)$, and $\Delta^+C_{T^n_k}(\varphi)=\Delta^+C_{T^n_k}(\varphi)$ on $\{T^n_k=\pi^n_k\}$. 
	
	The case $\varphi_{T^n_k}-\varphi_{T^n_{k-1}+}\leq 0$ is 
	analogous.\\
	
	\emph{Case 2:} We still need to consider the event $\{T^n_k\neq \pi^n_k\}$, i.e., $\Delta C_{T^n_k}(\varphi^n)=0$.  However, as this is analogous to Case 1, we leave it to the reader.
	
	In addition, note that on $\{T^n_{k-1}\leq t<T^n_k\}$, we have $\mathrm{Var}_{T^n_{k-1}+}^t(\varphi^n)=0$ and, thus, the trivial estimate 
	\begin{align}
		\label{eq:estimateweak2}
		\begin{aligned}
			&\ \ \vert C(\varphi,(T^n_{k-1}, t])-C(\varphi^n,(T^n_{k-1}, t])\vert\\ \leq &\sup_{t\in [0,T]}(\overline{S}_t-\underline{S}_t)(\mathrm{Var}_{T^n_{k-1}+}^t(\varphi)-\mathrm{Var}_{T^n_{k-1}+}^t(\varphi^n)).
		\end{aligned}
	\end{align}
	
	\emph{Step 4:} We can now finish the proof by putting the different estimates together. Therefore, let $a(\delta):=\#\{m:\rho_m\leq \tau\}$
	and note that $a(\delta)<\infty$  (recall that $\omega\in\Omega$ is fixed). Next, note that we have $\Delta^+C_\sigma(\varphi)=\Delta^+C_\sigma(\varphi^n)$ by construction of $\varphi^n$. For $t\in[\sigma,\tau]$ let $K_n:=\#\{k:T^n_k\leq t\}$.  We get 
	\begin{align}
		\nonumber
		&\vert C(\varphi,[\sigma,t])-C(\varphi^n,[\sigma,t])\vert \\ \nonumber\leq & \sum_{k=1}^{K_n}\vert C(\varphi,(T^n_{k-1}, T^n_{k}])+\Delta^+ C_{T^n_k}(\varphi)\mathbbm{1}_{\{T^n_k<t\}}-(C(\varphi^n,(T^n_{k-1}, T^n_{k}])+\Delta^+ C_{T^n_k}(\varphi^n)\mathbbm{1}_{\{T^n_k<t\}})\vert\\ \label{eq:estimatealmost}
		&+\vert C(\varphi,(T^n_{K_n}, t])-C(\varphi^n,(T^n_{K_n}, t])\vert
	\end{align} 
	On $\{T^n_{K_n}<t\}$ we  apply the estimate~\eqref{eq:estimateweak} to all pairs $T^n_{k-1},T^n_k$ with $k=1,\dots,K_n$ s.t. there is at least one $m=1,\dots,a(\delta)$ with $T^n_{k-1}<p_m\leq T^n_{k}$, the estimate~\eqref{eq:estimateweak2} to the last interval $(T^n_{K_n}, t]$ and for all other pairs we use the stronger estimate~\eqref{eq:estimatestrong}. On $\{T^n_{K_n}=t\}$ we apply the same estimates to all pairs $T^n_{k-1},T^n_k$ with $k=1,\dots,K_n-1$. In addition,  on $\{T^n_{K_n}=\pi^n_{K_n}=t\}$ the arguments in Step~3.1, Case~1 resp. Step~3.2, Case~1 show that  $\vert C(\varphi,(T^n_{K_n-1}, T^n_{K_n}])-C(\varphi^n,(T^n_{K_n-1}, T^n_{K_n}])\vert$ is bounded from above by the RHS of \eqref{eq:estimatestrong} if there is no $m\in\{1,\dots,a(\delta)\}$ with $T^n_{K_n-1}< p_m\leq T^n_{K_n}$ or by the RHS of \eqref{eq:estimateweak} if there is. Finally, on $\{T^n_{K_n}=t, \pi^n_{K_n}=\infty\}$, we have $\mathrm{Var}_{T^n_{K_n-1}+}^t(\varphi^n)=0$ and thus $\vert C(\varphi,(T^n_{K_n-1}, T^n_{K_n}])-C(\varphi^n,(T^n_{K_n-1}, T^n_{K_n}])\vert$ is bounded from above by the RHS of \eqref{eq:estimateweak2}. Plugging all this into \eqref{eq:estimatealmost}, we get
	\begin{align}\nonumber
		&\vert C(\varphi,[\sigma,t])-C(\varphi^n,[\sigma,t])\vert\\\nonumber
		&\leq \delta \mathrm{Var}_\sigma^t(\varphi)+\sup_{t\in [0,T]}(\overline{S}_t-\underline{S}_t)\left(\mathrm{Var}_\sigma^t(\varphi)-\mathrm{Var}_\sigma^t(\varphi^n)+\frac{2a(\delta)}{n}\right)\\ \label{eq:estimatefinal}
		&\leq \delta \mathrm{Var}_\sigma^\tau(\varphi)+\sup_{t\in [0,T]}(\overline{S}_t-\underline{S}_t)\left(\sup_{t\in [\sigma,\tau]}(\mathrm{Var}_\sigma^t(\varphi)-\mathrm{Var}_\sigma^t(\varphi^n))+\frac{2a(\delta)}{n}\right)
	\end{align}
	for all  $t\in[\sigma,\tau]$. Given an $\varepsilon>0$, we first choose $\delta<\varepsilon/(2\mathrm{Var}_\sigma^\tau(\varphi))$ and, subsequently, applying Step 2 together with the fact that $a(\delta)<\infty$ and $\sup_{t\in [0,T]}(\overline{S}_t-\underline{S}_t)<\infty$ (for fixed $\omega\in\Omega$). We can choose $N\in\bbn$ s.t. 
	for $n\geq N$ the second term in \eqref{eq:estimatefinal} is smaller than $\varepsilon/2$. At last this yields $\sup_{t\in [\sigma,\tau]}\vert C(\varphi,[\sigma,t])-C(\varphi^n,[\sigma,t])\vert<\varepsilon$ for  $n\geq N$. Thus, we have established the assertion. 
\end{proof}

\end{document}